\documentclass[a4paper,UKenglish,cleveref, autoref, thm-restate]{lipics-v2021}
\usepackage{booktabs}
\usepackage{xspace}
\usepackage{comment}
\usepackage{thmtools}
\usepackage{booktabs}
\nolinenumbers
\hideLIPIcs  

\usepackage{microtype} 

\newcommand\polylog{{\rm polylog}}
\graphicspath{ {figures/} }

\newcommand {\R} {\mathbb{R}\xspace}
\newcommand {\Q} {\mathbb{Q}\xspace}

\newtheorem*{conjecture*}{Conjecture}

\newcommand{\mkmcal}[1]{\ensuremath{\mathcal{#1}}\xspace}

\newcommand{\F}{\mkmcal{F}}

\newcommand{\T}{\mkmcal{T}}

\newcommand{\Matousek}{Matou{\v s}ek\xspace}

\newcommand{\eps}{\ensuremath{\varepsilon}\xspace}

\newcommand{\Vor}{\ensuremath{\mathit{Vor}}\xspace}

\newcommand{\geod}{\ensuremath{\pi}\xspace}
\newcommand{\dist}{\ensuremath{d}\xspace}
\newcommand{\Gr}{\ensuremath{G^Q}\xspace}
\newcommand{\anchorDistancePoints}[2]{\hat{\dist}_Q(#1, #2)}

\def\polylog{\operatorname{polylog}}

\DeclareMathOperator*{\argmin}{arg\,min}

\newcommand{\updatecomplexityeps}{\ensuremath{O(\frac{1}{\varepsilon^2}\log n + \frac{1}{\varepsilon}\log n \log m + \frac{1}{\varepsilon}\log^2 m)}}
  \newcommand{\updatecomplexityk}{\ensuremath{O(k^2\log n + k\log n \log m + k\log^2 m)}}
\newcommand{\querycomplexityk}{\updatecomplexityk}
\newcommand{\querycomplexityeps}{\updatecomplexityeps}

\newcommand{\deletioncomplexityeps}{\ensuremath{O(\frac{1}{\varepsilon}\log n\log m + \frac{1}{\varepsilon}\log^2 m)}}
\newcommand{\deletioncomplexityk}{\ensuremath{O(k\log n\log m + k\log^2 m)}}

\newcommand{\myremark}[4]{\textcolor{blue}{\textsc{#1 #2: }}\textcolor{#4}{\textsf{#3}}}
\renewcommand{\myremark}[4]{}

\newcommand{\joost}[2][says]{\myremark{Joost}{#1}{#2}{teal}}

\newcommand{\frank}[2][says]{\myremark{Frank}{#1}{#2}{magenta}}

\title{Approximate Dynamic Nearest Neighbor Searching in a Polygonal Domain}

\author{Joost van der Laan}{Department of Information and Computing Sciences, Utrecht University, The Netherlands}{j.vanderlaan1@students.uu.nl}{}{}

\author{Frank Staals}{Department of Information and Computing Sciences, Utrecht University, The Netherlands}{f.staals@uu.nl}{https://orcid.org/0009-0004-8522-1351}{}

\author{Lorenzo Theunissen}{Delft University of Technology, The Netherlands}{L.S.J.Theunissen@tudelft.nl}{https://orcid.org/0009-0008-3405-541X}{}

\authorrunning{J. van der Laan, F. Staals, and L. Theunissen} 

\ccsdesc[100]{Theory of computation~Computational Geometry} 

\keywords{dynamic data structure, nearest neighbor search, polygonal domain
}  

\Copyright{Joost van der Laan, Frank Staals, and Lorenzo Theunissen} 

\ArticleNo{100}

\begin{document}

\maketitle

\begin{abstract}
  We present efficient data structures for approximate nearest
  neighbor searching and approximate 2-point shortest path queries in
  a two-dimensional polygonal domain $P$ with $n$ vertices. Our goal
  is to store a dynamic set of $m$ point sites $S$ in $P$ so that we
  can efficiently find a site $s \in S$ closest to an arbitrary query
  point $q$. We will allow both insertions and deletions in the set of
  sites $S$. However, as even just computing the distance between an
  arbitrary pair of points $q,s \in P$ requires a substantial amount
  of space, we allow for approximating the distances. Given a
  parameter $\varepsilon > 0$, we build an
  $O(\frac{n}{\varepsilon}\log n)$ space data structure that can
  compute a $1+\varepsilon$-approximation of the distance between $q$
  and $s$ in $O(\frac{1}{\varepsilon^2}\log n)$ time. Building on
  this, we then obtain an
  $O(\frac{n+m}{\varepsilon}\log n + \frac{m}{\varepsilon}\log m)$
  space data structure that allows us to report a site $s \in S$ so
  that the distance between query point $q$ and $s$ is at most
  $(1+\varepsilon)$-times the distance between $q$ and its true
  nearest neighbor in
  $\querycomplexityeps$ time. Our data structure supports updates in
  $\updatecomplexityeps$ amortized time.
\end{abstract}

\section{Introduction}

Nearest neighbor searching is a fundamental problem in which the goal
is to store a set $S$ of $m$ point sites so that given a query point
$q$ one can quickly report a site in $S$ that is closest to $q$. For
example, the set $S$ may be a set of (locations of) shops, and we may
wish to efficiently find the shop closest to our current location. In
addition, nearest neighbor searching shows up as an important
subroutine in several other fundamental problems. For example, in
computing closest pairs~\cite{chan20dynam_gener_closes_pair} and
matchings~\cite{kaplan20dynam_planar_voron_diagr_gener}. In these
applications, there are often two (additional) difficulties: (i) the
point sites are typically embedded in some constrained domain, and
(ii) the set of sites may change over time, and hence it may be
necessary to insert and delete sites. In the example problem above
other buildings, roads, and lakes etc. act as obstacles, and thus we
actually have to measure the distance $\dist(q,s)$ between two points
$q$ and $s$ in terms of the length of a shortest obstacle avoiding
path $\geod(q,s)$. Furthermore, shops may open and close (or may be
too crowded or out of products etc.), and hence we want to consider
only the currently open or available shops when querying.

In this paper, we consider dynamic nearest neighbor searching among a
set $S$ of $m$ sites in a two-dimensional polygonal domain $P$ with
$n$ vertices. We will allow both insertions and deletions in the set
of sites $S$. Agarwal, Arge, and,
Staals~\cite{agarwal18improv_dynam_geodes_neares_neigh} presented a
near linear space data structure for this problem that allows
efficient (polylogarithmic) queries and updates, provided that $P$
contains no holes; i.e. it is a simple polygon. However, extending
these results to arbitrary polygonal domains seems to be far out of
reach. It seems we need to at least be able to efficiently compute the
(geodesic) distance $\dist(q,s)$ between two arbitrary query points
$q$ and $s$ in $P$. Currently, the best data structures that support
such queries in polylogarithmic time use $O(n^9)$
space~\cite{berg24towar_space_effic_two_point} (and it seems unlikely
we can determine the site closest to $q$ without actually evaluating
geodesic distances). Hence, we settle on answering queries
approximately. Let $\eps > 0$ be some parameter, and let $s^*$ be the
site in $S$ closest to a query point $q$, then our goal is to report a
site $s \in S$ whose distance to $q$ is at most
$(1+\eps)\dist(q,s^*)$. We will say that $s$ is \emph{$\eps$-close} to
$q$. Our main result is then:

\begin{restatable}{theorem}{mainresult}
  \label{thm:main}
  Let $P$ be a polygonal domain with $n$ vertices, let $S$ be a
  dynamic set of $m$ point sites in $P$, and let $\eps > 0$ be a
  parameter. There is an
  $O(\frac{n+m}{\eps}\log n + \frac{m}{\eps}\log m)$ space data structure that can
  \begin{itemize}
  \item compute a site in $S$ that is $\eps$-close to a query $q \in P$ in
    $\querycomplexityeps$ time,
  \item support insertions of sites into $S$ in
    $\updatecomplexityeps$ amortized time, and
  \item support deletions of sites from $S$ in
    $\deletioncomplexityeps$ amortized time.
  \end{itemize}

  \noindent
  Constructing the initially empty data structure takes
  $O(\frac{n}{\eps^2}\log^2 n)$ time.
\end{restatable}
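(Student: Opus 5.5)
We follow the standard separator-based approach for approximate shortest paths in planar domains (in the spirit of Thorup's and Kapoor--Smid-style distance oracles), adapted to a dynamic set of sites. First we build a balanced hierarchical decomposition of $P$. Triangulate $P$ and take its dual graph. Recursively split $P$ with balanced separators, each consisting of $O(1)$ shortest paths (or segments, once holes are handled by cutting along triangulation chords). This gives a decomposition tree $\T$ of depth $O(\log n)$. For every separator path $\gamma$ and every relevant vertex $v$ in the subdomain $\gamma$ separates, we store $O(1/\eps)$ \emph{portals} on $\gamma$. A portal $p$ comes with its distance $\dist(v,p)$, and the portals are chosen so that for any point $x \in \gamma$ some portal $p$ satisfies $\dist(v,p) + \dist(p,x) \le (1+\eps)\dist(v,x)$. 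The standard argument applies because $\gamma$ is a shortest path, so $\dist(p,x)$ along $\gamma$ is just arc length. Summed over the $O(\log n)$ levels this uses $O(\frac{n}{\eps}\log n)$ space.

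For an arbitrary query point $q$ we locate its triangle and use the triangle's vertices, or a visibility-based connection to $\gamma$, to obtain $O(1/\eps)$ portals per separator in $O(\frac1\eps)$ time per level. This yields the two-point oracle: $\tilde\dist(q,s) = \min_\gamma \min_{p,p'}\bigl(\dist(q,p)+|pp'|_\gamma+\dist(p',s)\bigr)$, a $(1+\eps)$-approximation computed in $O(\frac1{\eps^2}\log n)$ time, or faster with a merge along $\gamma$. Preprocessing runs a shortest-path-map computation from each separator per level, giving $O(\frac{n}{\eps^2}\log^2 n)$ time.

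For the dynamic nearest neighbor structure, note that a shortest path $\geod(q,s^*)$ crosses some separator $\gamma$ at the highest level where $q$ and $s^*$ are split, or $q$ and $s^*$ lie in a common leaf region. We therefore reduce to many one-dimensional problems. For each separator $\gamma$, parametrize it by arc length $t$. Each site $s$ in the subdomain of $\gamma$ contributes $O(1/\eps)$ weighted points $(t_p, w_p=\dist(s,p))$ for its portals $p$. The query then asks, for each of $q$'s $O(1/\eps)$ portals $p'$ at position $t'$, for $\min_{(t,w)}(w+|t-t'|)$. This splits into two halfplane-like queries: minimize $w-t$ over $t\le t'$, and minimize $w+t$ over $t\ge t'$. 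A balanced search tree keyed on $t$, storing subtree minima, answers each query and supports each update in $O(\log m)$ time. Each site is inserted into $O(\log n)$ separators, each with $O(1/\eps)$ portals. Computing its portals costs $O(\frac1{\eps}\log n)$ per level, or $O(\frac1{\eps^2}\log n)$ in total via the oracle. The leaf/base cases are handled separately.

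Tallying: a query spends $O(\frac1\eps)$ lookups per level. The $\log n \log m$ and $\log^2 m$ terms in the bounds suggest the one-dimensional structure is really a decomposable static structure made dynamic by logarithmic rebuilding. That accounts for the amortized bounds and the extra $\log m$ factor: updates touch $O(\log m)$ static copies, queries search $O(\log m)$ copies, and deletions avoid recomputing portals. The $\frac m\eps\log m$ space term fits this as well.

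The main obstacle I expect is the portal construction for arbitrary query points, together with the claim that only $O(\log n)$ separator levels matter for a point $q$ with no precomputation. The case where $\geod(q,s^*)$ winds around holes must be handled correctly. I would try first to use separators that are unions of shortest paths in a triangulation-based tree decomposition whose cycles through holes are cut. I would then show that any path between the two sides must cross $\gamma$, so that the $(1+\eps)$ portal guarantee propagates to give $\eps$-closeness.
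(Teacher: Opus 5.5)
\textbf{There is a genuine gap: your proposal assumes, rather than constructs, portals valid at every point of $\gamma$.} Your outline matches the paper's: a separator tree of $O(1)$ shortest paths, $O(1/\eps)$ weighted portals per separator, a one-dimensional additively weighted structure per separator, and a leaf structure. Your search tree with subtree minima of $w-t$ and $w+t$ is a fine substitute for the paper's priority search tree. What is missing is a way to compute, for an \emph{arbitrary} point $q$ (a query or an inserted site), $O(1/\eps)$ portals on $\gamma$ that are good for \emph{every} $x\in\gamma$. You flag this yourself as the main obstacle, but it is the heart of the proof.

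Routing through the vertices of $q$'s triangle fails. Suppose $q$ is at distance $\delta$ from the midpoint $x$ of a triangle edge $e$ lying on $\gamma$. Then $\dist(q,x)=\delta$, but every route via a triangle vertex has length at least $\min(|e|/2,h)$, where $h$ is the height over $e$, independent of $\delta$. A site at distance $O(\delta)$ just across $\gamma$ therefore gets an estimate bounded away from $0$, and a much farther site can be returned. Your ``visibility-based connection'' is exactly what has to be supplied. The paper does it as follows, with $k=\lceil 1/\eps\rceil$:
\begin{itemize}
\item Take Clarkson's $O(k)$ minimal cone neighbors $N(q)$, and the points $X_Q(q)$ where the cone-boundary rays from $q$ first hit $Q$.
\item Form the $O(k^2)$ candidates $X_Q(q)\cup\bigcup_{v\in N(q)}A_Q(v)$ and prune them to $O(k)$ by a Thorup-style sweep (Lemma~\ref{lem:prune_connections}).
\item Prove the result is valid on all of $Q$ (Lemmas~\ref{lem:distances_in_augmented_graph} and~\ref{lem:query_anchor_set_is_valid}).
\end{itemize}
This costs $O(k^2)$ per separator, which is where the $\frac{1}{\eps^2}\log n$ term comes from, so your $O(1/\eps)$ per level is unsupported. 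Also, doing the ray shooting for all $O(\log n)$ levels in $O(k\log n)$ rather than $O(k\log^2 n)$ time is required by the stated bounds. That needs the persistent colored range-reporting structure of Lemma~\ref{lem:improved_query_time}.

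\textbf{The vertex portals have the same problem.} Thorup's existence argument needs the whole function $x\mapsto\dist(v,x)$ on $\gamma$ for every vertex $v$. A shortest path map \emph{from the separator} only gives $\min_{x\in\gamma}\dist(v,x)$, and computing one map per vertex takes at least quadratic time. Thorup's efficient construction, on the cone graph subdivided where its edges cross $Q$, is accurate only at graph vertices on $Q$. Figure~\ref{fig:counterexample_throrup} shows it can be off by a constant factor elsewhere, which is fatal because $\geod(q,s^*)$ may cross $Q$ anywhere. The paper's key new ingredient is the continuous graph $\Gr$: the cone graph with parameter $\eps/9$, plus Steiner points where each vertex's cone-boundary rays first hit $Q$. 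It approximates distances to every point of $Q$ (Lemma~\ref{lem:continous-Q-graph-distance}), after which Thorup's discrete selection on $\Gr$ gives valid anchors (Lemma~\ref{lem:anchor_points_for_vertices}). Your preprocessing bound is therefore unjustified as well.

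\textbf{Your reading of the bounds is also off.} The $\frac{1}{\eps}\log n\log m$ term is simply $O(\log n)$ levels times $O(1/\eps)$ one-dimensional queries. The $\frac{1}{\eps}\log^2 m$ time and $\frac{m}{\eps}\log m$ space come from the leaf structure you left open. For each of the $O(1/\eps)$ cones it keeps a 2D range tree on the sites projected onto the normals of the cone boundaries, with the primary tree maintained by partial rebuilding (hence the amortized bounds). It returns the site of minimum cone distance in $C_q$ (Lemma~\ref{lem:nearest_in_cone}). No logarithmic-method rebuilding is involved.
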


The crucial ingredient in our data structure is a data structure for
$(1+\eps)$-approximate 2-point shortest path
queries. Thorup~\cite{Thorup2007} presents an
$O(\frac{n}{\eps}\log n)$ space data structure that, given two
arbitrary query points $s,t$ in the polygonal domain $P$, can compute
an estimate $\hat{\dist}(s,t)$ of their distance so that
$\dist(s,t) \leq \hat{\dist}(s,t) \leq (1+\eps)\dist(s,t)$ in
$O(\frac{1}{\eps^3} + \frac{\log n}{\eps\log\log n})$ time. Thorup's
solution however assumes that distances are encoded as floating point
numbers whose bits we can manipulate. This is inconsistent with the
Real-RAM model of computation usually assumed in computational
geometry. We stick to the Real-RAM model, and show that we can still
compute approximate distances efficiently. We can actually improve the
dependency on $\eps$, and obtain:

\begin{restatable}{theorem}{improvedoracle}
  \label{thm:oracle}
  Let $P$ be a polygonal domain with $n$ vertices, and let $\eps > 0$
  be a parameter. In $O(\frac{n}{\eps^2}\log^2 n\log\frac{n}{\eps})$
  time we can construct an $O(\frac{n}{\eps}\log n)$ space data
  structure that, given query points $s,t \in P$ returns an
  $(1+\eps)$-approximation of $\dist(s,t)$ in
  $O(\frac{1}{\eps^2}\log n)$ time.
\end{restatable}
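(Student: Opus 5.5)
We follow Thorup's approach based on balanced separators in the visibility structure of $P$, but replace the bit-manipulation steps with operations that are valid on a Real-RAM. First we triangulate $P$. The dual graph of the triangulation has bounded degree and $O(n)$ nodes. For it we build a balanced separator hierarchy: a separator made of $O(1)$ shortest paths (or $O(1)$ triangulation diagonals, extended to shortest paths between vertices) splits $P$ into subdomains, each containing at most a constant fraction of the triangles. Recursing gives depth $O(\log n)$. The key structural fact is the standard one: for any $s,t$, let $P'$ be the deepest subdomain containing both. Then either $\geod(s,t)$ stays inside a child subdomain, which contradicts the choice of $P'$, or it crosses one of the $O(1)$ separator shortest paths $Q$ of $P'$. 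Since $\geod(s,t)$ may leave $P'$, we define the separators so that the paths are shortest paths in all of $P$ (or in the parent region, as in Thorup's construction). Then any crossing point $x$ lies on $Q$ and $\dist(s,t)=\dist(s,x)+\dist(x,t)$.

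\textbf{Portals.} For each node of the hierarchy, each separator shortest path $Q$, and each vertex $v$ of that node's subdomain, we store $O(1/\eps)$ \emph{portals} on $Q$. These are points $p$ together with exact distances $\dist(v,p)$, chosen in Thorup's usual way: geometrically placed around the projection of $v$ onto $Q$, with spacing $\eps\,\dist(v,Q)$. They guarantee that for every $x\in Q$ some portal $p$ satisfies $\dist(v,p)+\dist_Q(p,x)\le(1+\eps)\dist(v,x)$. Each vertex appears in $O(\log n)$ nodes, so the space is $O(\frac{n}{\eps}\log n)$. Computing the portals needs a shortest path map per separator path, which costs $O(n\log n)$ each. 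We must also find, for each vertex, the nearest point of $Q$ and the distances to candidate points. These are computed on a Real-RAM with binary searches over the distance function along $Q$, which is piecewise hyperbolic. This yields the extra $\frac{1}{\eps}\log\frac{n}{\eps}$ factors and the stated $O(\frac{n}{\eps^2}\log^2 n\log\frac{n}{\eps})$ preprocessing time.

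\textbf{Queries.} For an arbitrary query point $s$ (not a vertex) we first locate its triangle with point location in $O(\log n)$ time. We then reach the separators through the vertices of that triangle, or better, use the shortest path map stored for each separator to obtain the nearest point and the distance from $s$ to $Q$ directly. On each of the $O(\log n)$ ancestor levels of the leaves containing $s$ and $t$, we compute $O(1/\eps)$ portals for $s$ and for $t$ on each separator path. We then evaluate $\min_{p,p'}\dist(s,p)+\dist_Q(p,p')+\dist(p',t)$. The straightforward pairwise minimum takes $O(1/\eps^2)$ time per level, and summing over the $O(\log n)$ levels gives $O(\frac{1}{\eps^2}\log n)$. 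No floating-point tricks are needed, because we only take sums and minima of stored reals; distances along $Q$ come from prefix sums. Correctness follows because the true crossing point $x$ is approximated from both sides, giving a factor $(1+\eps)^2$; rescaling $\eps$ fixes this.

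\textbf{Main obstacle.} The hard step is producing portals for \emph{query points} $s$ in $O(1/\eps)$ time per level without storing anything per point. My plan is to use the triangle vertices: $s$ sees the three vertices of its triangle, and we combine each vertex's stored portals with $|sv|$. The difficulty is that this can lose more than a $(1+\eps)$ factor when $s$ is close to $Q$ but far from the triangle's vertices. I would handle this by also adding the direct projection of $s$ onto $Q$ when $Q$ is visible within the triangle, together with a few geometric neighbors of that projection. Proving that this restores the $(1+\eps)$ guarantee is where I expect most of the care to go.
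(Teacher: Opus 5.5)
There is a genuine gap. It sits at the step the whole theorem rests on: getting $(1+\eps)$-accurate distances from a vertex, and from an arbitrary query point, to \emph{every} point of a separator path $Q$ within the stated budgets.

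\textbf{Vertices.} A shortest path map with source $Q$ only gives $\dist(v,Q)$ and the nearest point $x_0$. The weight $\dist(v,x_0)+\dist_Q(x_0,p)$ it yields for another portal $p$ is not a $(1+\eps)$-approximation: take a straight $Q$, $v$ at height $d$ above $x_0$, and $p$ at distance $d$ from $x_0$; the true distance is $\sqrt{2}\,d$ and the estimate is $2d$. The exact values $\dist(v,p)$ are two-point distances, i.e.\ the very problem being solved. ``Binary searching the piecewise hyperbolic function $x\mapsto\dist(v,x)$'' presupposes having that function, which means a shortest path map from every vertex. That costs $\Theta(n\log n)$ per vertex and $\Omega(n^2)$ overall, so your preprocessing bound is not derived from anything. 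The portal count is also wrong: $O(1/\eps)$ portals at spacing $\eps\,\dist(v,Q)$ cover only a stretch of length $O(\dist(v,Q))$ around $x_0$. That is insufficient for small $\eps$ even for a straight, obstacle-free $Q$. Thorup's $O(1/\eps)$ bound comes from a greedy, distance-driven selection (the potential argument with $f(a)=\mathit{weight}(a)+\dist(a,v)$ reproduced in Lemma~\ref{lem:prune_connections}), which again needs the distances. The paper never computes exact distances. It takes Clarkson's cone graph with $\eps_1=\eps/9$ and adds Steiner vertices $X_Q(v)$ where the cone boundary rays from $v$ first hit $Q$. It proves that the resulting continuous graph $G^Q$ $(1+\eps_1)$-approximates $\dist(v,q)$ for \emph{every} $q\in Q$ (Lemmas~\ref{lem:continuous-Q-graph_vertex_exists} and~\ref{lem:continous-Q-graph-distance}); without these Steiner points this fails, see Figure~\ref{fig:counterexample_throrup}. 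It then runs Thorup's anchor-point lemma with Dijkstra on this $O(nk)$-size graph, which is where the $O(nk^2\log n\log(nk))$ time per level comes from.

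\textbf{Query points.} You correctly identify the obstacle, but the proposed fix does not close it. A shortest path from $s$ typically first bends at a visible polygon vertex that is not a corner of $s$'s leaf triangle: take $s$ in the middle of a long thin triangle with a reflex vertex just across a long side. Also, triangulation edges are not obstacles, so $Q$ may be visible from $s$ through other triangles. Neither the corners' portals nor a projection ``within the triangle'' plus a few neighbors (whose distances from $s$ are not Euclidean in general) recovers a $(1+\eps)$ factor. The missing idea has three parts. First, compute the minimal cone neighbors $N(s)$ and the cone-boundary hit points $X_Q(s)$. Second, form the $O(k^2)$ weighted candidates $X_Q(s)\cup\bigcup_{v\in N(s)}A_Q(v)$, with weights $\|s-p\|$ and $\|s-v\|+\dist_{G^Q}(v,p)$ respectively. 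Third, prune them to $O(k)$ anchors in $O(k^2)$ time with Thorup's greedy rule. Lemma~\ref{lem:path_via_cone_neighbor} is the geometric fact that detouring via the minimal visible vertex in a cone costs only a $(1+\eps)$ factor. Lemma~\ref{lem:query_anchor_set_is_valid} shows that the pruned set is valid for every point of $Q$. Finally, one point location per level in a shortest path map costs $\Theta(\log^2 n)$ time in total in the worst case. That is not $O(\frac{1}{\eps^2}\log n)$ when $\eps$ is constant. The paper avoids per-level searches with the persistent colored ray-shooting structure of Lemma~\ref{lem:improved_query_time}, which returns all sets $X_Q(s)$ along the root-to-leaf path in $O(k\log n)$ total time.
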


Our overall approach is actually the same as in Thorup's approach: we
recursively construct a separator, and estimate distances by going via
a discrete set of anchor points on this separator. However, our
approach provides stronger guarantees on these anchor points. We
believe this may be useful for other problems as well. Indeed, it
allows us to efficiently
answer (approximate) nearest neighbor queries as well. This also still
requires several additional ideas, as simply evaluating the distance
between a query point and every site would only give us query time
linear in $m$.

\subparagraph{Related Work.} Dynamic nearest neighbor searching in the
Euclidean plane has a long history, with early results by Agarwal and
\Matousek~\cite{agarwal1995dynamic}. Chan~\cite{chan10d_d} was the
first to achieve expected polylogarithmic update and query
times. Following a sequence of improvements, the best structure at
this time uses $O(m)$ space, allows for $O(\log^2 m)$ time queries,
and for updates in deterministic, amortized $O(\log^4 m)$
time~\cite{kaplan20dynam_planar_voron_diagr_gener,chan20dynam_geomet_data_struc_shall_cuttin}. For
more general (constant complexity) distance functions, one can
similarly achieve expected polylogarithmic query and update
times~\cite{kaplan20dynam_planar_voron_diagr_gener,
  liu22nearl_optim_planar_neares_neigh}. When the domain is a simple
polygon $P$ with $n$ vertices, Oh and
Ahn~\cite{oh20voron_diagr_moder_sized_point} presented a solution
using $O(n+m)$ space achieving $O(\sqrt{m}\log (n+m))$ query time, and
$O(\sqrt{m}\log m\log^2 n)$ update time. Agarwal, Arge, and,
Staals~\cite{agarwal18improv_dynam_geodes_neares_neigh} then showed
how to achieve $O(\log^2 n\log^2 m)$ query time and expected
$O(\polylog (nm))$ update time using $O(n + m\log^3 m\log n)$
space. When the domain may have holes and the set of sites is static,
there is an $O(n+m)$ space data structure that allows querying the
(exact) nearest neighbor in $O(\log (n+m))$
time~\cite{HershbergerSuri1999}. Constructing the data structure uses
$O(n\log n)$
time. Wang~\cite{wang23new_algor_euclid_short_paths_plane} recently
showed that if $P$ is triangulated, the shortest path map of a single
site can be computed in $O(n + h\log h)$ time, where $h$ is the number
of holes in $P$. Conceivably, his algorithm could be extended to
handle multiple sites (as is the case in the algorithm by Hershberger
and Suri). We are not aware of any results that allow inserting or
deleting sites. Even the restricted case where we allow only
insertions is challenging: we cannot easily apply a static-to-dynamic
transformation~\cite{bentley1980decomposable} since rebuilding the
static structure may require $\Omega(n)$ time.


\section{Global Approach}
\label{sec:Global_Approach}

We first present an overview of our data structure. See also
Figure~\ref{fig:intro-overview}. We start by
briefly reviewing the $\eps$-approximate 2-point shortest path query
data structure by Thorup~\cite{Thorup2007}.

\begin{figure}
  \centering
  \includegraphics[page=2]{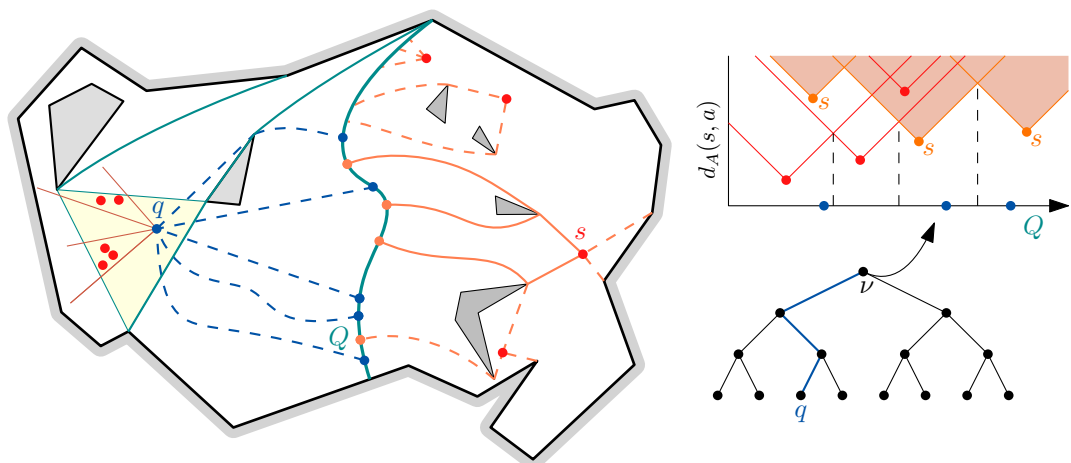}
  \caption{An overview of our approach. We use shortest paths to
    recursively subdivide $P$ into subpolygons. If $s^*$ lies in the
    same triangle as $q$, we can directly compute an $\eps$-close
    site. Otherwise, the shortest path must cross a shortest path $Q$,
    and we find a site $s$ that is sufficiently close to $q$ whose
    path goes via an anchor point on $Q$.
  }
  \label{fig:intro-overview}
\end{figure}

\subparagraph{Computing approximate shortest paths.} Thorup's approach
recursively partitions the domain $P$ using \emph{separators} such
that: each separator consists of at most three shortest
paths\footnote{Thorup\cite{Thorup2007} writes that there are at most
  six such shortest paths. As observed in \cite{berg24spanners}, there
  three shortest paths actually suffice.}, and the resulting subpolygons are
of roughly the same size. This results in a balanced binary tree \T of
height $O(\log n)$ in which each node $\nu$ corresponds to some
subpolygon $P_\nu$ that is further subdivided using a separator
$Q_\nu$. 

Consider the separator $Q$ of the root of this tree \T, and let
$k=\lceil 1/\eps \rceil$. The main idea is that each polygon vertex
$v$ generates a set $A_Q(v)$ of $O(k)$ so called \emph{anchor
  points}\footnote{Thorup calls these points connections, but we use
  anchor points to avoid confusion with the paths connecting $v$ to
  these points.} on $Q$, so that for any point $q \in Q$ the distance
via one of these anchor points $A_Q(v)$ to $v$ is at most $(1+\eps)$
times the true distance $\dist(q,v)$. So, if a shortest path
$\geod(u,v)$ between two vertices $u$ and $v$ intersects $Q$
then there is a path from $u$ to $v$ via two anchor points
$a \in A_Q(u)$ and $a' \in A_Q(v)$ (and the subcurve $Q[a,a']$ of $Q$)
of length $(1+\eps)\dist(u,v)$. Given the anchor
points (and the distances to their defining vertices) we can find the
length of such a path in $O(k)$ time. If the path $\geod(u,v)$ does
not intersect $Q$, it is contained in one of the subpolygons of the
children of the root, and hence we can compute a
$(1+\eps)$-approximation of the distance $\dist(u,v)$ in the
appropriate subpolygon.


As each vertex defines $O(k)$ anchor points on $Q$, and appears in
$O(\log n)$ levels of the recursion. The total space used is
$O(nk\log n)$.

To compute the anchor points efficiently, Thorup's actual algorithm
uses a graph $G$ in which the distance $\dist_G(u,v)$ between $u$ and
$v$ in the graph is already an $(1+\eps_1)$-approximation of the true
distance $\dist(u,v)$. Using the above approach with parameter
$\eps_0$ then actually gives a path in $P$ with total length at most
$(1 + \eps_0)(1 + \eps_1)\dist(u,v)$. Setting
$\eps_0=\eps_1 = \eps/3$ then gives a $\eps$-approximation of the
distances as desired.

However, using the graph $G$ results in slightly different (generally
weaker) properties compared to the ideal case sketched above. In
particular, his approach now guarantees that the distances from vertex $v$ are within a
factor $1+\eps$ of the true distance only for a discrete set
of points on $Q$ (the vertices of $Q$ and the intersections between
edges of $G$ and $Q$). For querying the distances
between pairs of vertices this is not an issue. Thorup's data
structure can handle such queries in only $O(k)$ time. To support
queries between arbitrary points $s,t \in P$, his algorithm connects
$s$ and $t$ to $O(k)$ vertices in $G$ each, and then simply queries
the distance between all pairs of these vertices. This leads to the
query time $O(k^3 + k\log n/\log\log n)$.

\subparagraph{Reducing the query time.} Ideally, we would treat $s$
and $t$ identical to polygon vertices, so that we get $O(k)$ anchor
points on a separator $Q$. However, as such a candidate anchor point
$p \in Q$ may lie anywhere on $Q$ (in particular, it is not
necessarily a vertex of $G$ yet), we may not have the required
$\eps_1$-approximation of the distance to such a point yet.

We will show that we can use a slightly different graph $\Gr$ that
includes additional ``Steiner'' points on $Q$.  This leads to a
different set of anchor points compared to Thorup's approach, but
\emph{does} guarantee that for any vertex $v$, and \emph{any} point
$q$ on $Q$, the distance to $v$ is at most
$(1+\eps)\dist(q,v)$. Furthermore, we show that we can efficiently compute an initial set of $O(k^2)$ candidate anchor
points for an arbitrary point $s$, which we then reduce to
$O(k)$. This results in $O(k^2\log n)$ query time for an arbitrary
pair of query points $s,t \in P$.

\subparagraph{Answering nearest neighbor queries.} As in the
approximate 2-point shortest path data structure above, we construct a
tree \T of separators. We associate every site $s \in S$ with a
root-to-leaf path in this tree. Namely, the nodes $\nu$ for which $s$
is contained in the subpolygon $P_\nu$. For each node $\nu$ on
this path, we will additionally maintain a set of anchor points $A_Q(s)$
on the separator $Q = Q_\nu$ corresponding to this node. In
particular, we consider $Q$ as a one dimensional space, and maintain
an additively weighted Voronoi diagram $\Vor(A_Q)$ on $Q$. The set of
sites $A_Q = \bigcup_{s \in S \cap P_\nu} A_Q(s)$ is the set of all
anchor points of all sites that appear in $P_\nu$. The weight of an
anchor point $a \in A_Q(s)$ is the distance estimate from $a$ to $s$.

If a shortest path between a query point $q$ and its closest site
$s^* \in S$ intersects $Q$, then there is again a sufficiently short
path via one of the anchor points of $s^*$ and $q$ on $Q$. We can then
find a sufficiently close site $s$ by querying the Voronoi diagram
$\Vor(A_Q)$ with all anchor points of $q$. We can answer each of these
$O(k)$ queries in $O(\log (km))$ time. This leads to an overall query time of
$\querycomplexityk$, as we need to perform such queries
in $O(\log n)$ nodes in \T.

The graph of the function that expresses the distance from a site
$s \in S$ to a point $p$ on $Q$ via a given anchor point
$a \in A_Q(s)$ has a nice ``V-shaped'' structure (see
Figure~\ref{fig:intro-overview}). Hence, we show that we can maintain
$\Vor(A_Q)$ by maintaining the lower envelope of these distance
functions. This allows us to insert and remove sites in
$O(k\log^2 (km))$ time. This leads to an overall update time of
$\updatecomplexityk$. Our overall
structure uses $O(nk\log n + mk\log n + mk\log m)$ space.


\subparagraph{Organization.} In Section~\ref{sec:cones}, we define our
graph $G^Q$, and prove that it allows us to accurately estimate
distances between any pair of points $s,t$ whose shortest path
intersects $Q$. We then use this to construct our approximate 2-point
shortest path data structure in
Section~\ref{sec:An_improved_data_structure_for_distance_queries}. Finally,
in Section~\ref{sec:Dynamic_nearest_neighbor_searching} we describe
our lower envelope data structure and the remaining tools needed for an efficient dynamic data
structure for approximate nearest neighbor searching.

\section{Estimating distances using a graph}
\label{sec:cones}

In this section we discuss how to approximate distances within a
polygon $P$. Our results are based on Clarkson's cone
graph~\cite{Clarkson1987} on which we can efficiently compute
approximate distances between vertices. We review this result in
Section~\ref{sub:Clarkson's_Cone_graph}. In
Section~\ref{sub:Continuous_eps-approximation_graphs} we then extend
this approach so that we can represent the distance from any vertex of
$P$ to any point on a given shortest path $Q$ in $P$. This involves
computing the anchor points for each vertex of $P$. Finally, in
Section~\ref{sub:The_augmented_cone_graph}, we further augment the
graph to also represent distances from a set $S$ of arbitrary point
sites to (and via) the shortest path $Q$, and show how to
compute the anchor points for such an arbitrary point $s \in S$ efficiently.





\begin{figure}
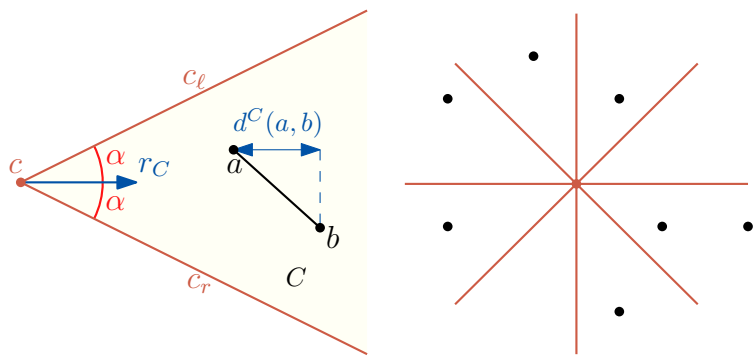

    \centering
    \includegraphics[page=2]{Cones}
    \quad
    \includegraphics[page=4]{Cones}
    \caption{(a) A cone $C$ with apex $c$, cone direction $r_C$, and
      angle $\alpha$, and an illustration of the cone distance
      $d^C(a,b)$ between two points $a,b \in C$.  (b) A family of cones with
      angle $2\alpha = 45$ degrees.  }
    \label{fig:cone_and_cone_family}
\end{figure}

\begin{figure}
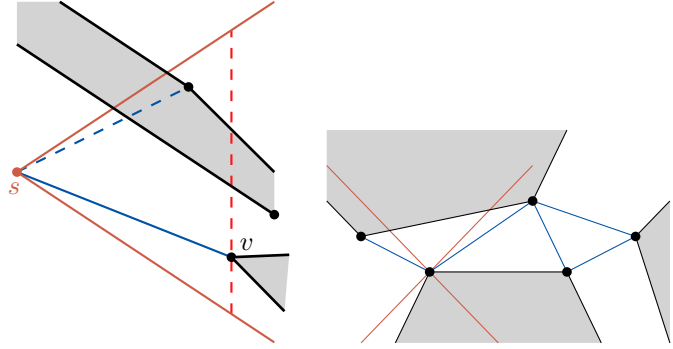

  \centering
  \includegraphics[page=5]{Cones}
  \quad
  \includegraphics[page=6]{Cones}
  \caption{(a) A point $s$, which has an outgoing cone edge to a
    vertex $v$ w.r.t the pictured cone $C$. Hence, there are no
    visible vertices in $C$ to the left of the red dashed line. (c)
    An example cone graph $G$ (for a very large $\eps$). 
  }
  \label{fig:cones}
\end{figure}

\subsection{Clarkson's cone graph}
\label{sub:Clarkson's_Cone_graph}

Clarkson~\cite{Clarkson1987}'s graph is based on cones. A \emph{cone}
$C$ is characterized by an apex point $c$, a cone direction $r_C$
which is a unit length vector and an angle $\alpha \leq \pi/2$.  The
boundary of the cone consists of two half lines $c_\ell, c_r$ who both
start at $c$ and the difference in angle compared to $r_C$ is
$\alpha/2$ and $-\alpha/2$ respectively. We use $C_c$ to denote a cone
$C$ to have its apex at point $c$. A \emph{cone family}
$\mathcal{F}$ is a set of cones that all have their apex at the
origin, such that any point in $\mathbb{R}^2$ is contained in at least
one cone $C\in \mathcal{F}$. See Figure~\ref{fig:cone_and_cone_family}(b)
for an example.

Let $\eps > 0$ be a parameter. Clarkson then defines a cone family
$\mathcal{F}_\eps$ consisting of $O(k)=O(1/\eps)$ cones, each with
angle at most $\eps/8$, and a \emph{cone graph} $G$ based on
$\mathcal{F}_\eps$.

Let $V$ be the set of vertices of $P$. A vertex $b \in V$ is a
\textit{minimal cone neighbor} of a point $a \in P$ if and only if
there exists a cone $C\in\mathcal{F}_\eps$, so that $b \in S$ is the
vertex in the translated cone $C_a$ with minimum
\emph{cone-distance} $d^C(a,b) = |r_C\cdot (b - a)|$ to $a$ among all
vertices that are visible from $a$ (i.e. for which the line segment
$\overline{ab}$ is contained in $P$). 
We will also
say that $a$ has an \textit{outgoing cone edge} to $b$ w.r.t the cone
$C$. Let $N(a)$ denote the set of these minimal cone neighbors of $a$
over all cones in $\mathcal{F}_\eps$. See Figure~\ref{fig:cones}(a)
for an example.

  The cone graph $G = (V,E)$ then has an edge $(u,v) \in E$ between
  two (polygon) vertices $u,v \in V$ if and only if $u$ had $v$ as its
  \emph{minimal cone neighbor} or vice versa, i.e. $v \in N(u)$ or
  $u \in N(v)$. Each vertex $u$ has at most $O(k)$ outgoing cone
  neighbors, and thus the total number of edges in $G$ is at most
  $O(nk)$. Let $\dist_G(u,v)$ denote the length of a shortest path
  $\geod_G(u,v)$ between vertices $u,v$ in $G$.

  \begin{lemma}[Clarkson~\cite{Clarkson1987}]
  \label{lemm:Clarkson}
  For every pair of vertices $u, v$ in the cone graph $G$, there is a
  path $\geod_G(u,v)$ of length $d_{G}(u,v)$ so that
  \[
    \dist(u,v) \leq \dist_G(u,v) \leq (1 + \eps)\dist(u,v).
  \]
  In $O(nk\log n)$ time we can build a data structure of size $O(nk)$
  so that for any point $s \in P$ one can compute the set of minimal
  cone neighbors $N(s)$ in $O(k\log n)$ time. Hence, the graph $G$ can
  be constructed in $O(nk\log n)$ time.
\end{lemma}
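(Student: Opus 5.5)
The lower bound $\dist(u,v) \leq \dist_G(u,v)$ is immediate. Every edge $(a,b)$ of $G$ joins two mutually visible vertices, so its weight $|ab|$ is the length of an obstacle-free segment in $P$. Concatenating the edges of $\geod_G(u,v)$ therefore gives a path in $P$ whose length is $\dist_G(u,v)$.

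For the upper bound I would follow Clarkson's inductive argument. The true shortest path $\geod(u,v)$ is a polygonal chain whose interior bends are polygon vertices, so it suffices to show that for every pair of mutually visible vertices $a,b$ we have $\dist_G(a,b)\le (1+\eps)|ab|$. Summing this over the segments of $\geod(u,v)$ then gives the claim. I would prove the visible-pair bound by induction over all visible pairs, ordered by Euclidean distance $|ab|$.

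Take a cone $C\in\F_\eps$ with $b\in C_a$, and let $c$ be the minimal cone neighbor of $a$ in $C$. Such a $c$ exists, since $b$ itself is a visible candidate. If $c=b$ we are done. Otherwise I need two facts.
\begin{itemize}
\item $c$ is visible from $b$, or at least $\dist(c,b)$ is realised by a chain of visible vertex pairs, each shorter than $|ab|$. This is the main obstacle. It requires a triangle/sweep argument: the region of $C_a$ on $a$'s side of the line through $c$ orthogonal to $r_C$ contains no visible vertex. I would argue that any obstacle blocking $\overline{cb}$ must have a vertex inside the triangle $a\,c\,b$ region that is visible from $a$ with a smaller cone distance, or else the shortest path from $c$ to $b$ bends only at vertices lying strictly closer. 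In that case the induction is applied to the pieces, with the shortcut sum argument.
\item The metric inequality. Because the angle of $C$ is at most $\eps/8$, both $c$ and $b$ lie within angle $\eps/8$ of $r_C$ and $d^C(a,c)\le d^C(a,b)$. Elementary trigonometry then gives $|ac|+(1+\eps)|cb|\le (1+\eps)|ab|$, together with $|cb|<|ab|$, which makes the induction well-founded.
\end{itemize}
Combining these, $\dist_G(a,b)\le |ac| + \dist_G(c,b) \le |ac|+(1+\eps)|cb|\le(1+\eps)|ab|$.

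For the data structure, each cone direction of $\F_\eps$ is handled separately, giving $O(k)$ directions. A query asks for the visible vertex minimising the linear function $r_C\cdot(x-s)$ within a translated cone. I would build, per cone, a structure of size $O(n)$ in $O(n\log n)$ time. One option is a sweep in direction $r_C$ that builds the ``cone-nearest visible vertex'' subdivision, a planar map of complexity $O(n)$, as Clarkson does via a sweepline with a balanced tree of the cone's boundary rays. This is then combined with point location, so that each query takes $O(\log n)$ time. Summing over the cones gives $O(nk)$ space, $O(nk\log n)$ preprocessing, and $O(k\log n)$ query time. Finally, I query $N(v)$ for each of the $n$ vertices $v$, which yields all edges of $G$ in $O(nk\log n)$ time.
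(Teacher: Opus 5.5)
Your reduction to mutually visible vertex pairs and your induction on $|ab|$ are fine. The inductive step, however, has a genuine gap in exactly the case you flag: when the minimal cone neighbor $c$ of $a$ does not see $b$.

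Your final chain $\dist_G(a,b)\le |ac|+\dist_G(c,b)\le |ac|+(1+\eps)|cb|$ holds only if $\dist(c,b)=|cb|$. Otherwise induction gives at best $\dist_G(c,b)\le(1+\eps)\dist(c,b)$ with $\dist(c,b)>|cb|$, and your Euclidean inequality for $|cb|$ says nothing about that. Your proposed dichotomy does not rescue this, because an obstacle blocking $\overline{cb}$ need not yield a visible vertex with smaller cone distance than $c$. For example, a thin spike can enter $\Delta acb$ through $\overline{cb}$ with its tip beyond the line $\ell$ through $c$ orthogonal to $r_C$. The tip is visible from $a$ but has larger cone distance, so $c$ is still the minimal cone neighbor, yet $c$ and $b$ are not mutually visible. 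You correctly note that no visible vertex lies on $a$'s side of $\ell$, but $\Delta acb$ extends beyond $\ell$, so that fact does not apply to it. Without an explicit obstacle-free route from $c$ to $b$ you also do not know that $\dist(c,b)<|ab|$. Hence your claim that the pieces of $\geod(c,b)$ are shorter than $|ab|$, which you need for well-foundedness, is unsupported as well.

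The missing idea is the one in Lemma~\ref{lem:path_via_cone_neighbor}. The paper itself only cites Clarkson for this lemma, but that is the geometric core it relies on. Let $m=\overline{ab}\cap\ell$. The triangle $\Delta acm$ lies in $C_a$ on $a$'s side of $\ell$, and its sides $\overline{ac}$ and $\overline{am}$ are obstacle free. So by Observation~\ref{obs:triangle_third_intersection}, a blocked $\overline{cm}$ would give a vertex visible from $a$ with cone distance less than $d^C(a,c)$, contradicting the choice of $c$. Hence $\dist(c,b)\le|cm|+|mb|$. The law of sines in $\Delta acm$ gives $|cm|\le 2\sin(\eps/8)|am|\le(\eps/4)|am|$. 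This yields $\dist(c,b)<|ab|$, so every segment of $\geod(c,b)$ is a visible vertex pair shorter than $|ab|$ and your induction hypothesis applies to it. It also yields
\[
  |ac|+(1+\eps)\bigl(|cm|+|mb|\bigr)\le |am|+(2+\eps)|cm|+(1+\eps)|mb|\le(1+\eps)|ab|.
\]
Using $|cm|+|mb|$ instead of $|cb|$ in your final chain closes the argument, and it covers the visible case as well.

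Your data-structure part only asserts that the per-cone ``cone-nearest visible vertex'' subdivision has linear complexity. That is exactly what the paper takes from Clarkson, so it is acceptable as a citation-level sketch, but your text does not prove it.
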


\subparagraph{Extending the cone graph.} Let $s \in P$ be a point, we
define the \emph{extended cone graph}
$G[s] = (V \cup \{s\}, E \cup \{(s,v) \mid v \in N(s)\})$ by adding
$s$ as a vertex, and connecting it to all of its outgoing cone
neighbors. Note that this graph is different from Clarkson's cone
graph on $V \cup \{s\}$. See Figure~\ref{fig:extended-cone-graph} for
an illustration. Using essentially same argument as in
Lemma~\ref{lemm:Clarkson}, we can still prove that $G[s]$ provides an
$\eps$-approximation of the distances to and from $s$. This uses the
following geometric observations (which we will also need later).

\begin{figure}
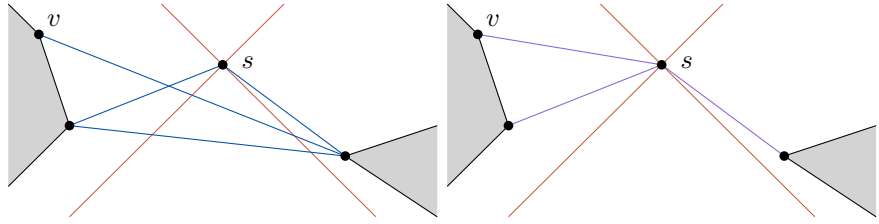

  \centering
  \includegraphics[page=7]{Cones}
  \includegraphics[page=8]{Cones}
    \caption{(a) The graph $G[s]$, where only outgoing cone edges from
      $s$ have been added. (b) The cone graph on $V \cup \{s\}$. Observe
      that the edge set of neither graph is a subset of the other.
    }
    \label{fig:extended-cone-graph}
\end{figure}

\begin{figure}[tb]
  \centering
  \includegraphics{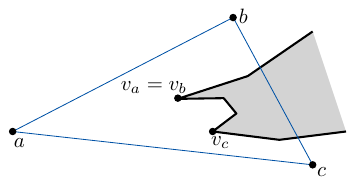}
  \caption{The configuration from
    Observation~\ref{obs:triangle_third_intersection}.}
  \label{fig:visible_vertex}
\end{figure}

\begin{observation}
  \label{obs:triangle_third_intersection}
  Let $a, b$ and $c$ be points in $P$. If $b$ and $c$ are both visible
  to $a$ but not to each other, then for each $x\in \{a,b,c\}$ there
  exists a vertex $v_x$ of $P$ in the triangle $\Delta abc$ with corners
  $a, b$, and $c$ such that
  (i) $v_x$ is visible to $x$, and (ii) $v_x$ does not lie on $\overline{bc}$.
  See Figure~\ref{fig:visible_vertex}.
\end{observation}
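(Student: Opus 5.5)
The plan is to prove the statement by a \emph{sweep}, once for each $x \in \{a,b,c\}$. The sweep runs over the closed triangle $\Delta abc$ starting from $x$, and we take $v_x$ to be the first vertex of $P$ it meets, outside $\overline{bc}$. Two things must be shown: that such a vertex exists, and that the part of the triangle swept before reaching it lies in $P$, so that $\overline{x v_x} \subseteq P$.

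\emph{Existence.} Since $\overline{bc} \not\subseteq P$, some point $p$ of the open segment $\overline{bc}$ lies outside $P$. Let $O$ be the connected component of $\Delta abc \setminus P$ that contains $p$.
\begin{itemize}
\item $O$ cannot touch $\overline{ab}$ or $\overline{ac}$, because these segments lie in $P$. Hence $O$ is bounded by pieces of $\partial P$ inside $\Delta abc$, together with part of $\overline{bc}$.
\item Take any edge $e$ of $\partial P$ bounding $O$ that is not contained in the line through $b,c$. Follow $e$ from its crossing with $\overline{bc}$ into the triangle.
\item $e$ cannot cross $\overline{ab}$ or $\overline{ac}$ in their relative interiors. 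So it must end at a vertex of $P$ that lies in $\Delta abc$ and not on $\overline{bc}$.
\item If all bounding edges lay on the line through $b,c$, then $O$ would be degenerate and could not block visibility in the interior of $\overline{bc}$. This case should be excluded by $P$ being a closed domain, though it may need a short argument about the obstacle being locally two-sided at $p$.
\end{itemize}
Thus the set $W$ of vertices of $P$ in $\Delta abc \setminus \overline{bc}$ is nonempty.

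\emph{Choice of $v_a$.} Sweep a line parallel to $bc$ from $a$ towards $\overline{bc}$, and let $v_a \in W$ be the vertex hit first, i.e.\ the one minimizing distance to $a$ in the direction orthogonal to $bc$. Let $T$ be the part of the triangle strictly before the sweep line through $v_a$; it is a small triangle with apex $a$.
\begin{itemize}
\item I claim the interior of $T$ contains no point outside $P$.
\item Suppose it did. That obstacle component cannot cross $\overline{ab}$ or $\overline{ac}$, and it cannot reach $\overline{bc}$ without passing the sweep line.
\item If it stays within $T$, its boundary is a closed polygonal curve inside $T$ and has a vertex in $T$, contradicting minimality.
\item If it crosses the sweep line, the same edge-following argument as in the existence step produces a vertex of $W$ in $T$, again contradicting minimality.
\end{itemize}
Hence $T \cup \{v_a\} \subseteq P$ and $\overline{a v_a} \subseteq P$.

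\emph{Choice of $v_b$ and $v_c$.} By symmetry it suffices to handle $b$. Rotate a ray around $b$ from direction $\overrightarrow{ba}$ towards $\overrightarrow{bc}$. Among the vertices of $W$, choose $v_b$ to minimize the angle $\angle a b v_b$, breaking ties by distance to $b$. The same two-case argument then shows that the wedge region of $\Delta abc$ swept before the ray reaches $v_b$, together with $\overline{b v_b}$, contains no obstacle. One point needs care here: an obstacle cannot lie between the rays without a vertex in that region, since its boundary cannot cross $\overline{ab}$ and cannot reach $\overline{bc}$ strictly inside the wedge, except at $b$ itself.

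\emph{Main obstacle.} I expect the difficulty to lie in degenerate configurations, not in the sweep itself:
\begin{itemize}
\item vertices of $P$ lying on $\overline{ab}$ or $\overline{ac}$ (these are allowed as $v_x$);
\item several vertices collinear on the sweep ray, which the tie-breaking by distance handles;
\item $\partial P$ passing through $b$ or $c$, which makes the rotating-ray argument near $b$ delicate.
\end{itemize}
For these cases I would first try to argue with the closed/open distinction of the regions, and only if that fails use a small perturbation.
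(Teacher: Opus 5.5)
\paragraph{No proof in the paper to compare against.} The paper never proves this observation; it is asserted with Figure~\ref{fig:visible_vertex} as its only justification. Your extremal-vertex approach is a correct way to supply a proof: take the vertex of $W$ that comes first in a sweep parallel to $\overline{bc}$ starting at $a$ (resp.\ in an angular sweep around $b$ starting at $\overrightarrow{ba}$, and symmetrically for $c$). Then show that any obstruction would produce a boundary edge ending at an earlier vertex of $W$.

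\paragraph{State the two local facts.} They carry the whole argument:
(1) near a relative-interior point of an edge, $P$ is a closed half-plane, so an edge never crosses $\overline{ab}$ or $\overline{ac}$ transversally at such a point;
(2) $a$ cannot lie in the relative interior of an edge that has points of $\Delta abc$ on its exterior side, because the half-plane $P$ occupies near $a$ must contain $\overline{ab}\cup\overline{ac}$, hence all of $\Delta abc$.

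\paragraph{Repair the existence step.} An edge bounding $O$ need not meet $\overline{bc}$. Also, the closure of $O$ can contain $\overline{ab}$ and $\overline{ac}$ entirely. For example, put $\Delta abc$ inside a quadrilateral hole with vertices $a,b,d,c$, where $d$ lies beyond $\overline{bc}$; then $W=\{a\}$ and $v_a=v_b=v_c=a$ is forced. Instead, follow the edge in the direction away from the line through $b$ and $c$.

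\paragraph{Avoiding the degenerate cases.} You can avoid all the degenerate cases you list, without perturbation. Prove the stronger claim that the whole closed part of $\Delta abc$ swept before $v_x$ lies in $P$. The key is to walk \emph{towards a side known to lie in $P$} rather than out of $x$.

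Suppose $y\in\Delta abc\setminus P$ is swept strictly before $v_x$. For $x=a$, walk from $y$ parallel to $\overline{bc}$ until you reach $\overline{ab}$. For $x=b$, walk from $y$ away from $b$ along the ray from $b$ through $y$ until you reach $\overline{ac}$. Let $z$ be the first point of $P$ you meet; $z$ has the same sweep parameter as $y$. There are two cases.
\begin{itemize}
\item $z$ is a vertex. Then it is a vertex of $W$ swept before $v_x$.
\item $z$ is a relative-interior point of an edge. That edge is transversal to the walking line, since otherwise the points just before $z$ would lie on it, hence in $P$. Follow it in the direction of decreasing sweep parameter. By (1) and (2) it cannot leave $\Delta abc$. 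It cannot reach $\overline{bc}$, and it cannot pass through $b$, because its line is transversal to a ray from $b$. So it ends at a vertex of $W$ swept before $v_x$.
\end{itemize}
Either way minimality is contradicted.

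Since $P$ is closed, the closed swept region lies in $P$, so $\overline{xv_x}\subseteq P$ for \emph{any} minimizer. Your tie-breaking is therefore unnecessary, and the shape of $\partial P$ at $b$ or $c$ never matters. Extending the same argument all the way to $\overline{bc}$ shows that if $W$ were empty, all of $\Delta abc$, including $\overline{bc}$, would lie in $P$. So existence also comes for free.

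\paragraph{Drop the perturbation fallback.} Visibility here means a closed segment inside the closed set $P$, and that is not stable under perturbation. Think of $\overline{ab}$ running along a hole edge, as in the example above.
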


\begin{lemma}
  \label{lem:path_via_cone_neighbor}
  Let $s$ be a point in $P$, let $C \in \mathcal{F}_\eps$ be a cone
  from the cone family, and let $u$ and $q$ be two points in the cone
  $C_s$ that (i) are visible from $s$, and (ii) so that
  $d^C(s,u) \leq d^C(s,q)$, and (iii) there exists no vertex $v$ of the
  polygon which is both visible in $C$ and satisfies $d^C(s, v) < d^C(s,u)$. Then (1) the shortest obstacle avoiding path $\gamma^*$
  from $u$ to $q$ has length at most $\dist(s,q)$, and (2) for any obstacle avoiding path
  $\gamma$ from $u$ to $q$ of length at most $(1+\eps)\dist(u,q)$ the
  concatenation of the paths $\overline{su}$ and $\gamma$ is an
  obstacle avoiding path of length at most $(1+\eps)\dist(s,q)$.
\end{lemma}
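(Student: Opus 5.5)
The plan is to reduce both claims to one metric inequality. Writing $\theta\le\eps/8$ for the cone angle, I want to show
\[
  \dist(u,q)\;\le\;|sq| - (\cos\theta-\sin\theta)\,|su| .
\]
Since $\overline{sq}\subseteq P$ we have $|sq|=\dist(s,q)$. Claim (1) follows immediately, because $\cos\theta-\sin\theta>0$. For claim (2), the concatenation $\overline{su}\cdot\gamma$ is obstacle avoiding: $\overline{su}\subseteq P$ by visibility, and $\gamma$ is obstacle avoiding by assumption. Its length is at most
\[
  |su|+(1+\eps)\dist(u,q)\le(1+\eps)|sq|-\bigl((1+\eps)(\cos\theta-\sin\theta)-1\bigr)|su| .
\]
This is at most $(1+\eps)\dist(s,q)$ as soon as $(1+\eps)(\cos\theta-\sin\theta)\ge1$. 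With $\theta\le\eps/8$ we have $\cos\theta-\sin\theta\ge1-\theta-\theta^2/2\ge1-\eps/4$ for $\eps$ in the relevant range, and $(1+\eps)(1-\eps/4)\ge1$ for $\eps\le3$. (For larger $\eps$ one may shrink the cones, which only changes constants.)

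\textbf{Visible case.} First suppose $u$ sees $q$, so $\dist(u,q)=|uq|$. Both $u$ and $q$ lie in $C_s$, so every direction from $s$ to them makes angle at most $\theta$ with $r_C$. Decompose $q-u$ into its component along $r_C$ and a lateral component. The former is $d^C(s,q)-d^C(s,u)\ge0$ by (ii). Both points lie in a wedge of angle $\theta$, so the lateral offset is at most $\sin\theta\,(|su|+|sq|)$, roughly $\tan\theta\cdot d^C(s,q)$. A short planar computation (the standard one in Clarkson's paper) then gives $|uq|\le|sq|-(\cos\theta-\sin\theta)|su|$. I will carry this out by placing $s$ at the origin and bounding $|uq|\le|q-u|_{r_C}+|q-u|_{\perp}$ with the wedge constraints.

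\textbf{Non-visible case (main obstacle).} Now suppose $u$ does not see $q$. The triangle $\Delta suq$ has free sides $\overline{su}$ and $\overline{sq}$. Consider the shortest path from $u$ to $q$ inside $\Delta suq\cap P$. It is a convex chain $u=w_0,w_1,\dots,w_t=q$ whose interior vertices are polygon vertices in $\Delta suq$, and it bulges towards $s$. Observation~\ref{obs:triangle_third_intersection} guarantees such vertices exist.

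I will use (iii) to confine this chain. Consider the first vertex $w_1$, or more generally any chain vertex $w$ whose triangle $\Delta s\,w_{i-1}\,w$ is obstacle free. Since the chain is convex towards $s$, the region between $s$ and the chain is empty, so every chain vertex is visible from $s$. Each such vertex lies in $C_s$ (as $\Delta suq\subseteq C_s$), so (iii) gives $d^C(s,w_i)\ge d^C(s,u)$.

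Next I show the chain is monotone in direction $r_C$. Convexity together with all vertices lying in the strip $d^C\ge d^C(s,u)$ should give this. Once it holds, I sum the per-segment bounds of the visible case. Each segment $\overline{w_{i-1}w_i}$ is a visible pair in the wedge with nondecreasing cone distance, so, applying the visible-case computation with $w_{i-1}$ in the role of $u$, its length is at most the cone-distance increment $d^C(s,w_i)-d^C(s,w_{i-1})$ plus a lateral term.

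The delicate part is that the lateral terms must telescope into something bounded by the wedge width at $q$, rather than accumulating. I would handle this by charging the lateral movement to the total angular sweep of the convex chain, which is at most $\theta$. If this accounting becomes messy, my fallback is Clarkson's original argument: induction over pairs ordered by $\dist(u,q)$, applying the visible case to $u,w_1$ and the induction hypothesis to the remainder.
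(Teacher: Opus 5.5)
\textbf{There is a genuine gap in how you prove your key inequality.} Reducing both claims to $\dist(u,q)\le|sq|-(\cos\theta-\sin\theta)|su|$ is sound, and that inequality is true. The problem is the error budget, and it affects both of your cases.

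\emph{Why the proposed method fails.} Beyond the $r_C$-component $d^C(s,q)-d^C(s,u)$, the right-hand side leaves slack of only $O(\theta\,d^C(s,u)+\theta^2 d^C(s,q))$. The lateral component of $q-u$, however, can be $\Theta(\theta\,d^C(s,q))$.
\begin{itemize}
\item The bound $|uq|\le|q-u|_{r_C}+|q-u|_\perp$ therefore fails already in the visible case. Take $a=d^C(s,u)$, $u$ on the bisector, and $q$ on a cone boundary with $d^C(s,q)=100a$. For small $\theta$ your bound is about $99a+50a\theta$, while the right-hand side is about $99a+a\theta$.
\item For the same reason, your target for the non-visible case (lateral terms ``bounded by the wedge width at $q$'') is too weak.
\item Summing the visible-case inequality over chain edges does not close. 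Each application to $\overline{w_{i-1}w_i}$ loses $(1-\kappa)|sw_{i-1}|$ with $\kappa=\cos\theta-\sin\theta<1$, and these losses accumulate rather than telescope.
\item Your fallback induction has the same defect. In addition, hypothesis (iii) is not inherited by $w_1$: it fails, e.g., when $u$ is itself a polygon vertex of smaller cone distance. So the induction hypothesis does not even apply to the pair $(w_1,q)$.
\end{itemize}

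\emph{The missing idea, as in the paper's proof, is to route through the point $m$ on $\overline{sq}$ with $d^C(s,m)=d^C(s,u)$.}
\begin{itemize}
\item The segment $\overline{um}$ is obstacle free whether or not $u$ sees $q$. Otherwise Observation~\ref{obs:triangle_third_intersection}, applied to $s,u,m$, yields a polygon vertex in $\Delta sum\subseteq C_s$. That vertex is visible from $s$ and not on $\overline{um}$, so it has cone distance $<d^C(s,u)$, contradicting (iii).
\item Hence $\dist(u,q)\le|um|+|mq|$ in all cases, and your non-visible case disappears entirely.
\item The lateral error is now confined to $|um|$, the wedge width at $u$; the paper bounds it by $|sm|\eps/4$ via the law of sines. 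Meanwhile $|mq|=|sq|-|sm|$ is exact.
\end{itemize}
Both conclusions then follow by a short calculation. Your inequality also follows from these bounds, via $\cos(\theta/2)-2\sin(\theta/2)\ge\cos\theta-\sin\theta$.

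Your own chain facts would also have sufficed. All chain vertices are visible from $s$, lie in $C_s$, and have cone distance at least $d^C(s,u)$, which already places the chain inside $\Delta umq$. A convex perimeter comparison then gives chain length at most $|um|+|mq|$. That step, however, is not the per-segment summation you proposed.
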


\begin{proof}
We assume that $(i) - (iii)$ all hold and will demonstrate that we find
that $(1), (2)$ must hold. We rotate the plane such that the cone
direction of $C_s$ is pointing right along the $x$-axis. First we
consider a line through $u$ perpendicular to the cone direction of
$C_s$ (i.e. the line is vertical). At the points where this line
intersects the left and right cone boundaries and $\overline{sq}$ we
define the points $\ell, r$ and $m$, respectively. See
Figure~\ref{fig:Clarkson_image} for an illustration. Observe that $m$ must exist because $q$ lies on or to the right of our vertical line per assumption $(ii)$.

\begin{figure}
      \centering
      \includegraphics[page=9]{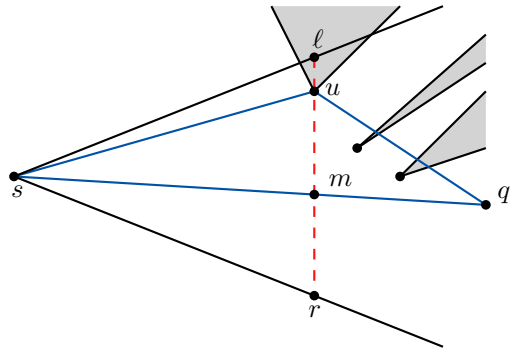}
      \caption{The points $s, u, q$ and $m$ as described in the proof
        of Lemma \ref{lem:path_via_cone_neighbor}.}
      \label{fig:Clarkson_image}
    \end{figure}

Per assumption $(i)$ the linesegments $\overline{su}$ and
$\overline{sq}$ are not intersected by the polygon boundary, and per assumption
$(iii)$ there are no vertices in $\Delta sum$ which are both: strictly
left to the linesegment $\overline{um}$ and visible to $s$. Therefore,
per Observation~\ref{obs:triangle_third_intersection} the segment
$\overline{um}$ is also not intersected by the boundary of the polygon.

As the path consisting of the line segments $\overline{um}$ and $\overline{mq}$ is an obstacle avoiding path from $u$ to $q$ we find that the length of $\gamma$ is at most
\begin{align*}
  (1 + \varepsilon) \dist(u, q) \leq (1 + \eps) (\dist(u, m) + \dist(m, q)).
\end{align*}
Next, we will relate the length of $\dist(u, m)$ to the length of
  $\dist(s, q)$, as all of these points are visible to each other
  these distances are simply the Euclidean distance.

  Per our construction we know that the angle of the cone $C_s$ is $\alpha = \eps/8$ and because $\eps \leq 1$ we also have that $\alpha \leq \pi/4$.
  Furthermore, we can derive
  \begin{align*}
    3\pi/4 \geq \angle sum \geq \angle s\ell r \geq \pi/4,
  \end{align*}
  and therefore $\sin(\angle sum) \geq 1/2$.
  Finally, using the law of sines on the triangle $\Delta sum$ we find that
  \begin{align*}
    \dist(u,m) &= \frac{\dist(s,m)\sin(\angle usm)}{\sin(\angle sum)}\\
               &\leq 2\dist(s,m)\sin(\alpha)\\
               &\leq \dist(s,m)\eps/4.
  \end{align*}
  Additionally, this proves $(1)$ as the length of $\gamma^*$ is at most $\dist(u,m) + \dist(m, q)$, and therefore shorter than $\dist(s, q)$.

  Combining the above inequality with our earlier derived equation for the length of $\gamma$ we find that the length of the concatenated path is at most
  \begin{align*}
    \dist(s, u) + (1 + \varepsilon) \dist(u, q) &\leq \dist(s,m) + \dist(m, u) + (1 + \eps)(\dist(u,m) + \dist(m,q))\\
                                 &= \dist(s,m) + (2 + \eps)\dist(m, u) + (1 + \eps)\dist(m,q)\\
                                 &\leq \dist(s,m) + (\eps/2 + \eps^2/4)\dist(s,m) + (1 + \eps)\dist(m,q)\\
                                 &\leq (1 + \eps)(\dist(s,m) + \dist(m,q))\\
                                 &= (1 + \eps)\dist(s,q)
  \end{align*}
  as desired.
\end{proof}

\begin{lemma}
  \label{lem:path_via_cone_neighbor_exists}
  Let $s$ be a point in $P$. For any vertex $w$ in $G[s]$ there is a
  path from $s$ to $w$ in $G[s]$ of length
  $\displaystyle
    \dist(s,w) \leq \dist_{G[s]}(s,w) \leq (1+\eps)\dist(s,w).
  $
\end{lemma}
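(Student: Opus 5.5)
The lower bound $\dist(s,w) \leq \dist_{G[s]}(s,w)$ is immediate. Every edge of $G[s]$ connects two mutually visible points, so its weight is the Euclidean length of a segment contained in $P$. Any path in $G[s]$ is therefore an obstacle avoiding path in $P$.

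For the upper bound I would combine Lemma~\ref{lem:path_via_cone_neighbor} with Clarkson's guarantee (Lemma~\ref{lemm:Clarkson}). If $w = s$ there is nothing to show, so assume $w \neq s$ is a polygon vertex. Consider a shortest path $\geod(s,w)$. Its first bend point, or $w$ itself if there is no bend, is a vertex $q$ of $P$ that is visible from $s$. Pick a cone $C \in \mathcal{F}_\eps$ with $q \in C_s$; such a cone exists because the family covers the plane. Let $u \in N(s)$ be the minimal cone neighbor of $s$ with respect to $C$. This neighbor exists since $q$ itself is a visible vertex in $C_s$. By definition $u$ is visible, $d^C(s,u) \leq d^C(s,q)$, and no visible vertex in $C_s$ has a strictly smaller cone distance. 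So hypotheses (i)--(iii) of Lemma~\ref{lem:path_via_cone_neighbor} hold with this $u$ and $q$.

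The main subtlety is that Lemma~\ref{lem:path_via_cone_neighbor} only relates $s$ and $q$, while we need to reach $w$. I would apply it with the target $w$ directly, but $w$ need not be visible from $s$, so instead I plan to extend the argument. Apply part (2) with endpoint $q$ and then append the remainder of $\geod(s,w)$. More cleanly, Clarkson's lemma gives a path in $G \subseteq G[s]$ from $u$ to $w$ of length at most $(1+\eps)\dist(u,w)$. The triangle inequality gives $\dist(u,w) \leq \dist(u,q) + \dist(q,w)$. Hence
\[
  \dist_{G[s]}(s,w) \leq |su| + (1+\eps)\dist(u,q) + (1+\eps)\dist(q,w).
\]
Now take $\gamma$ to be a shortest path from $u$ to $q$, which has length $\dist(u,q) \leq (1+\eps)\dist(u,q)$. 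The computation inside the proof of Lemma~\ref{lem:path_via_cone_neighbor} in fact bounds $|su| + (1+\eps)\dist(u,q) \leq (1+\eps)\,|sq|$, since that computation was carried out for the worst allowed length of $\gamma$. It follows that $\dist_{G[s]}(s,w) \leq (1+\eps)(|sq| + \dist(q,w)) = (1+\eps)\dist(s,w)$, because $q$ lies on a shortest path from $s$ to $w$. When $q = w$ the last term vanishes and the bound follows directly from part (2).

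The step I expect to need care is exactly this reuse: the statement of Lemma~\ref{lem:path_via_cone_neighbor} bounds a concatenated path, whereas I need the numeric inequality $|su| + (1+\eps)\dist(u,q) \leq (1+\eps)|sq|$. I would justify it by noting that the lemma's conclusion holds for a path $\gamma$ of length exactly $(1+\eps)\dist(u,q)$. Such a path can always be realized by adding a small detour, or one can simply read the inequality off the displayed chain in its proof. A second point to check is the degenerate case $u = q$, which is consistent with the same bound.
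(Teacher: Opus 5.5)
Your proof is correct and is essentially the paper's argument. You take the first vertex $v_1$ of $\geod(s,w)$ and the minimal cone neighbor $u\in N(s)$ in a cone containing $v_1$, and combine Clarkson's bound (Lemma~\ref{lemm:Clarkson}) with the inequality $\dist(s,u)+(1+\eps)\dist(u,v_1)\leq(1+\eps)\dist(s,v_1)$, which the paper also reads off the chain in the proof of Lemma~\ref{lem:path_via_cone_neighbor}. The only difference is cosmetic: you apply Lemma~\ref{lemm:Clarkson} once to the pair $u,w$ and then use the geodesic triangle inequality at $v_1$, whereas the paper splits $\dist_G(u,w)\leq\dist_G(u,v_1)+\dist_G(v_1,w)$ and applies Lemma~\ref{lemm:Clarkson} to each part.
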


\begin{proof}
  Consider the shortest obstacle avoiding path $v_1, v_2, \dots v_t$ from $s$ to $w$, where $w = v_t$. Let $C_s$ a cone of the cone family centered at $s$ in which $v_1$ is visible. Let $u$ be the vertex with minimum cone distance to $s$ in $C_s$ and therefore $u\in N(s)$. Per Lemma~\ref{lem:path_via_cone_neighbor} we may conclude that
  \begin{align*}
    \dist_{G[s]}(s,w) &\leq \dist(s, u) + \dist_G(u, w)\\
    &\leq \dist(s, u) + \dist_G(u, v_1) + \dist_G(v_1, w)\\
    &\leq \dist(s, u) + (1 + \varepsilon)\dist(u,v_1) + (1 + \varepsilon)\dist(v_1, w)\tag{Lemma~\ref{lemm:Clarkson}}\\
    &\leq (1 + \varepsilon)\dist(s, v_1) + (1 + \varepsilon)\dist(v_1, w)\tag{Lemma~\ref{lem:path_via_cone_neighbor}}\\
    &= (1 + \varepsilon)\dist(s, w). \qedhere
  \end{align*}
\end{proof}

\subsection{Continuous $\eps$-approximation graphs}
\label{sub:Continuous_eps-approximation_graphs}

Let $Q$ be a shortest path between two vertices in $P$. Our goal is to
define and compute, for each vertex $v$ of $P$, a small set of
weighted anchor points $A_Q(v)$ on $Q$, so that for any point $q$ on
$Q$ there is an obstacle avoiding path between $q$ and $v$ of length
at most $(1+\eps)\dist(q,v)$ via one of these anchor points. More
specifically, let $\hat{\dist}_a$ be the weight of an anchor point
$a$. We then want that there is an obstacle avoiding path
$\hat{\geod}_a$ from $v$ to $a$ of length $\hat{\dist}_a$, and that
$\displaystyle \dist_G(v,A_Q(v),q) := \min_{a \in A_Q(v)} \dist_G(a,q) +
\hat{\dist}_a$ is at most $(1+\eps)\dist(q,v)$.


Thorup~\cite{Thorup2007} argues that such a set of $O(k)$ anchor
points exists. Although his argument (Lemma 5 in his paper) is
constructive, it is unclear how to efficiently construct such a set of
anchor points. So, he constructs a set of anchor points with respect
to Clarkson's cone graph instead. In particular, let $G=(V,E)$ be a
cone graph constructed with parameter $\eps_1 = \eps/3$. Thorup
defines the graph $G'$ by subdividing every edge in $G$ that intersects $Q$ (at the intersection point), and connecting
consecutive vertices along $Q$, and proves

\begin{lemma}[Thorup~\cite{Thorup2007}]
  \label{lem:thorup_anchor_points}
  Let $H=(V,E)$ be an undirected weighted graph, let $Q$ be a shortest
  path in $H$, and let $k' \in \mathbb{N}$ be a parameter. For each
  vertex $u \in H$, we can compute a set $A_Q(u) \subseteq V$ of
  $O(k')$ discrete anchor points on $Q$, so that for any vertex
  $v \in Q$ we have
  \begin{align*}
    \dist_H(u,v) \leq \dist_H(u,A_Q(u),v) \leq (1 + 1/k')\dist_H(u,v).
  \end{align*}
  Computing all sets of anchor points takes a total of
  $O\left(k'\log |Q|\left(|V|\log|V| + |E|\right)\right)$ time.
\end{lemma}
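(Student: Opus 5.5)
We follow Thorup's portal construction: define the anchor points $A_Q(u)$ by a greedy sweep along $Q$ and bound the error with a geometric charging argument. Parametrize $Q$ by arc length, so $Q$ is a path $q_0,q_1,\dots,q_L$ in $H$. Since $Q$ is a shortest path in $H$, we have $\dist_H(q_i,q_j)=|Q[q_i,q_j]|$.

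Fix $u$. Let $c$ be the vertex of $Q$ closest to $u$ and set $\delta=\dist_H(u,c)$. Every $v\in Q$ then satisfies $\dist_H(u,v)\ge\delta$, and the triangle inequality gives $\dist_H(u,v)\ge |Q[c,v]|-\delta$.

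Process the two halves of $Q$ on either side of $c$ separately; consider one side. Put $c$ into $A_Q(u)$. Walk outward from $c$. Whenever the current vertex $v$ satisfies $\dist_H(u,\text{last anchor},v) > (1+1/k')\dist_H(u,v)$, add $v$ as a new anchor with weight $\dist_H(u,v)$. By construction every vertex of $Q$ is then covered within factor $1+1/k'$.

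The lower bound $\dist_H(u,v)\le\dist_H(u,A_Q(u),v)$ holds trivially: each anchor $a$ carries its true weight $\dist_H(u,a)$, and $\dist_H(u,a)+\dist_H(a,v)\ge\dist_H(u,v)$.

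To count anchors, let consecutive anchors be $a_i,a_{i+1}$ at distances $x_i$ from $c$ along $Q$, with $d_i=\dist_H(u,a_i)$. The violation at $a_{i+1}$ gives
\[ d_i + (x_{i+1}-x_i) > (1+1/k')\, d_{i+1}. \]
In particular $d_{i+1}<d_i+(x_{i+1}-x_i)$. Introduce the potential $\Phi_i=d_i-x_i$. The violation yields $\Phi_{i+1}<\Phi_i-d_{i+1}/k'\le\Phi_i-\delta/k'$. We also have $\Phi_i\ge -\delta$ (from $d_i\ge x_i-\delta$) and $\Phi_0=\delta$. So $\Phi$ drops by at least $\delta/k'$ at each step within a range of size $2\delta$. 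This bounds the number of anchors by $O(k')$ per side, hence $O(k')$ overall.

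The main subtlety is the degenerate case $\delta=0$, i.e.\ $u\in Q$. Here $d_{i+1}\ge x_{i+1}-\delta$ still gives a drop of $\Phi$ proportional to $x_{i+1}/k'$. A direct check shows that a single anchor at $u$ suffices, since the distance along $Q$ is exact.

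For the running time, computing all $\dist_H(u,v)$ for every $u$ simultaneously requires recursion, which is where Thorup's $\log|Q|$ factor comes from. Split $Q$ at its middle vertex $m$ and run Dijkstra from $m$, or from a sample of $O(k')$ vertices of the current subpath. Anchors are chosen only for vertices $u$ whose closest portion of $Q$ lies in that subpath. Then recurse on the two halves of $Q$, each in the subgraph where it is relevant. At each of the $O(\log|Q|)$ levels we run $O(k')$ Dijkstra computations over disjoint pieces of total size $O(|V|+|E|)$, giving $O(k'\log|Q|(|V|\log|V|+|E|))$.

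I expect making the greedy choice consistent with these restricted Dijkstra runs to be the main obstacle. Each vertex $u$ must be charged only $O(k')$ anchors per level while the error bounds still combine to $1+1/k'$ overall. To handle this I would build in a slack factor of $1+1/(2k')$ per level, rescaling $k'$ by a constant.
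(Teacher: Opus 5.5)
Your existence argument is correct. With exact distances, the greedy sweep from the closest point $c$ and the potential $\Phi_i=d_i-x_i$ give at most $2k'$ anchors per side, since $\Phi$ drops by more than $\delta/k'$ per new anchor while staying in $[-\delta,\delta]$. This is the same argument as Thorup's Lemma~5 and the paper's Lemma~\ref{lem:prune_connections}.

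The substantive part of this lemma, however, is the running time, and there your proposal has a genuine gap. The greedy inspects $\dist_H(u,v)$ for every $u\in V$ and every $v\in Q$. That is $|V|\cdot|Q|$ values, or one Dijkstra run per vertex of $Q$, which in general already exceeds the bound. Your recursion replaces this by Dijkstra runs from middle vertices. The anchors must then be drawn from those sources, and you give no argument that they cover all of $Q$.

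In fact the rule you state is wrong: $u$ only gets anchors in the subpath containing its closest point, so each level splits into disjoint pieces. Take $Q=q_0,\dots,q_N$ with unit edges and $c=q_{N/2-1}$. Add a vertex $u$ with edges $(u,c)$ of weight $N/8$ and $(u,q_{3N/4})$ of weight $N/8+1$.
\begin{itemize}
\item The detour through $u$ has exactly the length of $Q[c,q_{3N/4}]$, so $Q$ stays a shortest path.
\item $c$ is the unique closest vertex of $Q$ to $u$, so $u$ never enters the right half.
\item Even with the endpoints of $Q$ as extra anchors, the best estimate for $q_{3N/4}$ is $3N/8+1$ (via $q_{N/2}$), while $\dist_H(u,q_{3N/4})=N/8+1$.
\end{itemize}
Placing the shortcut target midway between two consecutive samples, at a smaller scale, defeats your sampling variant in the same way.

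What is missing is the idea that actually produces the $\log|Q|$ factor. You must pass $u$ to \emph{every} subpath on which its existing connections do not yet give a $(1+1/k')$-cover; these include the subpath's endpoints, which were sources of ancestor calls. You then have to show that this happens for only $O(k')$ subpaths per level. With exact distances this follows from a potential argument like yours, applied to $\dist_H(u,q)-x_q$ and $\dist_H(u,q)+x_q$ at the subpath endpoints. So $u$ takes part in $O(k'\log|Q|)$ Dijkstra calls, after which a thinning step as in Lemma~\ref{lem:prune_connections} brings the count back to $O(k')$.

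You would further need an argument that running Dijkstra only in induced subgraphs does not lose accuracy that accumulates over the levels. Your proposed $1+1/(2k')$ slack per level compounds to $(1+1/(2k'))^{\Theta(\log|Q|)}$, which is not $1+O(1/k')$.

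The paper does not reprove any of this. It imports these facts from Thorup's Lemma~10: anchors come from single-source shortest path computations on induced subgraphs of $H$, and each vertex occurs in $O(k'\log|Q|)$ of them. The edge total is bounded via Lemmas~3.12 and~3.18 of Thorup's 2004 paper. The paper then only swaps Thorup's floating-point linear-time algorithm for Dijkstra and sums $\sum_i O(|V_i|\log|V_i|+|E_i|)$.
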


\begin{proof}
  This is a rephrasing of the results in Lemma 10 of
  Thorup~\cite{Thorup2007}, which in turn is based on the result of
  Lemma 3.18 in Thorup~\cite{Thorup2004}. In particular, Lemma 10 of
  Thorup~\cite{Thorup2007} states that (i) the anchor points can be
  computed by invoking a single source shortest path algorithm on
  induced subgraphs of $H$, and (ii) a vertex of $H$ appears in at
  most $O(k'\log |Q|)$ of these calls.

  Lemma 11 of Thorup~\cite{Thorup2007} then plugs in an $O(|V|+|E|)$
  time algorithm to compute single source shortest
  paths~\cite{thorup00float_integ_singl_sourc_short_paths}, provided
  that all edge weights are positive floating point values. However,
  this assumption is incompatible with the usual Real RAM
  assumption. So, we use a regular, comparison based,
  $O(|V|\log|V|+|E|)$ time implementation of Dijkstra's algorithm
  instead.

  Let $H_1,.,.H_t$ be the induced subgraphs graphs of $H$ on which we
  run the single source shortest path computations, and let $V_i$ and
  $E_i$ be the number of vertices and edges in $H_i$,
  respectively. The total running time is then thus
  $\sum_{i=1}^t O(V_i\log V_i + E_i)$. As each vertex of $H$ appears
  in $O(k'\log|Q|)$ subgraphs we have
  $\sum_{i=1}^t O(V_i\log V_i) = O(\log |V| \sum_{i=1}^t V_i) =
  O(k'\log |Q| |V|\log |V|)$. According to the analysis in Lemmas 3.18
  and 3.12 of Thorup~\cite{Thorup2004}, we can similarly bound
  $\sum_{i=1}^t E_i$ by $O(|E|k'\log |Q|)$. The result follows.
\end{proof}

By choosing $1/k'= \eps/3$ and applying Lemma~\ref{lem:thorup_anchor_points} on the
graph $G'$ this then allows him to efficiently estimate the distances
between the \emph{vertices} of $G'$. In particular, if the shortest
path in $G'$ between two vertices $u,v \in G'$ intersects $Q$ then
one can find an $(1+\eps)$-approximation of their distance $\dist(u,v)$
in $P$ using one of the constructed anchor points in $O(k)$ time.

However, we want something stronger: we want an $(1+\eps)$-approximation
for the distance between any vertex in $G$ and \emph{any point} on
$Q$; i.e.~not just the vertices of $Q$ and the intersection points of
$Q$ with edges of $G$. Thorup's construction does not immediately
provide that, see for example Figure~\ref{fig:counterexample_throrup}
for an illustration.


\begin{figure}[tb]
  \centering
  \includegraphics[page=3]{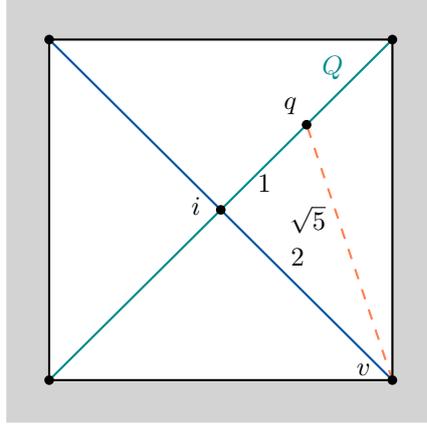}
\caption{An example that shows that there may be points $q$ on $Q$
    (shown in green) for which the graph $G'$ (black, blue and green
    edges, with $i$ subdividing an edge from the visibility graph $G$.) considered by Thorup does not yield a
    $\eps$-approximation. For any $\eps \leq 1/10$, Thorup's anchor
    points give us distance estimate of $3$, while the true distance
    is $\sqrt{5}$. Hence, this is only a
    $3/\sqrt{5} \geq 1.34$-approximation, not an
    $(1+\eps)$-approximation. The issue stems from the fact that $G'$ only consists of (parts of) edges in the visibility graph, which is independent of $\varepsilon$.}
  \label{fig:counterexample_throrup}
\end{figure}

\subparagraph{A continuous graph $\Gr$.} We will instead define a
slightly different graph \Gr that includes additional ``Steiner''
points on $Q$. This \emph{does} allow us to accurately estimate the
distance to \emph{any} point on $Q$, and thus compute appropriate
anchor points. We actually construct the cone graph $G$ with parameter
$\eps_1=\eps/9$, and then extend it into a \emph{continuous graph}
$\Gr = (V^Q, E^Q)$ (see below). We obtain $\Gr$ from $G$ and $Q$ as
follows:

\begin{itemize}
\item We include all vertices of $G$ and of $Q$ as vertices of
  $\Gr$.
\item For each vertex $v\in G$ we consider the cones in the cone
  family centered at $v$. For each cone boundary $\rho$, we consider
  the first intersection point $q$ of $\rho$ with $Q$; if the
  resulting line segment $\overline{vq}$ lies inside $P$ (i.e. the ray
  does not intersect the boundary of $P$ before reaching $q$), we add
  $q$ as a vertex, and $(v,q)$ as an edge. Let $X_Q(v)$ denote the set
  of vertices added by $v$ in this way. See
  Figure~\ref{fig:anchor_points}(a) for an illustration.
\item We add all edges of $G$, and
\item we connect each vertex in $V^Q$ on $Q$ to the next
  vertex in $V^Q$ in the order along $Q$.
\end{itemize}

Observe that $\Gr$ still has size $O(nk)$; as each vertex in $G$ adds
$O(k)$ additional vertices and edges, and $Q$ also has at most $O(n)$
vertices. It is straightforward to construct $\Gr$ from $G$ in
$O(nk\log n)$ time, using data structures for fixed-directional ray
shooting
queries~\cite{sarnak86planar_point_locat_using_persis_searc_trees}.

\begin{lemma}
  \label{lem:construct_graph}
  In $O(nk\log n)$ time we can construct a data structure of size
  $O(nk)$ so that for any point $s \in P$ we can compute the extra
  vertices $X_Q(s)$ in $O(k\log n)$ time.
  Hence, \Gr can be constructed in $O(nk\log n)$ time.
\end{lemma}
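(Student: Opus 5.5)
To compute $X_Q(s)$ we need, for each of the $O(k)$ cone boundary rays $\rho$ from $s$, the first point where $\rho$ meets $Q$, and a check that $\rho$ does not hit $\partial P$ before that point. The directions of all such rays come from the fixed family $\mathcal{F}_\eps$, so there are only $O(k)$ distinct directions. Each test is therefore a fixed-direction ray shooting query, which we answer with the persistent point-location structure of Sarnak and Tarjan~\cite{sarnak86planar_point_locat_using_persis_searc_trees}.

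\textbf{Data structure.} For each direction $\theta$ among the $O(k)$ cone boundary directions, rotate the plane so that $\theta$ points upward. We then build a vertical-ray-shooting (trapezoidal decomposition / point location) structure on the set of segments consisting of the edges of $P$ together with the edges of $Q$. This is a set of $O(n)$ non-crossing segments: $Q$ is a shortest path inside $P$, so its edges only touch $\partial P$ at vertices. Each structure has size $O(n)$ and is built in $O(n\log n)$ time by a sweep with persistent search trees. Over all $O(k)$ directions this gives $O(nk)$ space and $O(nk\log n)$ preprocessing.

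\textbf{Query.} Given $s$, for each boundary direction $\theta$ we shoot the ray from $s$ in direction $\theta$ in the structure for $\theta$, in $O(\log n)$ time. The answer is the first segment hit.
\begin{itemize}
\item If that segment is an edge of $Q$, the hit point is the first intersection $q$ of $\rho$ with $Q$, and $\overline{sq}\subseteq P$, so we add $q$ to $X_Q(s)$.
\item If it is an edge of $P$, either the ray leaves $P$ before meeting $Q$ or it never meets $Q$. In both cases nothing is added.
\end{itemize}
Ties need some care: the ray may pass through a vertex shared by $Q$ and $\partial P$, or the segment may touch $\partial P$ tangentially at a reflex vertex without leaving $P$. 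We resolve these by a consistent convention that treats $Q$ as hit first when its edge and a polygon edge are hit at the same point, which matches the definition via "the segment lies inside $P$". The total query time is $O(k\log n)$.

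\textbf{Building $\Gr$.} Run the query for every vertex $v$ of $G$, for $O(nk\log n)$ total. Add the resulting points and edges, together with the $O(nk)$ edges of $G$, which Lemma~\ref{lemm:Clarkson} computes in $O(nk\log n)$ time. Then sort the $O(nk)$ points on $Q$ along $Q$ and link consecutive ones. To sort, record for each point the edge of $Q$ it lies on (returned by the ray shot) and its position along that edge, then sort these $O(nk)$ keys in $O(nk\log(nk))=O(nk\log n)$ time, assuming $k\le n$; otherwise this sort costs $O(nk\log(nk))$ rather than $O(nk\log n)$.

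The main obstacle is not algorithmic but the degenerate-case handling above. We have to make sure that the rule "first intersection with $Q$, and $\overline{vq}\subseteq P$" is implemented exactly by a single ray shot against $\partial P\cup Q$. This relies on $Q$ lying inside $P$ and being non-self-crossing, so that $\partial P\cup Q$ is a valid planar subdivision for the point-location structure.
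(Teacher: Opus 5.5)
Your data structure and query procedure are essentially the paper's. You use one fixed-direction ray-shooting structure per cone-boundary direction, built by a Sarnak--Tarjan sweep, with each shot answered in $O(\log n)$ time. The only difference is that the paper keeps separate structures on the edges of $P$ and on the edges of $Q$ and compares the two answers, whereas you shoot once against $\partial P\cup Q$. That is equivalent, but your justification that the edges of $Q$ ``only touch $\partial P$ at vertices'' is not right. A shortest path wrapping around a convex hole runs along its edges, so you get overlapping collinear segments. You must deduplicate these and label the shared pieces as $Q$-edges, which is what your tie-breaking rule amounts to anyway.

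The genuine gap is the final sorting step. A comparison sort of the $O(nk)$ points on $Q$ costs $O(nk\log(nk))$, which is $O(nk\log n)$ only when $\log k=O(\log n)$. The lemma claims $O(nk\log n)$ with no restriction on $\eps$. The paper also deliberately keeps $\log(kn)$ factors separate elsewhere (Lemma~\ref{lem:anchor_points_for_vertices}, Theorem~\ref{thm:oracle}), so ``$k\le n$'' is not an available assumption.

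The missing idea is that the points need not be sorted from scratch. For each $v$, the points of $X_Q(v)$ are produced in the angular order of the cone boundaries, and the paper observes that this order agrees with their order along $Q$. So you have $n$ presorted lists of length $O(k)$, and merging them with a heap costs $O(nk\log n)$.

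If you would rather not rely on that global consistency claim, you can get the same bound within your own framework:
\begin{itemize}
\item Bucket the points by the edge of $Q$ they lie on. There are $O(n)$ buckets, so this takes linear time.
\item On a single straight edge, the points contributed by a fixed $v$ are already sorted along that edge by their angular order around $v$, up to a reversal and one cyclic rotation that you can detect in linear time.
\item Merge the at most $n$ per-vertex sublists in each bucket with a heap, at $O(\log n)$ per point.
\end{itemize}
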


\begin{proof}
  For each of the $O(k)$ cone directions we rotate the plane so that the
  cone direction is vertical, and store a vertical ray shooting query
  structure on (the edges of)
  $P$~\cite{sarnak86planar_point_locat_using_persis_searc_trees}. We
  similarly store the edges of $Q$ so that we can test for intersection
  with a vertical line segment. 
  In total this takes $O(kn\log n)$ time, and $O(nk)$ space. For each
  vertex $v \in G$ we can now compute $X_Q(v)$ in $O(k\log n)$ time:
  we perform two ray shooting queries from $v$ to find the maximal
  line segment in $P$, and the first intersection of this segment with
  $Q$. Sorting the vertices along $Q$ takes $O(nk\log n)$ time in
  total, as for every vertex $v$ the order of the points in $X_Q(v)$
  along $Q$ is consistent with the order on the orientation of the
  cone boundaries. So we can just merge-sort these $O(n)$ sorted lists
  in $O(nk\log n)$ time. We can compute all edge lengths in $O(nk)$
  time in total.
\end{proof}

Next, we argue that $\Gr$ indeed allows us to
$\eps_1$-approximate the distances. We actually interpret
$\Gr$ as a \emph{continuous
  graph}~\cite{cabello25algor_distan_probl_contin_graph}. Every edge
$(u,v) \in E^Q$ is a line segment $\overline{uv}$ in $P$, so
we can extend the (graph) distance $\dist_{\Gr}$ to points in
the interior of edges. In particular, for points $p,q$ on the same
edge $\overline{uv}$ we define $\dist_{\Gr}(p,q) = \|p-q\|$
to simply be their Euclidean distance, and for
$p \in \overline{uv} \in E^Q$ and
$q \in \overline{wz} \in E^Q$ in the interior of different
edges we have
$\dist_{\Gr}(p,q) = \min_{u' \in \{u,v\}}\min_{w' \in
  \{w,z\}} \|p-u'\| + \dist_{\Gr}(u',w') + \|w'-q\|$. We are only interested in points in the interior of
edges of $Q$. Hence, we prove:

\begin{figure}[tb]
  \centering
  \includegraphics[page=2]{anchor_points}
  \qquad
  \includegraphics{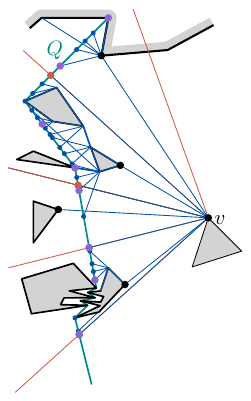}
  \caption{(a) The minimal cone neighbors $N(v)$ of vertex $v$, and
    the additional vertices $X_Q(v)$ (brown). (b) We compute a set of
    anchor points $A_Q(v)$ (purple) on $Q$ based on $\Gr$ (partially drawn).}
  \label{fig:anchor_points}
\end{figure}

\begin{lemma}
  \label{lem:vertex_distances}
  Let $u,v \in V$ be two polygon vertices in $\Gr$. Then we
  have that
    \begin{align*}
      \dist(u,v)\leq \dist_{\Gr}(u,v) \leq (1+\eps_1)\dist(u,v).
    \end{align*}
\end{lemma}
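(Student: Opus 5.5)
The plan is to sandwich $\dist_{\Gr}(u,v)$ between $\dist(u,v)$ and $\dist_G(u,v)$, and then apply Lemma~\ref{lemm:Clarkson} to the cone graph $G$, which was built with parameter $\eps_1$.

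For the upper bound, I use that $\Gr$ contains all vertices and all edges of $G$, with the same (Euclidean) edge lengths. The construction only adds vertices on $Q$ and edges incident to them, or edges along $Q$; no edge of $G$ is removed or lengthened. Hence every path in $G$ between $u$ and $v$ is also a path in $\Gr$ of the same length, so $\dist_{\Gr}(u,v) \leq \dist_G(u,v)$. Lemma~\ref{lemm:Clarkson}, applied with parameter $\eps_1$, then gives $\dist_G(u,v) \leq (1+\eps_1)\dist(u,v)$.

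For the lower bound, I check that every edge of $\Gr$ is a segment contained in $P$ whose weight is its Euclidean length. Edges of $G$ join mutually visible points, by the definition of minimal cone neighbours. Each added edge $(v,q)$ is included only if $\overline{vq}$ lies inside $P$. Each edge between consecutive vertices on $Q$ is a subsegment of $Q$, and $Q$ is a shortest path in $P$, hence a polygonal path inside $P$. Consecutive vertices along $Q$ may lie on different edges of $Q$ (they are separated by a vertex of $Q$ only if that vertex is itself in $V^Q$). Since all vertices of $Q$ are included in $V^Q$, the connecting edges are straight subsegments of $Q$. A shortest path in $\Gr$ from $u$ to $v$ therefore maps to an obstacle-avoiding polygonal path in $P$ of equal length, which gives $\dist(u,v) \leq \dist_{\Gr}(u,v)$.

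The argument is short, and the only point needing care is the one just noted: confirming that the "next vertex along $Q$" edges really are straight segments inside $P$. This holds because $V(Q) \subseteq V^Q$. No use of Lemma~\ref{lem:path_via_cone_neighbor} is needed here; that lemma will matter only for non-vertex points on $Q$.
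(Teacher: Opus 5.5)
Your proposal is correct and follows the paper's proof. The upper bound comes from $G$ being a subgraph of $G^Q$ together with Lemma~\ref{lemm:Clarkson} for parameter $\varepsilon_1$, and the lower bound from every edge of $G^Q$ joining mutually visible points, so paths in $G^Q$ are obstacle-avoiding paths in $P$; your extra check that the edges along $Q$ are straight subsegments, because all vertices of $Q$ lie in $V^Q$, is a detail the paper leaves implicit.
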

\begin{proof}
  By Lemma~\ref{lemm:Clarkson} the distance $\dist_G(u,v)$ between any
  pair of vertices $u,v \in G$ is at least $\dist(u,v)$ and at most
  $(1+\eps_1)\dist(u,v)$. Our construction adds only edges between
  pairs of (possibly new) vertices $(a,b)$ that are mutually visible,
  and thus have distance $\dist(a,b)$. It thus follows that for any
  pair of polygon vertices $u,v$ their distance in $\Gr$ is
  still at least $\dist(u,v)$ and at most $(1+\eps_1)\dist(u,v)$ as
  claimed.
\end{proof}

\begin{lemma}
  \label{lem:continuous-Q-graph_vertex_exists}
  Let $v$ be a vertex of $\Gr$, let $q \in Q$ be a point
  visible from $v$, and let $C$ be the cone (from the cone family
  whose cones have $v$ as apex) that contains $q$. There is a neighbor
  $u$ of $v$ in $\Gr$ that: (i) lies in $C$, and (ii) has
  (cone) distance less than or equal to that of $q$.
\end{lemma}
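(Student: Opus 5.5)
Our plan is a case distinction on whether some polygon vertex visible from $v$ lies in $C_v$ strictly closer (in cone distance) than $q$. If so, the minimal cone neighbor of $v$ for $C$ is adjacent to $v$ in $G$, hence in $\Gr$, and it settles the lemma. If not, we will show that the point that $v$ added to $X_Q(v)$ on a boundary ray of $C$, or a vertex of $Q$ inside $C_v$, is a suitable neighbor.

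\emph{Case 1.} Suppose some polygon vertex $w$ is visible from $v$, lies in $C_v$, and satisfies $d^C(v,w)\le d^C(v,q)$. Then the minimal cone neighbor $u$ of $v$ with respect to $C$ exists and satisfies $d^C(v,u)\le d^C(v,w)\le d^C(v,q)$. Since $u\in N(v)$, the edge $(v,u)$ lies in $G$, and every edge of $G$ is also an edge of $\Gr$. This gives (i) and (ii) directly. If $v$ is a Steiner vertex on $Q$ rather than a polygon vertex, then it has no cone edges. In that case we will instead use its neighbors along $Q$; see the last paragraph.

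\emph{Case 2.} Suppose no visible polygon vertex in $C_v$ has cone distance at most $d^C(v,q)$. Consider the triangle $T$ cut off from $C_v$ by the line through $q$ perpendicular to $r_C$. Points of $C_v$ with cone distance at most $d^C(v,q)$ lie in $T$. We will apply Observation~\ref{obs:triangle_third_intersection} in the same way as in the proof of Lemma~\ref{lem:path_via_cone_neighbor}: since $q$ is visible from $v$ and no vertex inside $T$ is visible from $v$, the region of $T$ bounded by $\overline{vq}$ and a boundary ray $\rho$ of $C_v$ meets no part of the boundary of $P$.

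Now follow $Q$ from $q$ within that region toward $\rho$. Because $Q$ is a shortest path between two vertices that lie outside the obstacle-free region (or on its boundary), $Q$ must leave the region. It cannot cross $\overline{vq}$ again except at $q$ itself, since shortest paths intersect segments in connected pieces. So $Q$ either reaches $\rho$ or has a vertex (bend) inside the region. In the first subcase, the first intersection point of $\rho$ with $Q$ is a point of $X_Q(v)$, because the segment up to it lies inside $P$. That point lies in $C$ with cone distance at most $d^C(v,q)$, which holds since it is inside $T$, and it is adjacent to $v$. In the second subcase, bends of a shortest path occur only at polygon vertices. Such a bend would be a visible polygon vertex inside $T$, contradicting the assumption of Case 2.

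\emph{Main obstacle.} The delicate step is the first intersection of $\rho$ with $Q$. It might occur at a cone distance larger than $d^C(v,q)$, beyond $T$, or $Q$ might cross $\rho$ first at a place where the segment is blocked. We will handle this by choosing the boundary ray on the side toward which $Q$ proceeds within $T$, and arguing that this region is obstacle-free. This forces the first crossing of $\rho$ with $Q$ to lie in $T$.

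The case where $v$ itself lies on $Q$ (a Steiner vertex or a vertex of $Q$) is simpler. Its consecutive neighbors along $Q$ are adjacent to it in $\Gr$, and since $q\in Q$ is visible, the subpath of $Q$ from $v$ toward $q$ is the straight segment $\overline{vq}$ as far as the next vertex. That next vertex lies on $\overline{vq}$, hence in $C$, with cone distance at most $d^C(v,q)$.
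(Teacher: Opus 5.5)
\textbf{There is a gap in Case 2.} Your overall route is the paper's. You walk along the edge $\overline{u_0u_1}$ of $Q$ containing $q$ towards the endpoint $u_0$ with $d^C(v,u_0)\le d^C(v,q)$. The walk either stays in $C_v$ up to $u_0$, or it meets a bounding ray $\rho$, where the Steiner point of $X_Q(v)$ serves as the neighbor. Your Case 1/Case 2 split is just the contrapositive form of the paper's case analysis.

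The geometric claim on which both subcases of Case 2 rest is false, however. It is not true that, when no vertex of $T$ is visible from $v$, the part of $T$ between $\overline{vq}$ and $\rho$ avoids $\partial P$. Here is a counterexample:
\begin{itemize}
\item An edge of a hole enters $T$ through the side of $T$ on the line through $q$ perpendicular to $r_C$ (strictly between $q$ and $\rho$).
\item It leaves through $\rho$, cutting off the corner of that region at $\rho$.
\item Both of its endpoints lie outside $T$.
\end{itemize}
Then $\overline{vq}$ is free and $T$ need contain no vertex at all, yet the region meets $\partial P$. This can happen even when $Q$ proceeds from $q$ towards that very $\rho$. Consequently neither ``the segment up to it lies inside $P$'' nor ``such a bend would be a visible polygon vertex'' is justified. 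Your plan in the last-but-one paragraph to argue that ``this region is obstacle-free'' cannot be carried out as stated.

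\textbf{The fix}, which is exactly the paper's argument, is to apply Observation~\ref{obs:triangle_third_intersection} only to the triangle that $Q$ actually sweeps.
\begin{itemize}
\item Let $y$ be the first point of $\overline{qu_0}$ on a bounding ray $\rho$ of $C_v$, or $y=u_0$ if $\overline{qu_0}\subseteq C_v$.
\item Then $\overline{vq}\subseteq P$ by visibility, $\overline{qy}\subseteq Q\subseteq P$, and $\Delta vqy\subseteq T$ by convexity.
\item Suppose $\overline{vy}\not\subseteq P$. The observation with $a=q$, $b=v$, $c=y$ gives a polygon vertex in $\Delta vqy$ visible from $v$, contradicting Case~2. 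Hence $\overline{vy}\subseteq P$.
\item Then $y=u_0$ is also excluded by Case~2, since $u_0$ would be a visible polygon vertex in $T$.
\item For $y\in\rho$, the first intersection of $\rho$ with $Q$ lies on $\overline{vy}$. So it belongs to $X_Q(v)$ and has cone distance at most $d^C(v,q)$.
\end{itemize}
This matches the paper's triangle with corners $i',v,q$, where $i'\in\{i,u_0\}$.

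\textbf{Drop your last paragraph.} If $q$ lies in the interior of the $\Gr$-edge of $Q$ incident to $v$, the next vertex along $Q$ lies beyond $q$ and has larger cone distance, so your claim fails. In fact, for a pure Steiner vertex $v$ the lemma itself can fail. Its only neighbors are the vertices that generated it and its two neighbors along $Q$. The lemma is only needed, and only proved by the paper, for points that have cone edges. The case $v\in Q$ is handled separately as trivial in Lemma~\ref{lem:continous-Q-graph-distance}. For a polygon vertex lying on $Q$, your Cases~1 and~2 already apply.
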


\begin{proof}
  Since $q$ lies on $Q$ it must lie on a line segment between two
  vertices $u_0$ and $u_1$ of the polygon. Without loss of generality
  assume that $u_0$ has cone distance less than or equal to that of
  $q$. See Figure~\ref{fig:neighbor_exists} for illustration.

  If $u_0$ lies in $C$ and is visible from $v$ then the lemma
  statement holds as $v$ must have a minimal cone neighbor in $C$
  whose cone distance is at most that of $u_0$.

  \begin{figure}[tb]
    \centering
    \includegraphics{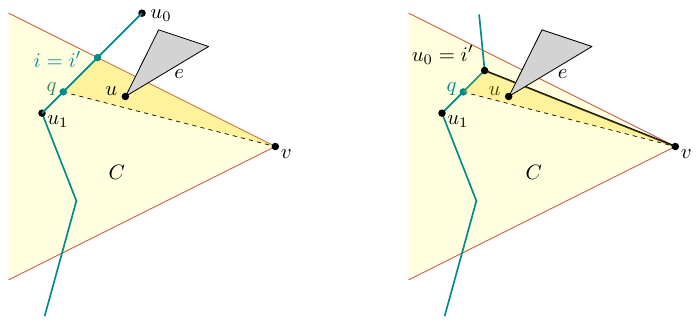}
    \caption{Point $q$ must lie on some edge $\overline{u_0u_1}$ of
      $Q$. We argue that there is a neighbor $u$ of $v$ with smaller
      cone distance than $q$.  }
    \label{fig:neighbor_exists}
  \end{figure}

  If on the contrary $u_0$ lies outside of $C$, then we consider the
  intersection $i$ between the line segment $\overline{u_0q}$ and one
  of the rays $\rho$ bounding the cone $C$. If this intersection point
  is visible then $i$ is a vertex in $\Gr$, $v$ has an edge
  to $i$ and $i$ is closer than $q$, so the lemma statement
  holds. Assume on the contrary that there exists an edge of the
  polygon that blocks visibility between $v$ and $i$. In particular,
  let $e$ be the first edge of $P$ intersected by $\rho$ (i.e. closest
  to $v$), and let $i' :=i$.

  Similarly, if $u_0$ does lie inside $C$ but is invisible, then there
  must be a first edge $e$ intersected by the line segment between $v$
  and $i' := u_0$.

  Per definition the line segments $\overline{qv}$ and
  $\overline{i'q}$ are not intersected by any edges of the polygon. So
  $e$ must have a visible endpoint inside of the triangle with corners
  $i'$, $v$, and $q$ (this is similar to
  Observation~\ref{obs:triangle_third_intersection}). This endpoint is
  a cone neighbor of $v$, and has cone distance at most that of
  $q$. The lemma statement now follows.
\end{proof}

\begin{lemma}
  \label{lem:continous-Q-graph-distance}
  For any point $q \in Q$ and any vertex $v \in
  \Gr$ we have that
    \begin{align*}
      \dist(v,q)\leq \dist_{\Gr}(v,q) \leq (1+\eps_1)\dist(v,q).
    \end{align*}
\end{lemma}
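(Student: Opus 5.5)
The lower bound $\dist(v,q)\leq \dist_{\Gr}(v,q)$ is immediate. Every edge of $\Gr$, including the subsegments of $Q$, is a segment between mutually visible points. So any path in $\Gr$ ending at a point $q$ in the interior of an edge of $Q$ is an obstacle avoiding path in $P$, and its length is at least $\dist(v,q)$.

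For the upper bound I would mimic the proof of Lemma~\ref{lem:path_via_cone_neighbor_exists}, using induction on the number of bends of the shortest path $\geod(v,q)$.

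\textbf{Base case: $q$ is visible from $v$.} Let $C$ be the cone at $v$ containing $q$. Among the neighbors of $v$ in $C$ with cone distance at most $d^C(v,q)$, which exist by Lemma~\ref{lem:continuous-Q-graph_vertex_exists}, I take one, $u$, of minimal cone distance. I then want to apply Lemma~\ref{lem:path_via_cone_neighbor} with $s:=v$ and the pair $u,q$. Condition (iii) needs care: no visible polygon vertex in $C$ may be strictly closer than $u$. This holds because the minimal cone neighbor of $v$ in $C$ is itself a neighbor of $v$ in $\Gr$, so $u$ has cone distance at most that of the minimal cone neighbor.

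Part (1) of Lemma~\ref{lem:path_via_cone_neighbor} then gives $\dist(u,q)\leq \dist(v,q)$. Part (2) reduces the claim for $(v,q)$ to the claim for $(u,q)$ with $(1+\eps_1)$-quality, since $\dist_{\Gr}(v,q)\leq \|v-u\|+\dist_{\Gr}(u,q)$.

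There are two cases for $u$. If $u$ lies on $Q$, then $u$ and $q$ both lie on $Q$, and $Q$ is a shortest path, so $\dist_{\Gr}(u,q)=\dist(u,q)$ along $Q$'s edges and we are done. Otherwise $u$ is a polygon vertex. For it I recurse on the pair $(u,q)$, which has strictly smaller distance to $q$ in the sense of part (1).

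\textbf{General case: $q$ is not visible from $v$.} Let $w$ be the last polygon vertex on $\geod(v,q)$, so $q$ is visible from $w$. Then $\dist_{\Gr}(v,q)\leq \dist_{\Gr}(v,w)+\dist_{\Gr}(w,q)\leq(1+\eps_1)\dist(v,w)+(1+\eps_1)\dist(w,q)$. The first term uses Lemma~\ref{lem:vertex_distances} (extended to $v$ being a vertex on $Q$ if needed), and the second uses the visible case. Their sum is $(1+\eps_1)\dist(v,q)$.

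\textbf{Main obstacle: termination of the recursion in the visible case.} The recursion on $(u,q)$ must terminate. I would order the pairs by $\dist(\cdot,q)$ over the finitely many vertices of $\Gr$, but part (1) of Lemma~\ref{lem:path_via_cone_neighbor} only gives $\le$, not strict decrease. The proof of that lemma actually yields $\dist(u,q)\leq \|u-m\|+\|m-q\|<\|v-m\|+\|m-q\|=\dist(v,q)$ whenever $u\neq v$, since $\|u-m\|\leq \eps\|v-m\|/4$. So a strong induction over the finitely many vertices, sorted by distance to $q$, works.

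A second subtlety is that $u$ may not be visible from $q$, so the "visible case" can fail for $(u,q)$. For this reason I would phrase the induction over all vertices $v$, not just visible ones, sorted by $\dist(v,q)$. In the non-visible case the last vertex $w$ of $\geod(v,q)$ has $\dist(w,q)<\dist(v,q)$, and $\dist(v,w)$ is handled by Lemma~\ref{lem:vertex_distances} when $v\in V$. If $v$ is a Steiner vertex on $Q$, we again have $\dist_{\Gr}(v,q)=\dist(v,q)$ along $Q$.
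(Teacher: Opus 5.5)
Your proof is correct and follows the paper's own argument. Both use induction on $\dist(\cdot,q)$ over the vertices of $\Gr$. Both handle $v\in Q$ trivially and split $\geod(v,q)$ at a polygon vertex using Lemma~\ref{lem:vertex_distances}. In the visible case, both combine Lemma~\ref{lem:continuous-Q-graph_vertex_exists} with both conclusions of Lemma~\ref{lem:path_via_cone_neighbor}. Drop your opening mention of induction on the number of bends; the distance induction you end with is the one that works.

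You also make explicit two points the paper's proof glosses over. First, choosing $u$ of minimal cone distance justifies condition (iii). If the minimal cone neighbor lies beyond $d^C(v,q)$, (iii) holds trivially. Second, you extract the strict decrease $\dist(u,q)<\dist(v,q)$ from $\|u-m\|\leq \eps_1\|v-m\|/4$. The paper needs this strictness, since it invokes the hypothesis only for strictly smaller distances while conclusion (1) only gives $\leq$.
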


\begin{proof}
  Every edge of $\Gr$ corresponds to a line segment that lies
  inside $P$. Hence, any path in $\Gr$ is obstacle
  avoiding. It then also follows that for any pair
  $q, v \in \Gr$ we have
  $\dist(v,q)\leq \dist_{\Gr}(v,q)$.

  We prove the upper bound on $\dist_{\Gr}(v,q)$ using
  induction on the true distance to $q$. Let $v\in V^Q$ be a
  vertex and assume that the lemma statement holds for all vertices in
  $\Gr$ with a strictly smaller distance to $q$ than
  $\dist(v,q)$. For our base case there are no such vertices which
  means our assumption holds. If $v$ lies on $Q$ then the statement is
  trivially true. Otherwise, consider the shortest path from $v$ to
  $q$ in $P$. This path either passes through a vertex $w$ of the
  polygon or it is a line segment $\overline{qv}$.

  In the first case we use Lemma~\ref{lem:vertex_distances} to get
  that $\dist_{\Gr}(v,w) \leq (1+\eps_1)\dist(v,w)$ and the
  induction hypothesis (on the distance between $w$ and $q$) to
  obtain:
  \begin{align*}
    \dist_{\Gr}(v,q) &\leq \dist_{\Gr}(v,w)+\dist_{\Gr}(w,q)\\
                             &\leq (1+\eps_1)\dist(v,w) + (1+\eps_1)\dist(w,q)\\
                             &\leq (1+\eps_1)\dist(v,w) + (1+\eps_1)\dist(w,q)\\
                             &= (1 + \eps_1)\dist(v,q).
  \end{align*}

  In the second case the shortest path from $v$ to $q$ is a line
  segment, so consider the cone $C$ with apex at $v$ in which $q$ is
  visible. By Lemma~\ref{lem:continuous-Q-graph_vertex_exists} there
  exists a neighbor $u$ of $v$ in $\Gr$ that (i) $u$ also
  lies in cone $C$, and (ii) has (cone) distance less than or equal to
  that of $q$, in particular (iii) the neighbor $u$ has minimum cone distance of all vertices of $P$. This allows us to use Lemma~\ref{lem:path_via_cone_neighbor}, conclusion $(1)$ gives us that $\dist(u,q) \leq d(v,q)$. This allows us to use the induction hypothesis to obtain that
  $\dist_{\Gr}(u,q) \leq (1 + \eps_1)\dist(u,q)$. From conclusion $(2)$ of Lemma~\ref{lem:path_via_cone_neighbor}, it additionally follows that
  $\dist_{\Gr}(v,q) \leq (1 + \eps_1)\dist(v,q)$.
\end{proof}

\subparagraph{Computing and storing anchor points.} We can now compute
the distances in $\Gr$, and use them to construct an
appropriate set of anchor points. See
Figure~\ref{fig:anchor_points}(b). Due to
Lemma~\ref{lem:continous-Q-graph-distance} we can now actually just reuse
Thorup's approach (i.e. Lemma~\ref{lem:thorup_anchor_points}) again:


\begin{lemma}
  \label{lem:anchor_points_for_vertices}
  For every vertex $v$ of $P$, we can compute a set $A_Q(v)$ of $O(k)$
  anchor points on $Q$ such that for any point $q$ on $Q$ we have
  \begin{align*}
    \dist(v,q) \leq \dist_{G^Q}(v,A_Q(v),q) \leq (1 + \eps)\dist(v,q).
  \end{align*}
  This takes $O(nk^2\log n\log(kn))$ time in total.
\end{lemma}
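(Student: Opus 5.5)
The plan is to apply Lemma~\ref{lem:thorup_anchor_points} to the graph $H=\Gr$, with the shortest path in $H$ taken to be the path along $Q$ through the vertices of $V^Q$ on $Q$, and with $1/k'=\eps_1=\eps/9$. Lemma~\ref{lem:continous-Q-graph-distance} then turns the resulting discrete guarantee into a continuous one.

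First we check that $Q$ is a shortest path in $\Gr$. Every path in $\Gr$ is obstacle avoiding, so $\dist_{\Gr}(x,y)\geq \dist(x,y)$ for any two vertices $x,y$ on $Q$. Since $Q$ is a shortest path in $P$, $\dist(x,y)$ equals the length of $Q[x,y]$. That subpath is realized in $\Gr$ by the edges joining consecutive vertices along $Q$, so it is a shortest path in $\Gr$.

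Next we apply Lemma~\ref{lem:thorup_anchor_points} with $k'=\lceil 9/\eps\rceil = O(k)$. For each polygon vertex $v$ this gives a set $A_Q(v)\subseteq V^Q\cap Q$ of $O(k)$ anchor points. Each $a\in A_Q(v)$ gets weight $\hat{\dist}_a=\dist_{\Gr}(v,a)$, which is realized by an obstacle avoiding path. For every vertex $x$ of $\Gr$ on $Q$ we then have $\dist_{\Gr}(v,A_Q(v),x)\le(1+\eps_1)\dist_{\Gr}(v,x)$.

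To pass to an arbitrary point $q\in Q$, let $q$ lie on the edge $\overline{xy}$ of $Q$ in $\Gr$, where $x,y$ are consecutive vertices. Since $q$ is interior to an edge of the continuous graph, $\dist_{\Gr}(v,q)=\min_{z\in\{x,y\}}\dist_{\Gr}(v,z)+\|z-q\|$; let $z$ attain the minimum. Choose $a\in A_Q(v)$ achieving $\dist_{\Gr}(v,A_Q(v),z)$. Then
\[
\hat{\dist}_a+\dist_{\Gr}(a,q)\le \hat{\dist}_a+\dist_{\Gr}(a,z)+\|z-q\| \le (1+\eps_1)\dist_{\Gr}(v,z)+\|z-q\|\le (1+\eps_1)\dist_{\Gr}(v,q).
\]
By Lemma~\ref{lem:continous-Q-graph-distance} this is at most $(1+\eps_1)^2\dist(v,q)\le(1+\eps)\dist(v,q)$ for $\eps\le 1$. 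The lower bound $\dist(v,q)\le \dist_{\Gr}(v,A_Q(v),q)$ holds because the estimate is the length of an obstacle avoiding path.

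The main point of care is this passage from vertices of $Q$ to interior points of its edges. It works because the continuous-graph distance to an interior point always goes through one of the two endpoints, so the discrete guarantee carries over. A smaller technicality is that Lemma~\ref{lem:thorup_anchor_points} yields anchor points for all vertices of $\Gr$; we simply keep those of the polygon vertices.

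For the running time, $\Gr$ has $|V|,|E|=O(nk)$ by the construction of Lemma~\ref{lem:construct_graph}, which takes $O(nk\log n)$ time, and $|Q|=O(nk)$. Lemma~\ref{lem:thorup_anchor_points} therefore costs
\[
O\bigl(k\log(nk)\,(nk\log(nk)+nk)\bigr)=O(nk^2\log^2(nk)).
\]
If $\log(nk)=O(\log n)$ (e.g.\ $k\le n$), this is the claimed $O(nk^2\log n\log(kn))$. Otherwise the factor $\log|Q|$ is best bounded by $\log n$: the anchor-point construction in Thorup's Lemma~10 recursively halves $Q$ by its original vertices, which number $O(n)$.
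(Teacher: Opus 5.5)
Your proposal is correct and follows the paper's proof essentially step for step. You apply Lemma~\ref{lem:thorup_anchor_points} to $G^Q$ with $1/k'=\eps/9$, extend the guarantee from vertices of $G^Q$ on $Q$ to an arbitrary $q\in Q$ through the vertex via which the $G^Q$-shortest path to $q$ passes, and finish with Lemma~\ref{lem:continous-Q-graph-distance} and $(1+\eps/9)^2\le 1+\eps$.

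You are in fact slightly more careful than the paper in two places. You check that $Q$ is a shortest path in $G^Q$, which the paper does not do explicitly. You also note that $Q$ has $O(nk)$ vertices in $G^Q$, so $\log|Q|=O(\log(nk))$ rather than the $O(\log n)$ the paper silently uses. Your closing claim that Thorup's recursion splits $Q$ only at its original vertices is unsupported, but it only matters when $k$ is superpolynomial in $n$.
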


\begin{proof}
  We construct $\Gr$ using Lemma~\ref{lem:construct_graph} in
  $O(nk\log n)$ time. We then apply
  Lemma~\ref{lem:thorup_anchor_points} with $1/k'=\eps_0 :=\eps/9$ to
  graph $\Gr$ to construct a set of points $A_Q(v)$ for each
  vertex $v$. This thus takes
  $O(k\log n(nk\log(nk) + nk))=O(nk^2\log n\log(nk))$ time. What
  remains is to argue that this set $A_Q(v)$ is an actual set of
  anchor points, i.e. that we can get an $(1+\eps)$-approximation of the
  distance $\dist(v,q)$ for any point $q$ on $Q$.

  \begin{figure}[tb]
    \centering
    \includegraphics{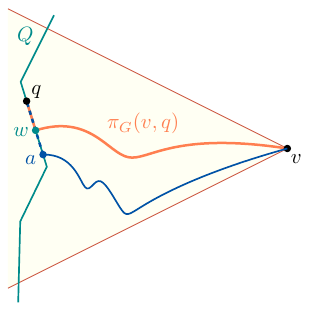}
    \caption{There is a sufficiently short path from $v$ to $q$ using
      anchor point $a \in A_Q(v)$.}
    \label{fig:anchor_points_vertices}
  \end{figure}

  Observe that for every point $q$ on $Q$ there is some vertex $w$ of
  $\Gr$ on $Q$ so that the shortest path from $q$ to $v$ in $\Gr$
  passes through $w$ (possibly $w=q$). By
  Lemma~\ref{lem:thorup_anchor_points}, there is an anchor point
  $a \in A_Q(v)$ for which
  $\dist_A(w,v) = \dist(q,a) + \hat{\dist}_a \leq (1 +
  \eps_0)\dist_{\Gr}(v,w)$. See
  Figure~\ref{fig:anchor_points_vertices} for an illustration. Since
  we can (at least) use the anchor point that is used by $w$ we have
  \begin{align*}
    \dist_{\Gr}(v,A_Q(v),q)&\leq \dist_{\Gr}(v,A_Q(v),w) + \dist(w,q)\\
                 &\leq (1 + \eps_0)\dist_{\Gr}(v,w)+ \dist(w,q).
  \end{align*}
  Then, using that $w$ lies on the shortest path from $v$ to $q$ in
  $\Gr$ (and that $\eps_0 > 0$) it follows that
  \begin{align*}
    \dist_A(v,q)&\leq (1 + \eps_0)\dist_{\Gr}(v,w)+\dist(w,q) \\
                 &\leq (1 + \eps_0)\dist_{\Gr}(v,q)
  \end{align*}
  Finally, Lemma~\ref{lem:continous-Q-graph-distance} gives us that
  the distance $\dist_{\Gr}(q,v)$ is actually a
  $(1+\eps_1)$-approximation of the distance between $q$ and $v$ in
  $P$. Using that $\eps_1 = \eps/9$, together with some basic math, we
  then indeed get:
  \begin{align*}
    \dist_A(v,q)&\leq (1 + \eps_0)\dist_{\Gr}(v,q)\\
                 &\leq (1+\eps_0)(1+\eps_1)\dist(v,q)\\
                 &\leq (1+\eps/9)^2\dist(v,q)\\
                 &\leq (1 + \eps)\dist(v,q).
  \end{align*}

  The lower bound again directly follows as paths in $\Gr$
  correspond to obstacle avoiding paths in $P$. This completes the
  proof.
\end{proof}


\subsection{The augmented cone graph}
\label{sub:The_augmented_cone_graph}

Let $S$ be a dynamic set of point sites inside $P$. As we are
interested in distances to (and between) the points in $S$, we also
want to define and compute anchor points on $Q$ for the points in
$S$. The most straightforward approach would be to just build the
above continuous graph $\Gr$ on the set of points (vertices)
$V \cup S$ rather than just the polygon vertices $V$. However, as
computing the anchor points involves computing shortest paths in this
graph, we cannot easily compute the anchor points for a single new
site $s$. Similarly, maintaining (the distances in) the graph as $S$
changes will be difficult. So, we argue that we can compute suitable
anchor points for these new sites using a slightly different method
instead.

Let $s \in P$ be a new point site. Similar to
Section~\ref{sub:Clarkson's_Cone_graph}, we now consider the
(continuous) graph $\Gr[s]$ that we obtain from
$\Gr$ by introducing $\{s\} \cup X_Q(s)$ as additional
vertices, connecting $s$ to every minimal cone neighbor in $N(s)$ and
to every point in $X_Q(s)$, and subdividing the edges of $Q$ to
account for the additional vertices in $X_Q(s)$. This (continuous)
graph still has $O(nk)$ vertices and $O(nk)$ edges. Moreover, we can
use this graph to accurately estimate distances from $s$ to any point
on $Q$.

\begin{lemma}
  \label{lem:augmented-cone-graph}
  Let $s$ be a point in $P$. For any vertex $w$ of $P$ there is a path from $s$ to $w$ in
  $\Gr[s]$ of length
  \[
    \dist(s,w) \leq \dist_{\Gr[s]}(s,w) \leq (1+\eps_1)\dist(s,w)
  \]
  that passes through a cone neighbor $v \in N(s)$.
\end{lemma}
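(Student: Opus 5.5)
The argument follows the proof of Lemma~\ref{lem:path_via_cone_neighbor_exists}, now carried out in $\Gr[s]$ with parameter $\eps_1$.

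The lower bound is immediate. Every edge of $\Gr[s]$ is a segment inside $P$: edges of $\Gr$, edges from $s$ to visible points of $N(s)\cup X_Q(s)$, and subdivided pieces of $Q$. Hence every path in $\Gr[s]$ is obstacle avoiding, so $\dist(s,w)\leq\dist_{\Gr[s]}(s,w)$.

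For the upper bound, fix a polygon vertex $w$ and a shortest obstacle-avoiding path from $s$ to $w$, with first vertex $v_1$ (possibly $v_1=w$). The segment $\overline{sv_1}$ lies in $P$, so $v_1$ is visible from $s$. Let $C$ be a cone of the family such that $v_1\in C_s$, and let $u\in N(s)$ be the minimal cone neighbor of $s$ in $C_s$, whose cone distance is at most that of $v_1$. Then $u$ and $v_1$ satisfy hypotheses (i)--(iii) of Lemma~\ref{lem:path_via_cone_neighbor} with $q:=v_1$, applied with the cone family for $\eps_1$.

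Next we build the path in $\Gr[s]$. It starts with the edge $(s,u)$, which exists because $u\in N(s)$, so the path passes through a cone neighbor as the statement requires. It continues with a shortest path from $u$ to $v_1$ in $\Gr\subseteq\Gr[s]$. By Lemma~\ref{lem:vertex_distances}, this part has length at most $(1+\eps_1)\dist(u,v_1)$, since both endpoints are polygon vertices. Conclusion (2) of Lemma~\ref{lem:path_via_cone_neighbor} then gives $\|s-u\|+(1+\eps_1)\dist(u,v_1)\leq(1+\eps_1)\dist(s,v_1)$. We finish with a $\Gr$-path from $v_1$ to $w$ of length at most $(1+\eps_1)\dist(v_1,w)$, again by Lemma~\ref{lem:vertex_distances}. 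Because $v_1$ lies on a shortest path, $\dist(s,v_1)+\dist(v_1,w)=\dist(s,w)$, and summing the two bounds gives the claim.

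There are two spots that need care.

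\textbf{Adding vertices and edges to $\Gr$.} Passing to $\Gr[s]$ subdivides edges of $Q$ and adds edges, but it never removes connections. So distances between old vertices can only decrease, and remain lower-bounded by geodesic distances.

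\textbf{The case $u=v_1$ or $v_1=w$.} Here the corresponding pieces of the path vanish, and the estimates still hold. Lemma~\ref{lem:path_via_cone_neighbor} only needs a path of length at most $(1+\eps_1)\dist(u,v_1)$, and such a path exists for any pair of polygon vertices.

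I expect the main obstacle to be confirming that Lemma~\ref{lem:path_via_cone_neighbor} applies with $q=v_1$ a polygon vertex and with the $\eps_1$-cone family. Its proof uses only visibility and the cone angle, so it should go through unchanged.
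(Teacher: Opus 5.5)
Your proposal is correct and follows essentially the paper's argument. The paper simply notes that $G[s]$ is a subgraph of $\Gr[s]$ and invokes Lemma~\ref{lem:path_via_cone_neighbor_exists} with the $\eps_1$ cone family, which is the same construction you unroll through the minimal cone neighbor $u\in N(s)$, Lemma~\ref{lem:path_via_cone_neighbor}, and Lemma~\ref{lem:vertex_distances}, the last being Clarkson's bound transferred to the larger graph.
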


\begin{proof}
  This follows directly from
  Lemma~\ref{lem:path_via_cone_neighbor_exists} since
  $\Gr[s]$ has $G[s]$ as an subgraph.
\end{proof}

\begin{lemma}
  \label{lem:distances_in_augmented_graph}
  Let $s$ be any point in $P$. For any point $q \in Q$ we have
  \[
    \dist(s,q) \leq \dist_{\Gr[s]}(s,q) \leq
    (1+\eps_1)\dist(s,q).
  \]
\end{lemma}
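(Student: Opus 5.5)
The lower bound is immediate. Every edge of $\Gr[s]$ is a segment contained in $P$: the edges of $\Gr$, the edges from $s$ to its minimal cone neighbors (which are visible from $s$), the edges from $s$ to the points of $X_Q(s)$ (added only when the segment lies in $P$), and subdivided pieces of $Q$. So every path in $\Gr[s]$ is obstacle avoiding, and $\dist(s,q)\leq\dist_{\Gr[s]}(s,q)$.

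For the upper bound I would mirror the second half of the proof of Lemma~\ref{lem:continous-Q-graph-distance}, applied to the single new vertex $s$. The difference is that for all other vertices we can now invoke Lemma~\ref{lem:continous-Q-graph-distance} directly instead of an induction hypothesis. Consider a shortest path $\geod(s,q)$ in $P$; there are two cases.

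\emph{Case 1: the path passes through a polygon vertex.} Let $w$ be the first polygon vertex on $\geod(s,q)$. By Lemma~\ref{lem:augmented-cone-graph}, $\dist_{\Gr[s]}(s,w)\leq(1+\eps_1)\dist(s,w)$. Since $\Gr$ is a subgraph of $\Gr[s]$ (subdividing edges of $Q$ does not change distances), Lemma~\ref{lem:continous-Q-graph-distance} gives $\dist_{\Gr[s]}(w,q)\leq(1+\eps_1)\dist(w,q)$. Adding these two bounds and using $\dist(s,w)+\dist(w,q)=\dist(s,q)$ gives the claim.

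\emph{Case 2: $\geod(s,q)=\overline{sq}$.} Here I would first redo Lemma~\ref{lem:continuous-Q-graph_vertex_exists} with $s$ as the apex. Its proof only uses that the apex has edges to its minimal cone neighbors and to the first visible intersections of its cone boundaries with $Q$, and $s$ has exactly these edges in $\Gr[s]$ via $N(s)$ and $X_Q(s)$. So in the cone $C_s$ containing $q$ there is a neighbor $u$ of $s$ with $d^C(s,u)\leq d^C(s,q)$. Moreover, $u$ is either the minimal cone neighbor in $C$ or a point of $X_Q(s)$ that is no farther than that vertex, and in both cases no visible polygon vertex in $C$ has smaller cone distance. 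This is hypothesis (iii) of Lemma~\ref{lem:path_via_cone_neighbor}.

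Next I bound the graph distance from $u$ to $q$.
\begin{itemize}
\item If $u$ is a polygon vertex, Lemma~\ref{lem:continous-Q-graph-distance} gives a path in $\Gr\subseteq\Gr[s]$ from $u$ to $q$ of length at most $(1+\eps_1)\dist(u,q)$.
\item If $u\in X_Q(s)$, then $u$ lies on $Q$. The subcurve $Q[u,q]$ is a shortest path, so it has length $\dist(u,q)$, and it is realized in $\Gr[s]$ by the consecutive edges along $Q$.
\end{itemize}
Either way, conclusion (2) of Lemma~\ref{lem:path_via_cone_neighbor} (with $\eps_1$ in place of $\eps$) yields $\dist_{\Gr[s]}(s,q)\leq\|s-u\|+\dist_{\Gr[s]}(u,q)\leq(1+\eps_1)\dist(s,q)$.

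The main obstacle is the adaptation of Lemma~\ref{lem:continuous-Q-graph_vertex_exists} to the apex $s$, in particular verifying hypothesis (iii) when $u$ is a Steiner point of $X_Q(s)$ rather than the minimal cone neighbor. If that direct check fails, I would avoid it: use the minimal cone neighbor $u'\in N(s)$ in $C$ whenever its cone distance is at most $d^C(s,q)$, and otherwise use the case analysis of that lemma. In that remaining case no visible vertex lies closer than $q$, so the point $i\in X_Q(s)$ obtained there satisfies (iii) automatically.
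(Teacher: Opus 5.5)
Your proof is correct and follows the paper's argument essentially step for step. It has the same two cases: Lemma~\ref{lem:augmented-cone-graph} combined with Lemma~\ref{lem:continous-Q-graph-distance} when $\geod(s,q)$ bends at a polygon vertex, and, when $\geod(s,q)$ is a segment, the adapted Lemma~\ref{lem:continuous-Q-graph_vertex_exists} followed by Lemma~\ref{lem:path_via_cone_neighbor}, with the subcase $u \in X_Q(s)$ handled by walking along $Q$. Your fallback choice of $u$ correctly secures hypothesis (iii) of Lemma~\ref{lem:path_via_cone_neighbor}: take the minimal cone neighbor in $C$ if its cone distance is at most $d^C(s,q)$, and otherwise the Steiner point from $X_Q(s)$. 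The paper's proof leaves this point implicit.
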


\begin{proof}
  The shortest path $\geod(s,q)$ from $s$ to $q$ either passes through
  a polygon vertex $w$ or is a line segment $\overline{sq}$.

  In the former case, $w$ is also a vertex in $\Gr$, and thus
  by Lemma~\ref{lem:continous-Q-graph-distance} the value
  $\dist_{\Gr}(w, q)$ is a $\eps_1$-approximation of the
  distance $\dist(w, q)$. By Lemma~\ref{lem:augmented-cone-graph} there
  is a path from $s$ to $w$ in $\Gr[s]$ via a minimum cone
  neighbor $v \in N(s)$ of $s$ of length $(1+\eps_1)\dist(s,w)$. We
  thus have
  \begin{align*}
    \dist_{\Gr[s]}(s,q) &\leq  \dist_{\Gr[s]}(s, w) + \dist_{\Gr[s]}(w, q)\\
                               &\leq \dist_{\Gr[s]}(s,w) + \dist_{\Gr}(w,q)\\
                               &\leq (1+\eps_1)\dist(s,w)   + (1+\eps_1)\dist(w,q)\\
                               &= (1+\eps_1)(\dist(s,w) + \dist(w,q))
                                 = (1+\eps_1)\dist(w,q).
  \end{align*}

  In the latter case, the exact same argument as in
  Lemma~\ref{lem:continuous-Q-graph_vertex_exists} (for now for the
  graph $\Gr[s]$) gives us that there exists a vertex
  $u \in \Gr[s]$ in the cone $C$ containing $q$ that (i) has
  cone distance at most $d^C(s,q)$, and (ii) that is connected by an
  edge to $s$. As we argue next, there is a path in $\Gr[s]$
  from $u$ to $q$ of length at most
  $(1+\eps_1)\dist(u,q)$. Lemma~\ref{lem:path_via_cone_neighbor} then
  gives us that the path from $s$ to $q$ in $\Gr[s]$ via $u$
  also has length at most $(1+\eps_1)\dist(s,q)$.

  If $u$ is also a vertex of $\Gr$, then by
  Lemma~\ref{lem:continous-Q-graph-distance} there then is a path in
  $\Gr$ from $u$ to $q$ of length at most
  $(1+\eps_1)\dist(u,q)$. Clearly, this path also exists in
  $\Gr[s]$. If $u$ is not a vertex of $\Gr$ then it
  lies on $Q$ (as it is the intersection of $Q$ with a boundary of
  $C$), and thus the subpath along $Q$ (which is also a path in
  $\Gr[s]$) has length
  $\dist(u,q) \leq (1+\eps_1)\dist(u,q)$.

  Note that once again, in both cases we actually have a path in
  $\Gr[s]$ which corresponds to an obstacle avoiding path in
  $P$. Hence, $\dist_{\Gr[s]}(s,q)$ is also at least $\dist(s,q)$.
\end{proof}

\subparagraph{Defining anchor points.} Our goal is to define a set
$A_Q(s)$ of anchor points on $Q$ for point $s$. To this end, we first
define a set of candidate points
$X'_Q(s) = X_Q(s) \cup \bigcup_{v \in N(s)} A_Q(v)$ on $Q$. We can
prove that if we assign appropriate weights to this set of $O(k^2)$
points it is indeed a set of anchor points. However, as we argue next,
a subset of $O(k)$ of these points suffices.

Let $H = (V',E')$ be an undirected weighted graph with vertex set
$V' = \{s\} \cup X'_Q(s)$. Vertex $s$ is connected to each vertex
$p \in X_Q(s)$ with an edge of length
$\|s-p\| = \dist_{\Gr[s]}(s,p)$, and to $p \in A_Q(v)$ by an
edge of length $\|s-v\|+\dist_{\Gr}(v,p)$. We connect
vertices $u,w \in X'_Q(s)$ that appear consecutively on $Q$ by an edge
of length $\dist(u,w)$ (which is the length of the subpath along
$Q$). Observe that the vertices of $X'_Q(s)$ form a shortest path in
the graph $H$.

\newcommand{\filterloss}{\ensuremath{\varepsilon_2}}
\begin{lemma}
  \label{lem:prune_connections}
  Let $\filterloss > 0$ be a parameter such that $O(1/\filterloss) = O(k)$. In $O(k^2)$ time we can
  compute a set $A_Q(s)$ of $O(k)$ discrete anchor points for $H$,
  i.e. so that for any $p \in X'_Q(s)$
  \[
    \dist_H(s,p) \leq \dist_H(s,A_Q(s),p) \leq (1 + \filterloss)\dist_H(s,p).
  \]
\end{lemma}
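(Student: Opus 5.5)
The plan is to exploit the very simple structure of $H$. It is a star centered at $s$ whose leaves $X'_Q(s)$ lie on a path along $Q$. So for every $p \in X'_Q(s)$ we have $\dist_H(s,p) = \min_{x \in X'_Q(s)} w(x) + \dist(x,p)$, where $w(x)$ is the length of the edge $(s,x)$ and $\dist(x,p)$ is the length of the subpath of $Q$ between them.

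First I order the $O(k^2)$ candidates along $Q$. Each list $A_Q(v)$ for $v \in N(s)$ can be stored sorted along $Q$ at preprocessing time, and $X_Q(s)$ comes sorted by cone orientation, as in Lemma~\ref{lem:construct_graph}. So we only need to merge $O(k)$ sorted lists. I would then compute $\dist_H(s,p)$ for all $p$ by two linear sweeps along $Q$, one in each direction, each propagating $\min(w(p), \text{previous value} + \text{gap})$. This takes $O(k^2)$ time once the order is known.

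Next let $p_0$ be a candidate minimizing $d_0 := \dist_H(s,p_0)$, and make it an anchor. I then sweep outward from $p_0$ in each direction with a greedy rule. Keep the last chosen anchor $a$. When reaching $p$, if $\dist_H(s,a) + \dist(a,p) > (1+\filterloss)\dist_H(s,p)$, make $p$ a new anchor. By construction every $p$ then satisfies the upper bound in the lemma through its most recent anchor. The lower bound holds because $\dist_H(s,A_Q(s),p)$ is the length of an actual path in $H$. The sweep is linear, so the total time is $O(k^2)$.

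To bound the number of anchors on one side, let the anchors be $a_1 = p_0, a_2, \dots$ with $d_i = \dist_H(s,a_i)$ and $x_i$ the distance of $a_i$ from $p_0$ along $Q$. The rule gives $d_i + x_{i+1} - x_i > (1+\filterloss) d_{i+1}$. Hence the potential $\phi_i = d_i - x_i$ satisfies $\phi_{i+1} < \phi_i - \filterloss d_{i+1} \le \phi_i - \filterloss d_0$, using minimality of $d_0$. The triangle inequality in $H$ gives $d_i \ge x_i - d_0$, so $\phi_i \ge -d_0$, while $\phi_1 = d_0$. Therefore there are at most $2/\filterloss + 1 = O(k)$ anchors per side. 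If $d_0 = 0$, the point $p_0$ coincides with $s$'s own position and the bound is trivial, so it can be handled separately. The weight of an anchor $a$ is $\hat{\dist}_a = \dist_H(s,a)$.

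I expect the main obstacle to be meeting the $O(k^2)$ time bound for ordering the candidates. A plain merge of $O(k)$ sorted lists costs $O(k^2\log k)$. I would first try to avoid comparisons: precompute the rank along $Q$ of every vertex of $\Gr$, and locate the points of $X_Q(s)$ among them, so that candidates can be bucketed by rank. If that fails, I would accept the extra $\log k$ factor, which is dominated by other terms in the query bounds anyway.
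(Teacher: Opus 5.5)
Your covering argument and your bound on the number of anchors are correct. Your potential $\phi_i = d_i - x_i$ is the paper's $f(a)=\mathit{weight}(a)+d_H(a,v)$ up to an additive constant.

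\textbf{The gap is the running time, which is part of the statement.} Your greedy needs the global order of all $O(k^2)$ candidates along $Q$. You use it both for the two sweeps that compute $d_H(s,p)$ and for the outward sweep from $p_0$. Merging $O(k)$ sorted lists of length $O(k)$ costs $O(k^2\log k)$, and a comparison-based merge of arbitrary such lists cannot do better.

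The bucketing idea does not obviously help. The ranks of the vertices of $G^Q$ along $Q$ come from a universe of size $\Theta(nk)$. Bucket or radix sorting $O(k^2)$ such keys takes $O(k^2\log(nk)/\log k)$ time, which is not $O(k^2)$ unless $n$ is polynomial in $k$.

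Nor is the fallback harmless. The time bound of Lemma~\ref{lem:query_anchor_set} and the query and update bounds of Theorems~\ref{thm:oracle} and~\ref{thm:main} contain no $\log k$ term. An extra $\log(1/\varepsilon)$ factor is therefore not dominated, and would weaken all of them.

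\textbf{The missing idea:} the greedy never needs the global order. It only needs the earliest candidate not covered by the current anchor, and this can be found with one pointer per sorted list. For that to work, you must run the greedy on the edge weights $w(p)$ rather than on the exact distances $d_H(s,p)$.

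Start from the candidate $q^*$ of minimum weight. In each round, advance every pointer past (and discard) the elements $p$ with $w(a)+d(a,p)\le(1+\varepsilon_2)w(p)$ for the current anchor $a$. Then make the earliest remaining list head the next anchor, with weight $w(\cdot)$. A round costs $O(k)$ plus the number of discarded elements.

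Your potential argument still bounds the number of rounds by $O(1/\varepsilon_2)=O(k)$ with $w$ in place of $d_H$:
\begin{itemize}
\item $\phi_1=w(q^*)=d_0$;
\item each new anchor lowers $\phi$ by more than $\varepsilon_2 w(a_{i+1})\ge\varepsilon_2 d_0$;
\item $\phi_i=w(a_i)-x_i\ge d_H(s,a_i)-x_i\ge -d_0$.
\end{itemize}
So the total time is $O(k^2)$.

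Points with $d_H(s,p)<w(p)$ are then covered indirectly. Write $d_H(s,p)=w(b)+d(b,p)$, where $b$ is the point at which a shortest path from $s$ to $p$ in $H$ first reaches $Q$. An anchor $a$ covering $b$ gives $w(a)+d(a,p)\le w(a)+d(a,b)+d(b,p)\le(1+\varepsilon_2)d_H(s,p)$.

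This is exactly the paper's proof; it is its case (ii). The paper additionally discards, up front, the candidates with $w(p)>w(q^*)+d(q^*,p)$.
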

\newcommand{\weight}{\ensuremath{\mathit{weight}}\xspace}

\begin{proof}
  This proof follows the idea of Lemma 5 of Thorup~\cite{Thorup2007}. Let $Q$ be the shortest path $\pi(u,v)$, we define $\weight(p)$ as the edge length from $s$ to $p$, in particular for all $p$ on $Q$ we have that $\weight(p) \geq \dist_H(s, p)$. Let $q^* = \argmin_p \weight(p)$, observe that $\weight(q^*) = \dist_H(s, q^*)$. As all $p,q\in Q$ lie on a shortest path $Q$ we can always find $\dist_H(p, q)$ in constant time. We consider the points on $Q$ ordered by their distance $\dist_H(p, v)$ to the endpoint of $Q$. In particular, we will say $b$ occurs \textit{after} $a$ on $Q$ if $\dist_H(b, v) \leq \dist_H(a, v)$.

  Our first anchor point $a_0$ is $q^*$. We will explain how to find
  all anchor points which occur after $q^*$ on $Q$, finding all anchor
  points which occur before $q^*$ is symmetrical to this. As
  preprocessing for our algorithm we filter $X'_Q(s)$ by removing all
  elements $p$ which either occur before $q^*$ or if $\weight(p) >
  \dist_H(s, q^*)+ \dist_H(q^*, p)$. This preprocessing, including
  finding $q^*$ itself, takes $O(|X'_Q(s)|) = O(k^2)$ time. Recall
  that $X'_Q(s)$ consists of $O(k)$ lists containing $O(k)$ elements
  each, which are all ordered by their position on $Q$. We
  can generate the set $X_Q(s)$ in order around $s$, and this order is
  consistent with the order on $Q$.
  Our algorithm to find the anchor points performs the following steps until all these lists are empty.
  \begin{itemize}
    \item In round $j$ we iterate over each of the ordered lists $i$
      until we find the first element $b_i$ for which $(1+\eps_2)\weight(b_i) <
      \weight(a_{j - 1}) + \dist_H(a_{j - 1}, b_i)$.
    \item For each list $i$ we delete all elements we traversed over for which the above condition did not hold. In other words, all elements $p$ which occur before $b_i$ on $Q$ for which the previous anchor point $a_{j - 1}$ covers $p$ are deleted from their respective lists.
    \item We define $a_j$ as the $b_i$ which occurs the earliest on $Q$, this element is also removed from their respective list.
  \end{itemize}
  The time complexity of each round of the above algorithm is $O(k + T)$ where $T$ is the number of elements deleted. Since each element can be deleted at most once, in total we spend $O(kL + k^2)$ time. Where $L$ is the number of anchor points we find.

  We will now prove that $L = O(k)$ (which is essentially the same
  argument as in Lemma 5 of Thorup~\cite{Thorup2007}). For this we define a function $f(a) = \weight(a) + \dist_H(a, v)$. We observe that for every anchor point $a_j$ with $j \geq 1$ we have per definition of $f$ and $a_j$ that
  \begin{align*}
    f(a_{j - 1}) - f(a_j) &= \weight(a_{j - 1}) + \dist_H(a_{j - 1}, a_j) - \weight(a_j)\\
    &> \filterloss \cdot \weight(a_j)\\
    &\geq \filterloss \cdot \dist_H(s, q^*).
  \end{align*}
  So $f$ decreases by at least $\filterloss \cdot \dist_H(s, q^*)$ for each anchor point. Furthermore, we have per triangle equality that for all anchor points $a_j$
  \begin{align*}
    \dist_H(s, v) &\leq \dist_H(s, a_j) + \dist_H(a_j, v)\\
    &\leq \weight(a_j) + \dist_H(a_j, v) = f(a_j)
  \end{align*}
  and since $a_j$ was not filtered out in the preprocessing, we have that
  \begin{align*}
    f(a_j) = \weight(a_j) + \dist_H(a_j, v)
    &\leq \dist_H(s, q^*) + \dist_H(q^*, a_j) + \dist_H(a_j, v)\\
    &=  \dist_H(s, q^*) + \dist_H(q^*, v)    \\
    &\leq \dist_H(s, q^*) + (\dist_H(q^*, s) + \dist_H(s, v))\\
    &\leq \dist_H(s,v) + 2\dist_H(s, q^*).
  \end{align*}

  Since the difference between the upper and lower bound is $2\dist_H(s, q^*)$ we find that there can be at most $2/\filterloss$ anchor points after $q^*$. By performing the above algorithm and proof symmetrically for all anchor points before $q^*$ we can construct $A_Q(s)$, consisting of $O(1/\filterloss) = O(k)$ anchor points in $O(k^2)$ time.

  Finally, we will show that $A_Q(s)$ is a valid set of anchor points on
  the graph $H$. The lower bound easily holds as the path via any
  anchor point is a valid path in $H$. To show the upper bound we need
  to prove that for any vertex $p \in X'_Q(s)$ there exists an $a\in A_Q(s)$ such that
  \begin{align*}
    \weight(a) + \dist_H(a, p) \leq (1 + \filterloss)\dist_H(s, p)
  \end{align*}
  We separate into two cases
  \begin{itemize}
    \item[(i)] If $\dist_H(s, p) = \weight(p)$, then per triangle inequality $\dist_H(s, p) \leq \dist_H(s, q^*) + \dist_H(q^*, p)$ we find that the point $p$ was not filtered out in the preprocessing step. This means that either $p$ is an anchor point itself or there exists an anchor point $a\in A_Q(s)$ such that $\weight(a) + \dist_H(a, p) \leq (1 + \filterloss)\dist_H(s, p)$, and thus the upper bound holds.
    \item[(ii)] If $\dist_H(s, p) < \weight(p)$ then the shortest path $\pi_H(s,p)$ must pass through other vertices. Let $b$ be the first vertex after $s$ on this path. We now have that $\dist_H(s, b) = \weight(b)$. Per case $(i)$ there must now exist a vertex $a$ such that
    \begin{align*}
      \weight(a) + \dist_H(a, p) &\leq \weight(a) + \dist_H(a, b) + \dist_H(b, p)\\
      &\leq (1 + \filterloss)\dist_H(b) + (1 + \filterloss)\dist_H(b, p)\\
      &= (1 + \filterloss)\dist_H(p),
    \end{align*}
    which proves the upper bound. \qedhere
  \end{itemize}
\end{proof}

We then use Lemma~\ref{lem:prune_connections} with $\filterloss:=\eps/9$ to compute a set $A_Q(s)$ of
$O(k)$ discrete anchor points for $s$ with respect to graph
$H$. Hence, for each point $p \in X'_Q(s)$ (so in particular for
those in a set $A_Q(v)$) there is a path of length
$(1+\eps_2)\dist_{H}(s,p)\leq(1+\eps_2)(\|s-v\|+\dist_{\Gr}(v,p))$
via an anchor point in $A_Q(s)$. Next, we argue that this set is a valid set
of anchor points for any point on $Q$.

\begin{lemma}
  \label{lem:query_anchor_set_is_valid}
  The set $A_Q(s)$ is a valid set of anchor points, so for any point $q
  \in Q$ we have
  \[
    \dist(s,q) \leq \dist_{\Gr[s]}(s,A_Q(s), q) \leq (1+\eps)\dist(s,q).
  \]
\end{lemma}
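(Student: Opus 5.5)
The plan is to sandwich $\dist_{\Gr[s]}(s,A_Q(s),q)$ between $\dist(s,q)$ and a product of three factors. The first factor, $(1+\eps_1)$, comes from the graph approximation. The second, $(1+\eps_0)$, comes from Thorup's anchor points for the cone neighbors. The third, $(1+\filterloss)$, comes from the pruning in Lemma~\ref{lem:prune_connections}. With $\eps_0=\eps_1=\filterloss=\eps/9$ we have $(1+\eps/9)^3\le 1+\eps$ for $\eps\le 1$, which gives the claim.

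\textbf{Lower bound.} Each anchor point $a\in A_Q(s)$ carries a weight $\hat{\dist}_a$. This weight is either $\|s-a\|$ with $\overline{sa}\subset P$, or $\|s-v\|+\dist_{\Gr}(v,a)$ for some $v\in N(s)$. In both cases it is the length of an obstacle-avoiding path. Appending the subpath of $Q$ from $a$ to $q$ gives an obstacle-avoiding path from $s$ to $q$, so its length is at least $\dist(s,q)$.

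\textbf{Upper bound.} Fix $q\in Q$ and follow the case split of Lemma~\ref{lem:distances_in_augmented_graph}.

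\emph{Case 1: $\geod(s,q)$ is a segment.} Let $C$ be the cone at $s$ containing $q$. As in that lemma, $s$ has a neighbor $u$ in $C$ with $d^C(s,u)\le d^C(s,q)$. If $u\in X_Q(s)\subseteq X'_Q(s)$, then $u$ lies on $Q$ and $\dist_H(s,u)\le\|s-u\|$. Lemma~\ref{lem:prune_connections} gives an anchor $a$ with $\hat{\dist}_a+\dist(a,u)\le(1+\filterloss)\|s-u\|$. Following $Q$ from $u$ to $q$, and using Lemma~\ref{lem:path_via_cone_neighbor} with the exact subpath of $Q$ from $u$ to $q$, we obtain $\|s-u\|+\dist_Q(u,q)\le(1+\eps_1)\dist(s,q)$.

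If instead $u=v\in N(s)$, I reduce to Case 2 with $w:=v$. The bound $\|s-v\|+(1+\eps)\dist(v,q)\le(1+\eps)\dist(s,q)$ from Lemma~\ref{lem:path_via_cone_neighbor}(2) is exactly what Case 2 needs.

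\emph{Case 2: $\geod(s,q)$ bends at a polygon vertex.} By Lemma~\ref{lem:augmented-cone-graph}, there is a near-shortest path that leaves $s$ along an edge to some $v\in N(s)$. Combining Lemmas~\ref{lem:augmented-cone-graph} and~\ref{lem:continous-Q-graph-distance} gives
\[
\|s-v\|+\dist_{\Gr}(v,q)\le(1+\eps_1)\dist(s,q).
\]
Lemma~\ref{lem:anchor_points_for_vertices} uses Lemma~\ref{lem:thorup_anchor_points} with parameter $\eps_0$. From its proof, there is a vertex $w$ of $\Gr$ on $Q$ and an anchor $p\in A_Q(v)$ with
\[
\dist_{\Gr}(v,p)+\dist(p,q)\le(1+\eps_0)\dist_{\Gr}(v,q).
\]
Since $p\in X'_Q(s)$ and $H$ contains the edge $(s,p)$ of weight $\|s-v\|+\dist_{\Gr}(v,p)$, this gives
\[
\dist_H(s,p)+\dist(p,q)\le(1+\eps_0)(1+\eps_1)\dist(s,q).
\]
Lemma~\ref{lem:prune_connections} then provides $a\in A_Q(s)$ with $\hat{\dist}_a+\dist(a,p)\le(1+\filterloss)\dist_H(s,p)$. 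The triangle inequality along $Q$ gives
\[
\hat{\dist}_a+\dist(a,q)\le(1+\filterloss)\bigl(\dist_H(s,p)+\dist(p,q)\bigr),
\]
and hence $\hat{\dist}_a+\dist(a,q)\le(1+\eps/9)^3\dist(s,q)$.

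\textbf{Main obstacle.} The delicate step is making the guarantee of Lemma~\ref{lem:thorup_anchor_points} hold for an arbitrary point $q$ of $Q$, not just for vertices. I would use the same device as in the proof of Lemma~\ref{lem:anchor_points_for_vertices}: route through the last $\Gr$-vertex $w$ on $Q$ of the shortest path in $\Gr$ from $v$ to $q$. A second point to check is that the weights used in $H$, namely $\|s-v\|+\dist_{\Gr}(v,p)$, are exactly the weights $\hat{\dist}_a$ that define $\dist_{\Gr[s]}(s,A_Q(s),q)$, so that the three factors compose cleanly.
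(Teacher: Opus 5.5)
Your argument is correct in substance, and it uses the same mechanism as the paper. You take a Thorup anchor $p\in A_Q(v)$ of a cone neighbor $v\in N(s)$, then a pruned anchor $a\in A_Q(s)$ covering $p$ in $H$, and finish with the triangle inequality along $Q$. Your constant $(1+\eps_0)(1+\eps_1)(1+\filterloss)$ is in fact identical to the paper's $(\filterloss(1+\eps_0)+\eps_0+1)(1+\eps_1)$.

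What differs is where the case split happens. The paper takes a shortest path from $s$ to $q$ in $\Gr[s]$ and splits on whether its first edge goes to a point of $X_Q(s)$ or to some $v\in N(s)$. It bounds the anchor estimate relative to $\dist_{\Gr[s]}(s,q)$ purely combinatorially, and only at the end invokes Lemma~\ref{lem:distances_in_augmented_graph} as a black box. You instead split on the geometry of $\geod(s,q)$ and re-derive the first-stage bounds yourself, which is less modular. In one respect your version is more careful. In the $X_Q(s)$ case the paper treats the point $w\in X_Q(s)$ as if it survived pruning. You correctly pass through Lemma~\ref{lem:prune_connections}, paying an extra $(1+\filterloss)$ that the budget absorbs.

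Re-opening the geometry exposes a weak step in Case~1 with $u\in X_Q(s)$. There you invoke Lemma~\ref{lem:path_via_cone_neighbor}, but its hypothesis~(iii) need not hold for a Steiner point on a cone boundary. A hole inside $C$ that is closer to $s$ than $u$ can block $\overline{um}$. For an arbitrary neighbor $u\in X_Q(s)$ with $d^C(s,u)\le d^C(s,q)$, the inequality $\|s-u\|+\dist(u,q)\le(1+\eps_1)\dist(s,q)$ can fail.

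It does hold for the specific $u$ produced by the construction of Lemma~\ref{lem:continuous-Q-graph_vertex_exists}:
\begin{itemize}
\item There, $u$ lies on a segment $\overline{si}$ inside $P$.
\item The point $i$ lies on the same edge of $Q$ as $q$, and $d^C(s,i)\le d^C(s,q)$.
\item Hence $\|s-u\|+\dist(u,q)\le\|s-i\|+\|i-q\|$.
\item The purely Euclidean estimate $\|i-m\|\le(\eps_1/4)\|s-m\|$ from the proof of Lemma~\ref{lem:path_via_cone_neighbor} then gives at most $(1+\eps_1/2)\|s-q\|$.
\end{itemize}
You should either add this argument or adopt the paper's split. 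Note that the paper's proof of Lemma~\ref{lem:distances_in_augmented_graph} takes the same shortcut. So the paper's route does not remove the issue; it only keeps it inside the cited lemma.
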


\begin{proof}
  The argument is similar to that of
  Lemma~\ref{lem:anchor_points_for_vertices}. For any point $q \in Q$,
  the shortest path from $q$ to $s$ in $\Gr[s]$ passes
  through some vertex $w$ of $\Gr[s]$ on $Q$.

  If $w$ is a point in $X_Q(s)$, then we have
  $\dist_{\Gr[s]}(s,A_Q(s),q) \leq \hat{\dist_w} + \dist(w,q) =
  \dist_{\Gr[s]}(s,q)$, which, by
  Lemma~\ref{lem:distances_in_augmented_graph} is at most
  $(1+\eps_1)\dist(s,q)$. Since $\eps_1 \leq \eps$ we then also have
  $\dist_{\Gr[s]}(s,X'_Q(s), q) \leq (1+\eps)\dist(s,q)$.

  Otherwise, $w$ must also be a vertex of $\Gr$, and the
  shortest path from $s$ to $w$ in $\Gr[s]$ passes through
  some cone neighbor $v \in N(s)$. Let $p \in A_Q(v)$ be the anchor
  point used in $\Gr$ that connects $v$ to $w$; i.e. so that
  (by Lemma~\ref{lem:thorup_anchor_points}) we have that
  $\dist_{\Gr}(v,A_Q(v),w) = \dist_{\Gr}(v,p) + \dist(p,w) \leq
  (1+\eps_0)\dist_{\Gr}(w,v) = \dist_{\Gr}(w,v) +
  \eps_0\dist_{\Gr}(w,v)$.

  Furthermore, let $a \in \subseteq A_Q(s)$ be the anchor point
  used in $H$ to get from $s$ to $p$; i.e. so that there is a path
  via $a$ of length at most
  $\dist_{\Gr[s]}(s,A_Q(s),a)+\dist(a,p) \leq (1+\eps_2)\dist_{H}(s,p) \leq
  (1+\eps_2)(\|s-v\|+\dist_{\Gr}(v,p))$. Since we can use
  anchor point $a$ to get from $s$ to $q$ via $p$ and $w$, using Lemma~\ref{lem:prune_connections} we have
  \begin{align*}
    \dist_{\Gr[s]}(s,A_Q(s),q) &\leq \dist_{\Gr[s]}(s,A_Q(s),a) + \dist(a,q)\joost{I believe we can already omit $A_Q(s)$ here} \\
                   &\leq \dist_{\Gr[s]}(s,A_Q(s), a)  + \dist(a,p) + \dist(p,w)  + \dist(w,q)\\
                    &\leq \eps_2(\|s-v\|+\dist_{\Gr}(v,p))
                      + \|s-v\|+\dist_{\Gr}(v,p) + \dist(p,w) + \dist(w,q)\\
                    &\leq \eps_2(\|s-v\|+\dist_{\Gr}(v,p))
                      + \|s-v\|+\eps_0\dist_{\Gr}(v,w) +
                      \dist_{\Gr}(v,w) + \dist(w,q)\\
                    &\leq \eps_2(\|s-v\|+\dist_{\Gr}(v,p))
                      + \eps_0\dist_{\Gr}(v,w) +
                         \dist_{\Gr[s]}(s,w) + \dist(w,q)\\
                    &\leq \eps_2(\|s-v\|+\dist_{\Gr}(v,p))
                      + \eps_0\dist_{\Gr}(v,w)
                      + \dist_{\Gr[s]}(s,q)
  \end{align*}
  Observe that
  $\|s-v\|+\dist_{\Gr}(v,p) \leq
  \|s-v\|+(1+\eps_0)\dist_{\Gr}(v,w) \leq
  (1+\eps_0)\dist_{\Gr[s]}(s,w) \leq
  (1+\eps_0)\dist_{\Gr[s]}(s,q) $.  Furthermore, by
  Lemma~\ref{lem:distances_in_augmented_graph} the distance from $s$
  to $q$ in $\Gr[s]$ is at most
  $(1+\eps_1)\dist(s,q)$. Plugging this in we thus get
  \begin{align*}
    \dist_{\Gr[s]}(s,A_Q(s),q)
                   &\leq \eps_2(\|s-v\|+\dist_{\Gr}(v,p))
                      + \eps_0\dist_{\Gr}(v,w)
                      + \dist_{\Gr[s]}(s,q) \\
                    &\leq \eps_2(1+\eps_0)\dist_{\Gr[s]}(s,q)
                        + \eps_0\dist_{\Gr[s]}(s,q)
                        + \dist_{\Gr[s]}(s,q) \\
                    &\leq (\eps_2(1+\eps_0)+\eps_0+1)\dist_{\Gr[s]}(s,q)\\
                    &\leq (\eps_2(1+\eps_0)+\eps_0+1)(1+\eps_1)\dist(s,q)\\
  \end{align*}

  Finally, using that
  $\eps_0 = \eps_1 = \eps_2 = \frac{\eps}{9} \leq 1$ we have
  $(\eps_2(1+\eps_0)+\eps_0+1)(1+\eps_1) \leq
  (\eps_2+\eps_2\eps_0+\eps_0+1)(1+\eps_1) \leq
  (\frac{3\eps}{9}+1)(1+\frac{\eps}{9}) \leq (1+\frac{\eps}{3})^2 \leq
  1+\eps$ and thus $\dist_{\Gr[s]}(s,A_Q(s),q) \leq (1+\eps)\dist(s,q)$ as
  desired.

  The lower bound again follows easily since $\dist_{\Gr[s]}(s,A_Q(s),q)$ is the
  length of an obstacle avoiding path from $\Gr[s]$.
\end{proof}



\begin{lemma}
  \label{lem:query_anchor_set}
  Let $s$ be a point in $P$. In $O(k\log n + k^2)$ time, we can
  compute a set $A_Q(s)$ of $O(k)$ anchor points, so that
  for any point $q \in Q$ we have
  \[
    \dist(s, q) \leq \dist_{\Gr[s]}(s,A_Q(s), q) \leq (1 + \eps)\dist(s,q).
  \]
\end{lemma}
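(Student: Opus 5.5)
The plan is to assemble the pieces already established. Correctness is exactly Lemma~\ref{lem:query_anchor_set_is_valid}: once the set $A_Q(s)$ is computed as described (candidate set $X'_Q(s)$, graph $H$, pruning via Lemma~\ref{lem:prune_connections} with $\eps_2=\eps/9$), both bounds $\dist(s,q)\leq \dist_{\Gr[s]}(s,A_Q(s),q)\leq(1+\eps)\dist(s,q)$ hold for every $q\in Q$. So what remains is the running-time bound $O(k\log n + k^2)$. We assume the preprocessing of Lemmas~\ref{lemm:Clarkson}, \ref{lem:construct_graph} and \ref{lem:anchor_points_for_vertices} has been done. This includes storing, for every polygon vertex $v$, its anchor set $A_Q(v)$ sorted along $Q$, together with the weights $\dist_{\Gr}(v,p)$ and the arc-length position of each anchor along $Q$ (prefix lengths of $Q$).

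The computation then proceeds in four steps. \emph{First}, we compute $N(s)$ in $O(k\log n)$ time by Lemma~\ref{lemm:Clarkson}. \emph{Second}, we compute $X_Q(s)$ in $O(k\log n)$ time by Lemma~\ref{lem:construct_graph}. We obtain it already ordered along $Q$, because we generate it in the angular order of the cone boundaries, and each point gets weight $\|s-p\|$ and its arc-length position in $O(1)$ time. To get that position we need the edge of $Q$ hit by the ray, which the ray shooting query returns; a prefix-sum table on $Q$ then gives the position. \emph{Third}, for each $v\in N(s)$ we read the stored sorted list $A_Q(v)$ and add $\|s-v\|$ to each weight. This takes $O(k)$ per neighbor and $O(k^2)$ in total. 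Together these yield the $O(k)$ sorted lists of $O(k)$ elements each that make up $X'_Q(s)$, with edge weights exactly as in the definition of $H$. Distances in $H$ between points on $Q$ are differences of arc-length positions, so each costs $O(1)$. \emph{Fourth}, we run the algorithm of Lemma~\ref{lem:prune_connections} on these lists. This takes $O(k^2)$ time and returns $O(k)$ anchor points together with their weights.

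Summing the steps gives $O(k\log n + k^2)$, and the output size is $O(k)$.

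The main point to check is that Lemma~\ref{lem:prune_connections} really receives its input in the form it assumes. It needs $O(k)$ lists, each sorted along $Q$, and $O(1)$-time evaluation of $\dist_H$ between points of $Q$. This is why the anchor lists $A_Q(v)$ must be stored presorted with their positions during preprocessing. It is also why the consistency between the angular order of the cone rays from $s$ and the order along $Q$ (used in Lemma~\ref{lem:construct_graph}) must be invoked for $X_Q(s)$; otherwise a sorting step would add a $\log k$ factor. If that consistency fails, because a ray is blocked and the list has gaps or the hits do not come out monotone along $Q$, I would simply sort $X_Q(s)$ in $O(k\log k)$ time. This is still within $O(k\log n + k^2)$.
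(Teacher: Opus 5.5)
Your proposal is correct and follows the paper's proof. Correctness is delegated to Lemma~\ref{lem:query_anchor_set_is_valid}, and the time bound comes from computing $N(s)$ and $X_Q(s)$ in $O(k\log n)$, assigning weights to the $O(k^2)$ candidates in $X'_Q(s)$, and running the $O(k^2)$ pruning of Lemma~\ref{lem:prune_connections}. Your extra bookkeeping (storing each $A_Q(v)$ presorted with arc-length positions, plus the $O(k\log k)$ fallback sort of $X_Q(s)$) makes explicit what the paper leaves implicit, and stays within $O(k\log n + k^2)$.
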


\begin{proof}
  Correctness follows by Lemma~\ref{lem:query_anchor_set_is_valid}, so
  all that remains is to argue about the construction time. Observe
  that the set $X'_Q(s)$ has size $O(k^2)$ and can be constructed in
  $O(k\log n + k^2)$ time, as we can compute $X_Q(s)$ and $N(s)$
  in $O(k\log n)$ time using the results from
  Lemmas~\ref{lemm:Clarkson} and~\ref{lem:construct_graph}.

  For each of these points we also compute their weights (i.e. edge
  lengths in the graph $H$) in $O(k^2)$ time, and then use
  Lemma~\ref{lem:prune_connections} to compute the anchor points in
  additional $O(k^2)$ time. (Note that we do not need to explicitly
  construct the full graph $H$ to apply Lemma
  Lemma~\ref{lem:prune_connections}). The lemma follows.
\end{proof}





\subparagraph{Estimating distances via the shortest path $Q$.} Given a
shortest path $Q$, and points $s,t \in P$ whose shortest path
$\geod(s,t)$ in $P$ intersects $Q$, we can use the above tools to
accurately estimate the distance between $s$ and $t$. More
specifically, we compute $G^Q$ and the anchor points as defined
above. Additionally, we fix an endpoint $u$ of $Q$ and compute and
store, for each vertex $w$ of $Q$ the distance to $u$ (i.e. the path
length along $Q$). This can easily be done in $O(n)$ additional
time. Given a pair of points $a,b$ on $Q$ (and the edges that contain
$a$ and $b$, respectively) we can then compute their distance
$\dist(a,b)$ (i.e. the length of the subpath $Q[a,b]$) in constant
time.

Let $a \in A_Q(s)$ and $b \in A_Q(t)$ be anchor points of $s$ and
$t$ on $Q$, respectively. The subpath $\hat{\geod}_a$ from $s$ to $a$
in $\Gr[s]$ corresponds to an obstacle avoiding path in
$P$. Similarly, the subpath $\hat{\geod}_b$ from $b$ to $t$ is
obstacle avoiding. Furthermore, since $Q$ is a shortest
(obstacle-avoiding) path in $P$, so is its subpath $Q[a,b]$ from $a$
to $b$. So, the concatenation
of $\hat{\geod}_a$, $Q[a,b]$, and $\hat{\geod}_b$ connects $s$ and $t$
and is obstacle avoiding, and has length
$\hat{\dist}_a + \hat{\dist}_b + \dist(a,b)$. Now define
\[
  \anchorDistancePoints{s}{t} = \min_{a \in A_Q(s), b \in A_Q(t)} \hat{\dist}_a + \hat{\dist}_b + \dist(a,b)
\]
to be the length of a shortest such obstacle avoiding path $\hat{\geod}_Q(s,t)$. We then have:

\begin{lemma}
  \label{lem:graph_and_shortest_path_datastructure}
  Let $P$ be a polygonal domain with $n$ vertices, and let $Q$ be a
  shortest path in $P$. In $O(nk^2\log n\log(kn))$ time, we can
  construct a data structure on $P$ and $Q$ of size $O(nk)$ that can
  answer the following queries in $O(k\log n + k^2)$ time: given any
  pair of points $s,t \in P$ for which a shortest path $\geod(s,t)$
  intersects $Q$, return the length $\anchorDistancePoints{s}{t}$ of an
  obstacle avoiding path $\hat{\geod}_Q(s,t)$ so that
  \[
    \dist(s,t) \leq \anchorDistancePoints{s}{t} \leq (1+\eps)\dist(s,t).
  \]
\end{lemma}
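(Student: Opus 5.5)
The plan is to combine the preprocessing of Lemma~\ref{lem:construct_graph} and Lemma~\ref{lem:anchor_points_for_vertices} with the query procedure of Lemma~\ref{lem:query_anchor_set}, and to prove the approximation bound by splitting $\geod(s,t)$ at a point $q \in Q$ where it meets $Q$.

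\textbf{Preprocessing.} We build the cone-neighbor structure of Lemma~\ref{lemm:Clarkson} and the ray-shooting structure of Lemma~\ref{lem:construct_graph}. We construct $\Gr$ and, by Lemma~\ref{lem:anchor_points_for_vertices}, the weighted anchor sets $A_Q(v)$ for every polygon vertex $v$. For each vertex of $Q$ we store its prefix length from a fixed endpoint $u$, so that $\dist(a,b)$ for $a,b\in Q$ costs $O(1)$ once the containing edges are known. The running time is dominated by Lemma~\ref{lem:anchor_points_for_vertices}, namely $O(nk^2\log n\log(kn))$. The space is $O(nk)$, since every structure has size $O(nk)$ and each of the $n$ vertices stores $O(k)$ anchors together with their weights.

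\textbf{Query.} By Lemma~\ref{lem:query_anchor_set} we compute $A_Q(s)$ and $A_Q(t)$ with their weights in $O(k\log n+k^2)$ time. We must also record, for each anchor, the edge of $Q$ containing it; this is available from the construction, since candidate points come either from ray shooting or from the stored $A_Q(v)$. Evaluating the minimum defining $\anchorDistancePoints{s}{t}$ naively costs $O(k^2)$, which fits the bound. Alternatively, we sort both anchor sets along $Q$ and run a two-pass sweep with the prefix-minimum trick, since $\dist(a,b)=|\ell(a)-\ell(b)|$ for arc-length positions $\ell$.

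\textbf{Correctness.} The lower bound holds because every candidate value is the length of an obstacle-avoiding path $\hat{\geod}_a\cdot Q[a,b]\cdot\hat{\geod}_b$. For the upper bound, let $q\in Q\cap\geod(s,t)$, so $\dist(s,t)=\dist(s,q)+\dist(q,t)$. Lemma~\ref{lem:query_anchor_set} gives an $a\in A_Q(s)$ with $\hat{\dist}_a+\dist(a,q)\le(1+\eps)\dist(s,q)$, and a $b\in A_Q(t)$ with $\hat{\dist}_b+\dist(b,q)\le(1+\eps)\dist(q,t)$. Here $\dist(a,q)$ is measured along $Q$, which is exactly how the graph distance on $Q$ works in $\Gr[s]$, since $Q$ is a path of $\Gr$. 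Because $Q$ is a shortest path, the triangle inequality along $Q$ gives $\dist(a,b)\le\dist(a,q)+\dist(q,b)$. Summing the two bounds yields $\anchorDistancePoints{s}{t}\le(1+\eps)\dist(s,t)$.

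\textbf{Main obstacle.} The subtle step is checking that the quantity $\dist_{\Gr[s]}(s,A_Q(s),q)$ in Lemma~\ref{lem:query_anchor_set} really has the form $\hat{\dist}_a+\dist(a,q)$, with $\dist(a,q)$ the length of $Q[a,q]$. Two things must be confirmed. First, the weight $\hat{\dist}_a$ stored at query time must be the realized path length $\|s-v\|+\dist_{\Gr}(v,p)$, or $\|s-p\|$, taken over the chosen candidate. Second, the pruning in Lemma~\ref{lem:prune_connections} keeps anchors whose connecting paths are genuine obstacle-avoiding paths. Once this bookkeeping is fixed, the remaining steps are routine.
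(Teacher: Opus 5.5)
Your proposal is correct and follows essentially the same route as the paper. Space and preprocessing are inherited from Lemma~\ref{lem:anchor_points_for_vertices}, and queries compute $A_Q(s)$ and $A_Q(t)$ via Lemma~\ref{lem:query_anchor_set} and evaluate the minimum naively in $O(k^2)$. The upper bound comes from splitting $\geod(s,t)$ at a point $q \in Q$ and using $\dist(a,b) \le \dist(a,q) + \dist(q,b)$ along $Q$, and the lower bound from the candidate paths being obstacle avoiding. Your extra bookkeeping remarks (recording the containing edges of $Q$, and that the weights are realized path lengths) are details the paper leaves implicit. The optional sorted sweep is harmless but buys nothing, since computing the anchor sets already costs $O(k^2)$.
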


\begin{proof}
  The bound on the space and preprocessing time follow directly from
  Lemma~\ref{lem:anchor_points_for_vertices}, so what remains is to
  argue that we handle queries within the stated bounds.

  Let $q$ be an intersection point of $\geod(s,t)$ and $Q$. By
  Lemma~\ref{lem:query_anchor_set} there are anchor points
  $a \in A_Q(s)$ and $b \in A_Q(t)$ for which
  $\dist_{\Gr[s]}(s,A_Q(s), q) = \hat{\dist}_a + \dist(a,q) \leq
  (1+\eps)\dist(s,q)$ and
  $\dist_{\Gr[t]}(t,A_Q(t), q) = \hat{\dist}_b + \dist(b,q) \leq
  (1+\eps)\dist(t,q)$. It then follows that the path $\hat{\geod}_Q(s,t)$ via anchors
  points $a$ and $b$ has length at most
  \begin{align*}
    \anchorDistancePoints{s}{t} &\leq \hat{\dist}_a + \dist(a,b) + \hat{\dist}_b \\
                 &\leq \hat{\dist}_a + \dist(a,q) + \dist(q,b) + \hat{\dist}_b \\
                 &\leq (1+\eps)\dist(s,q) + (1+\eps)\dist(q,t) \\
                 &= (1+\eps)\dist(s,t).
  \end{align*}
  As argued, this path $\hat{\geod}_Q(s,t)$ is obstacle avoiding, and
  thus has length at least $\dist(s,t)$. By
  Lemma~\ref{lem:query_anchor_set}, we can compute the sets of anchor
  points $A_Q(s)$ and $A_Q(t)$ in $O(k\log n + k^2)$ time. Each such
  a set has size $O(k)$, so we can easily compute
  $\anchorDistancePoints{s}{t}$ in additional $O(k^2)$ time.
\end{proof}

\section{An improved data structure for distance queries}
\label{sec:An_improved_data_structure_for_distance_queries}

In this section we develop an $O(nk\log n)$ space data structure that
stores the polygonal domain $P$, so that given a pair of query points
$s,t$ we can compute an $(1+\eps)$-approximation of the distance
$\dist(s,t)$ in $O(k^2\log n)$ time. Our data structure is essentially
a balanced hierarchical subdivision of the polygonal domain combined
with the graph-based approach to estimate distances from the previous
section.

\subparagraph{A balanced hierarchical subdivision for polygonal
  domains.} A balanced hierarchical
subdivision~\cite{chazelle1989balanced} is a recursive subdivision of
a simple polygon into subpolygons. It is a crucial ingredient to
obtain efficient solutions to many problems involving simple
polygons~\cite{agarwal18improv_dynam_geodes_neares_neigh,2PSP_simple_polygon}. As
was also recently observed~\cite{berg24spanners}, Thorup's
approach~\cite{Thorup2007} essentially provides a balanced
hierarchical subdivision for polygonal domains. The following lemma
summarizes this result:

\begin{figure}[tb]
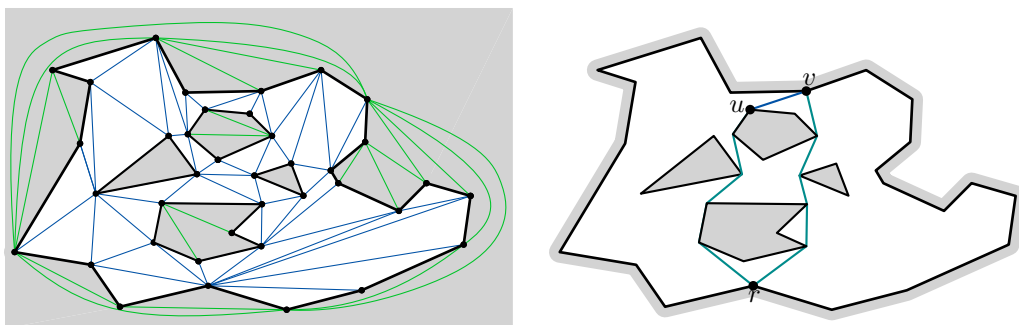

  \centering
  \includegraphics[page=2]{separator}
  \quad
  \includegraphics[page=4]{separator}
  \caption{One step in the construction of the balanced hierarchical
    separator.}
  \label{fig:separator}
\end{figure}

\renewcommand{\Q}{\ensuremath{\mathcal{Q}}\xspace}
\begin{lemma}[\cite{Thorup2007,berg24spanners}]
  \label{lem:separator_subdivision}
  Let $P$ be a polygonal domain with $n$ vertices. In $O(n\log n)$
  time, one can compute a tree \T so that
  \begin{enumerate}
  \item the leaves of \T correspond to triangles in a triangulation of
    $P$,
  \item each node $\nu$ of $T$ corresponds to a subpolygon $P_\nu$,
  \item each internal node $\nu$ also corresponds to a set
  $\Q_\nu$ of shortest paths between vertices of $P$ so that:
    \begin{enumerate}
    \item $\Q_\nu$ consists of at most three shortest paths,
    \item every shortest path $Q \in \Q_\nu$ is also a shortest path
      in $P_\nu$, and
      \item the shortest paths in $\Q_\nu$ partition $P_\nu$ into
        in two subpolygons $P_\mu$ and $P_\omega$, corresponding to
        the children of $\nu$,
      \end{enumerate}
    \item the height of the tree is $O(\log n)$, and
    \item the total complexity of the subpolygons $P_\nu$ on each level of \T
      is $O(n)$, and thus the total complexity of \T is $O(n\log n)$.
    \end{enumerate}
\end{lemma}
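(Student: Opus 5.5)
The plan is to follow Thorup's recursive separator construction, in the form cleaned up in~\cite{berg24spanners}, and check each item in turn. First I triangulate $P$ in $O(n\log n)$ time. Then I fix a vertex $r$ and compute a shortest path tree $T_r$ from $r$ to all vertices of the triangulation, e.g.\ via a shortest path map, in $O(n\log n)$ time. The edges of the triangulation that are not in $T_r$ each close a fundamental cycle with $T_r$. Since $T_r$ is a tree on a planar triangulation, the dual structure is also tree-like: the dual of the non-tree edges forms a spanning tree of the triangles once the holes are accounted for. This is the standard tree--cotree argument. A fundamental cycle consists of an edge $xy$ plus the two tree paths $\geod(r,x)$ and $\geod(r,y)$, which are shortest paths, and that gives two shortest paths. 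The third shortest path is needed when, as in Lipton--Tarjan style arguments, we separate on a triangle. There the separator is the union of the three tree paths to the triangle's corners, and the triangle's edges lie inside the separated region.

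To find a balanced separator, I assign each triangle weight $1$ and use the standard centroid argument on the dual (co)tree, which has degree at most $3$. This yields a triangle $\Delta$ such that removing it splits the dual tree into components each of weight at most roughly half. The three shortest paths from $r$ to the corners of $\Delta$ then cut $P$ into pieces. I group these pieces into two subpolygons $P_\mu$ and $P_\omega$ of weight at most a constant fraction $c<1$ of the total, which may need some merging of components. This gives items 1, 2 and 3(a),(c).

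For 3(b), I have to argue that a path of $\Q_\nu$, which is shortest in $P$ (or in the parent polygon), stays shortest in the subpolygon $P_\nu$ containing it. It is shortest in the larger domain and it lies in $P_\nu$, so it is shortest there too. The recursion must therefore recompute the shortest path tree inside $P_\nu$, using the vertices of the new subpolygon, including the separator vertices.

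The main obstacle is item 5 and, through it, item 4. The separator paths become part of the boundaries of both children, so paths get duplicated. I will bound each subpolygon's complexity by its number of triangles. Each child has at most a $c$ fraction of the triangles, which gives height $O(\log n)$. Separator vertices are triangulation vertices, so the triangles of the children partition those of $\nu$, and the total number of triangles per level is at most $O(n)$. Hence the complexity per level is $O(n)$ and the total is $O(n\log n)$. The running time then follows: the work per level is $O(n\log n)$ only naively. To get $O(n\log n)$ overall as stated I would rely on computing the shortest path trees in linear time given a triangulation, or on charging to the cited construction~\cite{Thorup2007,berg24spanners}. I expect this time bound to be the delicate point, so I would defer to those references for it.
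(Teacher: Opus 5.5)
\textbf{Gap: the recursion and the running time.} The step that fails is recomputing a shortest path tree inside every subpolygon $P_\nu$. This breaks the $O(n\log n)$ bound. With Hershberger--Suri it costs $O(n\log n)$ per level, so $O(n\log^2 n)$ overall. The fallback you mention does not exist: for a domain with $h$ holes the best bound, even given a triangulation, is $O(n+h\log h)$, which still gives $O(n\log^2 n)$ when $h=\Theta(n)$. Recomputation also conflicts with your item-5 argument, which assumes a fixed triangulation whose triangles the children partition; a freshly computed tree's edges are generally not triangulation edges.

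The idea you are missing is that no recomputation is needed. Compute one shortest path tree $T$ of $r$ in $P$, and build the triangulation so that it \emph{contains} the edges of $T$. Triangulating first, as you do, does not make $T$ a subgraph, which your fundamental cycles require. Now let $(u,v)$ be a non-tree edge, $w$ the lowest common ancestor of $u$ and $v$, and $C$ its fundamental cycle, splitting $G$ into $G_1$ and $G_2$. Then $T$ restricted to each side is still a spanning tree of that side, rooted at $w$ on the side not containing $r$. The reason is that a tree path reaching a vertex of $C$ continues along $C$ to $w$. So every separator at every level consists of subpaths of the single tree $T$. These are shortest in $P$, hence, by your own observation, shortest in $P_\nu$. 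Each recursive step is then just a linear-time Lipton--Tarjan computation: $O(n)$ per level and $O(n\log n)$ in total, plus $O(n\log n)$ to compute $T$.

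\textbf{The three shortest paths are misidentified.} As a subgraph of the triangulated graph, the three tree paths to the corners of $\Delta=abc$ form a tree and separate nothing by themselves. The boundary between the two parts always contains a non-tree edge of $\Delta$. Arbitrary merging can also violate 3(a). Write $R_{xy}$ for the dual component across $\overline{xy}$. The split $R_{ab}\cup R_{bc}$ versus $R_{ca}\cup\Delta$ has boundary $\overline{ab}\cup\overline{bc}$ plus the tree paths from $c$ and $a$ to their lowest common ancestor, i.e.\ four pieces.

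Instead, cut off only the heaviest component. It has weight between roughly $W/3$ and $W/2$, so the split is $2/3$-balanced and runs along a single fundamental cycle. Then $\mathcal{Q}_\nu$ consists of the two tree paths from $u$ and $v$ to their lowest common ancestor, plus the segment $\overline{uv}$ if that edge lies in $P$. This is exactly what the paper gets from the Lipton--Tarjan fundamental cycle separator.

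Finally, for the dual co-tree to have degree at most three, you must triangulate the holes and the outer face with extra edges. You then need to prune the recursion subtrees whose triangles all lie outside $P$, as the paper does.
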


\begin{proof}
  As stated, these results are implicitly used by
  Thorup~\cite{Thorup2007, Thorup2004}. As his version only claims
  that $\Q_\nu$ may consist of six shortest paths, and does not
  explicitly state the bounds on the complexities of the subpolygons
  $P_\nu$, we give an explicit proof here.

  Thorup's idea is as follows: we compute the shortest path tree $T$
  of some vertex $r$, and compute a triangulation of $P$ that includes
  the edges of $T$. We interpret the result as a triangulated planar
  graph $G$ by including additional edges so that all faces are
  triangles. That is, we triangulate the holes and make sure that the
  outer face is also a triangle. This may use additional edges that do
  not have a straight line embedding. See Figure~\ref{fig:separator}.

  We then use Lipton and Tarjan's fundamental cycle
  separator~\cite{Lipton-Endre}, which states that given a weighted
  triangulated planar graph $G$ and a spanning tree $T$ of $G$, one
  can find two vertices $u$ and $v$, so that the path connecting $u$
  and $w$ in $T$ and the missing edge $(u,v)$ form a cycle $C$ in $G$
  that partitions $G$ into three parts $A$, $B$, and $C$, and so that
  the weight in $A$ and $B$ is at most $2/3^\mathrm{th}$ of the total
  weight.

  Observe that since $T$ is a shortest path tree, the cycle $C$ consists
  of at most three shortest paths: the two paths connecting $u$ and
  $v$ to their lowest common ancestor in $T$, and potentially the line
  segment $\overline{uv}$, if the edge $(u,v)$ of the triangulation
  actually lies inside $P$.

  We now recursively apply this cycle separator to recursively
  partition the triangles of $G$. That is, we split $G$ into two
  subgraphs $G_1$ and $G_2$, so that a triangle of $G$ appears in
  either $G_1$ or $G_2$, and so that the number of triangles for both
  graphs $G_i$ is at most $2N/3$, where $N$ is the number of triangles
  in $G$. The cycle also splits $T$ into two spanning trees $T_1$ and
  $T_2$ (which are also still shortest path trees in $G_1$ and $G_2$,
  respectively).

  Initially, $G$ consists of $N = O(n)$ triangles, hence after
  $O(\log N) = O(\log n)$ rounds of recursion, every subgraph is a
  single triangle.

  Our tree \T will be the recursion tree of this process after
  pruning subtrees in which all triangles in the remaining graph lie
  outside of $P$. The tree thus indeed has height $O(\log n)$
  (property 4). Let $G_\nu$ be the remaining subgraph that we have at
  node $\nu \in \T$. For every leaf $\nu$, let $P_\nu$ be the single
  triangle in $G_\nu$. For every internal node $\nu$ we define $P_\nu$
  to be the union of the triangles in its subtree. This establishes
  properties (1) and (2). Furthermore, define $\Q_\nu$ to be the set
  of at most three shortest paths used by the cycle that we
  constructed to split $G_\nu$ (this establishes properties 3a and
  3c). Observe that these paths are paths in $P_\nu$, and as they are
  shortest paths in $P$ thus also shortest paths in $P_\nu$ (property
  3b).

  Since $P_\nu$ is the union of the triangles of $G_\nu$, the
  complexity of $P_\nu$ is $O(N_\nu)$, where $N_\nu$ is the number of
  triangles in $G_\nu$. At any level of the recursion tree, the total
  number of triangles is at most $N$ (as every triangle appears in
  only one subproblem), and thus the total complexity of the polygons
  on a level is also at most $O(N)=O(n)$. Since the tree has height
  $O(\log n)$, the total complexity is thus indeed $O(n\log n)$. This
  establishes property (5).

  Finally, note that we can compute $T$ in $O(n\log n)$
  time~\cite{HershbergerSuri1999} (or even in $O(n + h\log h)$ time if
  we have an initial triangulation of
  $P$~\cite{wang23new_algor_euclid_short_paths_plane}). Computing the
  cycle separator takes $O(n)$ time~\cite{Lipton-Endre}, and thus the total time to
  compute these separators (and construct the appropriate subpolygons
  and shortest paths) is also $O(n\log n)$.
\end{proof}

\subparagraph{The data structure.} Our data structure now stores the
balanced hierarchical subdivision \T from
Lemma~\ref{lem:separator_subdivision}. We build a point location data
structure~\cite{sarnak86planar_point_locat_using_persis_searc_trees}
on the triangulation induced by the leaves of \T. Additionally, for
each edge $e$ of this triangulation, let $\nu$ be the highest node in
\T for which the edge appears on a shortest path $Q \in \Q_\nu$ (if
such a node exists). We store a pointer from edge $e$ to this shortest
path. We can easily compute these pointers during the construction of
\T.

Consider a node $\nu$. For each of the $O(1)$ shortest paths
$Q \in \Q_\nu$ we will construct the data structure from
Lemma~\ref{lem:graph_and_shortest_path_datastructure} (i.e. the graph
$G^Q$ and the sets of anchor points) with respect to the subpolygon
$P_\nu$. 
Since the total size of all subpolygons $P_\nu$ on each level of the
tree is $O(n)$, the total size of all of these data structures on a
given level of \T is $O(kn)$, and thus $O(nk\log n)$ in
total. Similarly, the total preprocessing time is
$O(nk^2\log^2 n\log(kn))$.







\subparagraph{Query.} Given a pair of query points $s,t$ we find the leaf
triangles $\Delta_s$ and $\Delta_t$ that contain points $s$ and $t$,
respectively. If $s$
and $t$ lie in the same triangle, we can trivially report the length
$\overline{st}$ as their shortest path. Otherwise, a shortest path
from $s$ to $t$ must intersect one of the shortest paths
$Q \in \Q_\nu$ for a common ancestor $\nu$ of the two leaves. In
particular, let $\mu$ be the highest node of \T for which $\geod(s,t)$
intersects a shortest path $Q \in \Q_\mu$. It follows that the
shortest path $\geod(s,t)$ lies inside $P_\mu$, and thus we can use the
Lemma~\ref{lem:graph_and_shortest_path_datastructure} data structure for $Q$
to obtain a path of
length at most $(1+\eps)\dist(s,t)$. Unfortunately, we cannot easily
find this node $\mu$, so we query all
Lemma~\ref{lem:graph_and_shortest_path_datastructure} data structures
for all shortest paths $Q \in \Q_\nu$, for all common ancestors $\nu$
of the leaves representing $\Delta_s$ and $\Delta_t$, and report the
shortest distance found. If either $s$ or $t$ lies on an edge $e$ of
the boundary of some triangle that is part of some shortest path
$Q \in Q_\nu$, we use the pointer stored at $e$ to jump to the highest
such node, and query only the common ancestors of these nodes
instead. Since every query returns the length of an obstacle-avoiding
path, the returned estimate $\hat{\dist}(s,t)$ satisfies
$\dist(s,t)\leq \hat{\dist}(s,t) \leq (1+\eps)\dist(s,t)$ as
desired. Since each query takes $O(k\log n + k^2)$ time, the total
query time is $O(k\log^2 n + k^2\log n)$.

\subparagraph{Improving the query time.} The $O(k\log^2 n)$ term in
the query time is only due to computing the ray shooting queries
needed to compute the sets $X_Q(s)$ for each shortest path $Q$
considered. Next, we argue that we can compute these sets in only
$O(k\log n)$ time by building an additional $O(nk\log n)$ space data
structure. The overall space usage thus remains the same, and the
final query time becomes $O(k^2\log n)$ as claimed.

\begin{lemma}
  \label{lem:improved_query_time}
  In $O(nk\log n)$ time, we can build an $O(nk\log n)$ space data
  structure that given a query point $s$, allows us to compute the
  sets $X_Q(s)$ for all shortest paths
  $Q \in \bigcup_{\nu \in \T(s)} \Q_\nu$, where $\T(s)$ is the path in
  $\T$ from the root to the leaf triangle containing $s$.
\end{lemma}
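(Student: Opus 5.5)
The plan is to replace the $O(\log n)$ independent rounds of ray shooting, one per node on $\T(s)$, by one structure per cone direction that is searched with fractional cascading along the root-to-leaf path. Computing $X_Q(s)$ for a fixed cone boundary direction $\rho$ needs two things: the first point where the ray from $s$ in direction $\rho$ hits the boundary of $P_\nu$, and the first point where it hits $Q$. I would answer both with a single query. For every node $\nu$, every path $Q\in\Q_\nu$ and every one of the $O(k)$ directions, rotate so that the direction is vertical. Take the segments of $Q$ together with the boundary edges of $P_\nu$ and build the vertical trapezoidal decomposition of these $O(|P_\nu|)$ segments inside $P_\nu$. Label each trapezoid with the segment that bounds it from above. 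One point location of $s$ in this decomposition then returns the first segment hit by the upward ray. If that segment belongs to $Q$ and $\overline{sq}\subset P_\nu$, it yields the point $q\in X_Q(s)$. Otherwise the ray hits $\partial P_\nu$ first and that direction contributes nothing. By Lemma~\ref{lem:separator_subdivision}(5), the total size of all these decompositions over all nodes is $O(n\log n)$ per direction, hence $O(nk\log n)$ overall. Each can be built in time linear in its size plus a sort, which gives $O(nk\log n)$ construction time after presorting, or by deriving children's decompositions from parents' decompositions.

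The remaining issue is query time. A naive query does $O(\log n)$ point locations per direction, each costing $O(\log n)$, for $O(k\log^2 n)$ total. To remove one $\log$ factor, I would use the nesting $P_\mu\subseteq P_\nu$ for a child $\mu$ of $\nu$. For a fixed direction, the trapezoidal maps along the path $\T(s)$ form a sequence of planar subdivisions that I would connect using fractional cascading for planar point location (in the spirit of Chazelle--Guibas, or the persistent/slab method of Sarnak--Tarjan with cascaded slab lists). Concretely, I would represent each map by its slab decomposition with respect to $x$-coordinates. The parent's slab $x$-list and each slab's sorted segment list are cascaded into the children along tree edges. Then the first query costs $O(\log n)$ and each subsequent node on $\T(s)$ costs $O(1)$. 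Since $\T$ has bounded degree and $O(\log n)$ height, standard fractional cascading along a tree keeps space linear in the total size, which is $O(n\log n)$ per direction. Summed over the $O(k)$ directions, the query time is $O(k\log n + k\log n)=O(k\log n)$.

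I expect the main obstacle to be the planar point-location cascading. A slab decomposition has quadratic size in the worst case. So I would instead cascade only on the $x$-coordinate lists, which locates the correct vertical slab in each node in $O(1)$ per node. Within the slab, I would locate using a persistent search tree (Sarnak--Tarjan) whose versions are also cascaded. If that does not work cleanly, the fallback is to note that the separator paths of ancestors appear as boundary of $P_\mu$. The trapezoid of $s$ in the child is then contained in the trapezoid of $s$ in the parent after refinement, so one can store, for every child trapezoid, a pointer from the parent's trapezoids to the $O(1)$-amortized overlapping child trapezoids, using a walk along the vertical line through $s$. Either way, a final check is to confirm the visibility test: $\overline{sq}\subset P_\nu$ is equivalent to the first hit being on $Q$ rather than on $\partial P_\nu$, since both sets are included in the same decomposition.
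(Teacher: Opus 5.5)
Your space accounting and your visibility test are fine. The step that carries the lemma, however, is going from $O(\log n)$ to $O(1)$ work per node of $\T(s)$, and neither of your routes establishes it.

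\textbf{The cascading route.} Fractional cascading is a technique for sorted one-dimensional catalogs on a catalog graph of bounded degree. It is not available as a black box for iterated planar point location; Chazelle and Liu gave pointer-machine lower bounds showing that it does not carry over to two dimensions in general. In your sketch, cascading the $x$-lists only identifies the slab of $s$ at each node. The remaining search is in the $y$-ordered list of segments crossing the line $x=s_x$, i.e.\ a version of a persistent tree that depends on $s_x$. At $\nu$ and at its child these lists range over different segment sets. ``Cascading the versions'' would therefore mean linking every overlapping pair of parent and child slabs, which is a catalog graph of unbounded degree and, as you note yourself, of quadratic size.

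\textbf{The fallback.} It states the containment the wrong way round. Adding segments cuts existing vertical walls short, so the child trapezoid of $s$ can extend past the walls of the parent trapezoid and is in general not contained in it. In addition, an ``$O(1)$-amortized'' overlap found by walking along the vertical line through $s$ is not a worst-case query bound. You also give no reason why that overlap should be bounded at all.

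\textbf{How to repair the fallback.} Once repaired, it is a genuinely different proof from the paper's. Decompose only $\partial P_\nu$ at each node.
(i) Every vertex of $\partial P_\mu$ is a polygon vertex, hence a vertex of $\partial P_\nu$, and $\partial P_\mu\subseteq\partial P_\nu\cup\bigcup\Q_\nu$. So each wall of the child's decomposition is a prefix of a wall of the parent's. Consequently a parent trapezoid, cut along the paths of $\Q_\nu$, lies piecewise inside single child trapezoids; this is the reverse of what you wrote.
(ii) A trapezoid is a convex subset of $P_\nu$, and each path of $\Q_\nu$ is a shortest path in $P_\nu$. Hence each path meets the trapezoid in at most one straight chord, so there are at most four pieces. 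The upward ray from $s$ leaves $P_\nu$ through the top of its trapezoid. So the point this direction contributes to $X_Q(s)$, if any, is where the vertical line through $s$ crosses the chord of $Q$ above $s$.

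Store the chords and a pointer from each piece to its child trapezoid. A query is then one point location plus $O(1)$ work per level, i.e.\ $O(\log n)$ per direction, with $O(n\log n)$ space per direction. You would still have to argue the construction time.

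\textbf{The paper's route.} The paper avoids per-node structures entirely. Per direction, it takes the maximal vertical segment $\overline{sp}$ in $P$. It stores all $O(n)$ separator edges, each colored by the level $\ell_\nu$ of the highest node $\nu$ using it, in a single Sarnak--Tarjan sweep. The status structure of that sweep is a partially persistent colored range-reporting structure (Gupta, Janardan, and Smid). One query with $[s_y,p_y]$ reports $O(1)$ edges per color in $O(\log n)$ time, and the hit on $Q_\nu$ is among the first three reported hits of color at least $\ell_\nu$.

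The repaired route needs only trapezoidal maps and one geometric observation. The paper's route needs no per-node maps and gets its $O(n\log n)$ preprocessing per direction directly from the sweep.
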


\begin{proof}
  Fix the direction of one of the bounding lines of a cone, and assume
  without loss of generality the direction is vertically upward. We
  build one additional data structure, that given a query point $s$,
  allows us to compute the set $Y(s)$ of all $O(\log n)$ intersection
  points between the upward vertical ray from $s$ and all shortest
  paths $Q \in \Q_\nu$ of nodes $\nu$ on the root-to-leaf path $\T(s)$. In particular, we will answer such
  a query in $O(\log n)$ time, using $O(n\log n)$ space. We build such
  a structure for all $O(k)$ cone directions, leading to a total of
  $O(kn\log n)$ additional space. The total time to compute all
  relevant sets $X_Q(s)$ will be $O(k\log n)$.

  Rather than storing every shortest path $Q$ in an individual data
  structure that allows us to test for intersections of $Q$ with a
  vertical segment, we build a single structure on all edges that appear
  in shortest paths over all nodes of \T. For each such an edge $e$, let
  $\nu$ be the highest node in \T for which $e$ is an edge of a shortest
  path $Q \in \Q_\nu$, and let $\ell_e := \ell_\nu$ be the level of this
  node $\nu$ in \T. We interpret $\ell_e$ as a color. See
  Figure~\ref{fig:colored_ray_shooting} for an illustration.

  \begin{figure}[tb]
    \centering
    \includegraphics{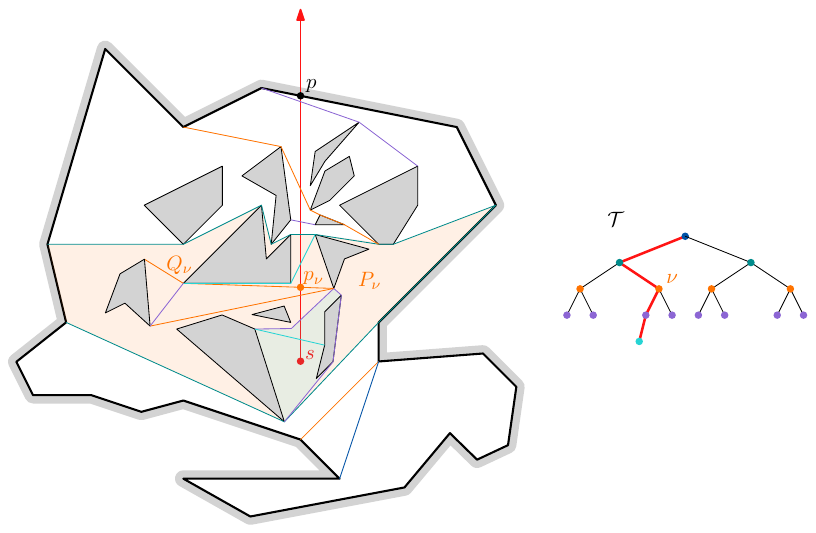}
    \caption{Schematic illustration of the ray shooting data
      structure. The point $p_\nu$ on shortest path $Q_\nu$ is among
      the first three edges of level $\geq \ell_\nu$ intersected by
      the ray.
    }
    \label{fig:colored_ray_shooting}
  \end{figure}

  Consider the nodes on the root-to-leaf path corresponding to a query
  with a point $s$, and let $\overline{sp}$ be the maximal vertically
  upward segment in $P$. Furthermore, let $\nu$ be a node on this path,
  and let $Q_\nu \in \Q_\nu$ be a shortest path that contributes an
  intersection point $p_\nu$ to $Y(s)$. It follows that
  $p_\nu = \overline{sp} \cap e_\nu$, for some for some edge
  $e_\nu \in Q_\nu$ that (thus) has level at least $\ell_\nu$; indeed
  $Q_\nu$ lies inside the subpolygon $P_\nu$ of $P$, and thus $p_\nu$
  must lie on $\overline{sp}$. Observe that, since the shortest paths of
  a node $\mu$ actually split $P_\mu$ into the subpolygons of its
  children, the interior of the subsegment $\overline{sp_\nu}$
  intersects no edges of color more than $\ell_\nu$, and the only
  intersection points with edges of color $\ell_\nu$ are with the (at
  most two) other shortest paths in $\Q_\nu$. Since $\overline{sp_\nu}$
  is a shortest path in $P_\nu$, each such path can intersect
  $\overline{sp_\nu}$ at most once. Hence, it follows that $e_\nu$ is
  among the first three intersection points of $\overline{sp}$ (as seen
  form $s$) whose color is at least $\ell_\nu$. We thus build a data
  structure that given a vertical query segment $\overline{sp}$ allows
  us to report, for each color, the edges of that color intersected by
  $\overline{sp}$ in order.

  We can solve this problem using a partially
  persistent~\cite{driscoll89makin_data_struc_persis} data structure for
  dynamic colored range reporting.  Gupta, Janardan, and
  Smid~\cite[Theorem 3.1]{gupta95furth_resul_gener_inter_searc_probl}
  show that we can store a dynamic set of $n$ colored points in $\R^1$
  so that given a query interval $I$ one can report the $i$ distinct
  colors of the points in $I$ in $O(\log n + i)$ time, and support
  insertions and deletions in $O(\log n)$ time. Their data structure
  consists of two parts: (i) for each color $c$ a binary search tree
  $T_c$ of the points of that color, and (ii) a data structure $L$ that
  allows them to report the leftmost point of each color in $I$. By
  maintaining cross pointers from the points in $L$ into the appropriate
  points in $T_c$, we can then answer our queries in $O(\log n)$ time:
  we have at most $O(\log n)$ colors, so $i = O(\log n)$, and for every
  color we report at most $O(1)$ additional points. Both the trees $T_c$
  and the data structure $L$ (which essentially also a binary search
  tree) can be made partially
  persistent~\cite{driscoll89makin_data_struc_persis} so that queries
  still take $O(\log n + i)$ time, and updates take $O(\log n)$ time and
  space.

  We now essentially use the same approach as Sarnak and Tarjan to build
  a data structure for vertical ray
  shooting~\cite{sarnak86planar_point_locat_using_persis_searc_trees}:
  we sweep the plane using a vertical line, while maintaining the line
  segments (edges used by the shortest paths) that intersect the sweep
  line in a status structure, ordered from bottom to top. However,
  rather than using a partially persistent red black tree, we use the
  partially persistent color reporting structure described above. Since
  there are a total of $O(n)$ edges, this sweep processes $O(n)$ events,
  each of which uses $O(\log n)$ time and space. Thus, we obtain an
  $O(n\log n)$ space data structure as claimed.

  To answer a query with a vertical line segment $\overline{sp}$, we
  spend $O(\log n)$ time to find the right version of the data
  structure; i.e. the one we had at ``time'' $s_x$, and then query the
  color reporting structure with the interval $[s_y,p_y]$. As argued,
  this reports $O(\log n)$ edges; at most a constant number per color
  $\ell_\nu$. These are the points contributed by the shortest paths in
  $\Q_\nu$.
\end{proof}

Finally, we plug in that $k = \lceil \frac{1}{\eps} \rceil$, which
gives us the following result.

\improvedoracle*

\begin{remark}
  Thorup's final approach~\cite{Thorup2007} uses one additional idea:
  that one can guarantee that the boundary of a subgraph $G_\nu$ (used
  during the construction in Lemma~\ref{lem:separator_subdivision})
  intersects only $O(1)$ shortest paths from ancestor separator
  paths. Therefore, he can also store anchor points on these paths,
  and query only the shortest paths of the lowest common ancestor of
  the leaves corresponding to $s$ and $t$.

  It is not clear to us if this idea is still applicable with our
  approach. For a shortest path $B$ on the boundary of some subpolyon
  $P_\nu$, it is not clear in which (sub)polygon to place the
  path $B$ so that we can invoke
  Lemma~\ref{lem:graph_and_shortest_path_datastructure}. 
\end{remark}

\frank{say s.t. about reporting the path itself}

\section{Dynamic nearest neighbor searching}
\label{sec:Dynamic_nearest_neighbor_searching}

In this section, we develop our data structure for approximate
nearest neighbor searching with a dynamic set of sites $S$ in a
polygonal domain $P$. In
Section~\ref{sub:Computing_an_close_neighbor_in_a_triangle} we first
develop a simple solution for when $S$ is simply a set of sites in the
plane. In Section~\ref{sub:Weighted_Voronoi_Diagram} we consider how
to solve a restricted case in which we are given a shortest path $Q$
in $P$, and we have to only answer queries $q$ for which the shortest
path between $q$ and any site in $S$ intersects $Q$. Then finally, in
Section~\ref{sub:The_final_data_structure} we combine these
results with the balanced hierarchical subdivision to obtain our data
structure for the general problem.

\subsection{Dynamic Euclidean $\eps$-close neighbor searching}
\label{sub:Computing_an_close_neighbor_in_a_triangle}

Let $S$ be a dynamic set of $m$ points in $\R^2$. We present a simple
$O(km\log m)$ space data structure that given a query point $q$ can
report a site $s$ for which the distance $\dist(q,s)$ is at most
$(1+\eps)\dist(q,s^*)$ in $O(k\log m)$ time. Here, $s^*$ is a site in
$S$ closest to $q$. Our data structure supports insertions and
deletions in $O(k\log m)$ time as well. Our data structure essentially
uses the cone based approach from
Section~\ref{sub:Clarkson's_Cone_graph}; i.e. for each cone $C$ in the
cone family $\F_\eps$, we build a dynamic data structure that can report
the site $s \in S \cap C_q$ with minimum cone distance. By the
argument from Lemma~\ref{lemm:Clarkson} it follows that this then
yields an $\eps$-close site.



\begin{lemma}
  \label{lem:nearest_in_cone}
  Let $C$ be a cone, and let $S$ be a set of $m$ sites in
  $\R^2$. There is an $O(m\log m)$ space data structure that, given a
  query point $q \in \R^2$ can report the point $s \in C_q \cap S$ with minimal
  cone distance $d^C(q,s)$ in $O(\log^2 m)$ time. Insertions and
  deletions of sites are supported in amortized $O(\log^2 m)$ time.
\end{lemma}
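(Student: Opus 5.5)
We reduce the query to a two-dimensional dominance-type range-minimum query and then make it dynamic with a two-level balanced search tree. First we change coordinates. Rotate so that the cone direction $r_C$ points along the positive $x$-axis, and let $c_\ell$ and $c_r$ be the two boundary rays of $C$ (at angles $\alpha/2$ and $-\alpha/2$). Pick unit normals $n_\ell$ and $n_r$ so that a point $s$ lies in $C_q$ exactly when
\[
n_\ell\cdot(s-q)\geq 0 \quad\text{and}\quad n_r\cdot(s-q)\geq 0 .
\]
In the linear coordinates $u(s)=n_\ell\cdot s$ and $w(s)=n_r\cdot s$, the condition $s\in C_q$ therefore becomes the dominance condition $u(s)\geq u(q)$ and $w(s)\geq w(q)$. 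Inside $C_q$ we have $r_C\cdot(s-q)\ge 0$, since the cone angle is below $\pi$. Hence minimizing the cone distance $d^C(q,s)=|r_C\cdot(s-q)|$ over $s\in S\cap C_q$ is the same as minimizing the fixed linear function $r_C\cdot s$. The query thus becomes: among the sites in the quadrant $[u(q),\infty)\times[w(q),\infty)$, report the one minimizing the key $r_C\cdot s$. This reduction is purely algebraic, and the only care needed is in handling the boundary rays as closed.

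Next we build the static structure. Use a two-level range tree: the primary tree is balanced on $u$, and each node stores a secondary balanced binary search tree on $w$ over the sites in its subtree. Every secondary node is augmented with the minimum key $r_C\cdot s$ in its subtree. A query decomposes $[u(q),\infty)$ into $O(\log m)$ canonical primary nodes. In each of them we decompose $[w(q),\infty)$ into $O(\log m)$ canonical secondary subtrees and read off their stored minima. The overall minimum is found in $O(\log^2 m)$ time. Each site appears in $O(\log m)$ secondary trees, which gives $O(m\log m)$ space.

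The main obstacle is supporting updates, since the primary tree must stay balanced while its nodes carry large secondary structures. We will use a weight-balanced BB[$\alpha$] tree, or a scapegoat tree, as the primary tree. An insertion or deletion touches $O(\log m)$ secondary trees, and each is updated in $O(\log m)$ time. The secondary trees are ordinary red-black trees augmented with subtree minima, and these minima are maintained under rotations in $O(1)$ per rotation. When a primary subtree of size $t$ goes out of balance, we rebuild it from scratch. Presorting by $w$ and merging let us build all its secondary trees bottom-up in $O(t\log t)$ time. The standard weight-balance argument shows such a rebuild happens only after $\Omega(t)$ updates in that subtree, so the amortized cost is $O(\log m)$ per update per level, and $O(\log^2 m)$ overall. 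Deletions can alternatively be handled lazily, with a global rebuild once half of the sites are deleted. Finally, to make $m$ dynamic, we rebuild everything when $m$ doubles or halves, which adds only $O(\log m)$ amortized cost. This gives amortized $O(\log^2 m)$ updates, $O(\log^2 m)$ queries, and $O(m\log m)$ space, as claimed.
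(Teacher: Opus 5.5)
Your proposal is correct and follows essentially the same route as the paper. Both project the sites onto the normals of the two cone boundaries, turning cone membership into a dominance range, and store them in a 2D range tree whose secondary nodes are augmented with the minimum key along the cone direction. Both also keep the primary tree balanced by partial rebuilding and use red-black secondary trees: the paper uses Andersson's general balanced trees, which are the scapegoat-type trees you propose. Your write-up is somewhat more explicit than the paper's about the $O(t\log t)$ subtree rebuild and why it amortizes to $O(\log^2 m)$ per update.
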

\begin{proof}
  Assume without loss of generality that the cone direction of $C$ is
  vertical and points upwards. We project each point $s$ onto lines
  perpendicular to the cone boundaries $c_\ell$ and $c_r$ (oriented
  upward). Let $s_\ell$ and $s_r$ denote the resulting values, and
  define $s'=(s_\ell,s_r)$ and $S' = \{s' \mid s \in S\}$. Given a
  query point $q$ our goal is thus to report a point $s$ with minimal
  $y$-coordinate for which
  $s' \in [q_\ell,\infty) \times [q_r,\infty)$. See
  Figure~\ref{fig:approx_euclidean_nn} for an illustration.

  \begin{figure}[tb]
    \centering
    \includegraphics{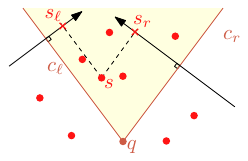}
    \caption{Our goal is to return the site $s$ with minimum
      $y$-coordinate among the sites whose projection lies in the
      range $[q_\ell,\infty) \times [q_r,\infty)$.
    }
    \label{fig:approx_euclidean_nn}
  \end{figure}

  We can thus store the projected set $S'$ in a 2D range tree~\cite{GA_book},
  in which we annotate each node $\nu$ in every associated tree with
  the (point with) minimum $y$-coordinate among all points in the
  subtree rooted at $\nu$. The total space remains $O(m\log m)$. Given
  a query, we can then find $O(\log^2 m)$ nodes in the associated
  trees that together represent exactly the subset of points from $S'$
  in the query range $[q_\ell,\infty) \times [q_r,\infty)$. Each such
  node stores the point with minimum $y$-coordinate from its subtree,
  so it suffices to just return the minimum of these values. Hence,
  queries take $O(\log^2 m)$ time.

  We implement each associated tree using a red black tree, and the
  primary tree using Anderssons' \cite{Andersson} binary trees. Since
  we can recompute the minimum value stored at some node $\nu$ from
  the minima of its children, updates to a secondary tree just take
  $O(\log m)$ time. Updates in the primary tree are handled by partial
  rebuilding~\cite{Andersson}, and thus lead to $O(\log^2 m)$
    amortized insertions and deletions.
\end{proof}

\frank{I wonder if we can do this with $O(km)$ space instead.}
\begin{corollary}\label{cor:neighbourhood-efficient}
  Let $S$ be a set of $m$ sites in $\R^2$. There is an $O(km\log m)$
  space data structure that, given a query point $q \in \R^2$ can
  report an $1/k$-close site from $S$ in $O(k\log^2 m)$ time. Insertions
  and deletions of sites are supported in amortized $O(k\log^2 m)$
  time.
\end{corollary}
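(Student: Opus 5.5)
We build one copy of the structure from Lemma~\ref{lem:nearest_in_cone} for every cone $C$ in the cone family $\F_\eps$; this family has $O(k)$ cones with $k=\lceil 1/\eps\rceil$. Each copy stores all of $S$, so the space is $O(km\log m)$. An insertion or deletion of a site $s$ is applied to each of the $O(k)$ copies, which gives $O(k\log^2 m)$ amortized update time. For a query point $q$, we query every copy with $q$ and obtain, for each cone $C$, the site $s_C\in C_q\cap S$ of minimum cone distance $d^C(q,s_C)$ (if $C_q\cap S$ is nonempty). We then compute the Euclidean distances $\|q-s_C\|$ explicitly and return the candidate that minimizes this distance. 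The query time is $O(k\log^2 m)$.

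Correctness comes from Lemma~\ref{lem:path_via_cone_neighbor} applied to the plane without obstacles (take $P=\R^2$, so every point is visible and there are no polygon vertices). Let $s^*$ be a site closest to $q$. Since the cones cover the plane, there is a cone $C$ with $s^*\in C_q$. Set $u=s_C$. Then $d^C(q,u)\le d^C(q,s^*)$, and condition (iii) holds trivially because there are no vertices. The calculation in that proof works with the triangle $q,u,m$, where $m$ is the point of $\overline{qs^*}$ on the line through $u$ perpendicular to the cone direction. It shows $\|u-m\|\le (\eps/4)\|q-m\|$, and therefore
\[
\|q-u\|\le \|q-m\|+\|m-u\|\le (1+\eps/4)\|q-m\|\le (1+\eps)\|q-s^*\|.
\]
This is also what conclusion (2) of the lemma gives if we take $\gamma=\overline{us^*}$ and discard the final segment. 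The site we return is at least as close to $q$ as $s_C$, so it is $\eps$-close.

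The main obstacle is a mismatch of angles. Lemma~\ref{lem:path_via_cone_neighbor} needs cones of angle $\eps/8$, while the corollary is phrased with $1/k$. We therefore instantiate $\F$ with parameter $1/k\le\eps$ and note that it still has $O(k)$ cones, which keeps the constants harmless.
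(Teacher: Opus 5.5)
Your proposal is correct and follows the paper's argument. Like the paper, you keep one copy of the Lemma~\ref{lem:nearest_in_cone} structure per cone of the $O(k)$-size cone family and return the closest of the per-cone candidates of minimum cone distance. Correctness comes from Clarkson's cone argument, which you make explicit by specializing Lemma~\ref{lem:path_via_cone_neighbor} to the obstacle-free plane; the angle issue you flag is only a harmless reparametrization with $1/k$ in place of $\eps$.
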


\subsection{Maintaining an additively weighted Voronoi diagram}
\label{sub:Weighted_Voronoi_Diagram}

Let $S$ be the set of $m$ sites in the polygonal domain $P$, and let
$Q$ be a shortest path. We describe a dynamic data structure that,
given a query point $q \in P$, can report an $\eps$-close site
$s \in S$, provided that the shortest path from $q$ to the closest
site $s^* \in S$ intersects $Q$. More specifically, our goal is to
compute a site $s$ which achieves the following minimum

\begin{align}\label{eq:weighted-voronoi}
  \min_{s\in S}\anchorDistancePoints{q}{s} &= \min_{s\in S}\min_{a \in A_Q(q), b \in A_Q(s)} \hat{\dist}_a +
    \hat{\dist}_b + \dist(a,b).
\end{align}

If the shortest path between $s^*$ and $q$ intersects $Q$, then by
Lemma~\ref{lem:graph_and_shortest_path_datastructure} this value is indeed at most
$(1+\eps)\dist(s^*,q)$, and thus the site $s$ that achieves this minimum
is $\eps$-close to $q$. We rewrite Equation~\ref{eq:weighted-voronoi}
to $\min_{a\in A_Q(q)} L_Q(a)$, where we define
\begin{align*}
  L_Q(a) := \hat{\dist}_a + \min_{s\in S, b\in A_Q(s)}\hat{\dist}_b + \dist(a,b).
\end{align*}
Finding the optimal $b$ for the evaluation of $L_Q(a)$ is essentially
the closest site problem with additive weights
$\hat{\dist}_b$. Furthermore, we only have to consider a 1-dimensional
closest site problem as all anchor points lie on a shortest path $Q = \pi(u,v)$. This means we can map the coordinates of each point $a\in Q$ to the one-dimensional coordinate $a_x = \dist(u,a)$, this preserves distances as for all $a,b\in \pi(u,v)$ we have $\dist(a,b) = |\dist(u,b) - \dist(u,a)| = \dist(a_x, b_x)$.

We will argue that we can efficiently evaluate $L_Q(x)$ for any
arbitrary query value $x$.

\begin{figure}
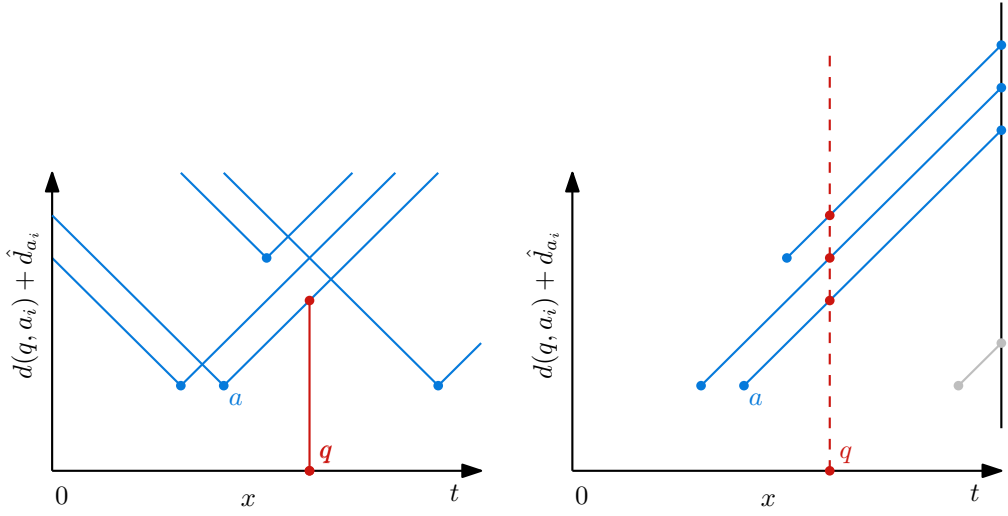

  \includegraphics[page=2]{Datastructures}
  \quad
  \includegraphics[page=3]{Datastructures}
  \caption{(a) For a query $q$ we want to find the lowest intersection
    of the line at $q = x$ with the graph of the function $d(a,\cdot)$
    of an anchor point $a \in A_Q(s)$. (b) The right half-lines of our
    V-functions. Note that the half lines intersect the vertical line
    at $q$ in the same order as at $t$. Our data structure would find
    the site to the left of $q = x$ which makes the lowest
    intersection with $x = t$. }
    \label{fig:LQ-query}
\end{figure}

\begin{lemma}\label{lem:LQ-datastructure}
    Let $t\in \mathbb{R}_{\geq 0}$ be a parameter and let $A\subset [0,t]\times \mathbb{R}_{\geq 0}$ be a dynamic set of points, where for each $a\in A$, the value $a_0$ is a one-dimensional coordinate and $a_1$ is a positive additive weight. There exists an $O(|A|)$ space data structure that can
    \begin{itemize}
      \item Report for any $q\in [0,t]$ the additively weighted closest point in $A$, i.e. the point $b$ such that $b = \argmin_{a\in A}a_1 + \dist(a_0, q)$.
      \item Support insertions and deletions to $A$ in $O(\log |A|)$ amortized time
    \end{itemize}
\end{lemma}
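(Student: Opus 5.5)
Split the weighted distance $a_1 + |a_0 - q|$ into the two branches of its V-shape. For $a_0 \le q$ it equals $(a_1 - a_0) + q$. For $a_0 \ge q$ it equals $(a_1 + a_0) - q$. So the additively weighted closest point to $q$ is the better of two candidates:
\begin{itemize}
\item the point $a$ with $a_0 \le q$ that minimizes the key $a_1 - a_0$;
\item the point $a$ with $a_0 \ge q$ that minimizes the key $a_1 + a_0$.
\end{itemize}
The offsets $+q$ and $-q$ are the same for every candidate on a given side. Hence on each side the order of the half-lines at $q$ is the same as the order of their keys, which is the observation in Figure~\ref{fig:LQ-query}(b). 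Both are one-dimensional "prefix/suffix minimum of a key" queries.

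For the data structure, store $A$ in a balanced binary search tree keyed on the coordinate $a_0$, with ties broken arbitrarily but consistently, for instance by an insertion id. A red-black tree has $O(\log|A|)$ worst-case updates. Augment every node $\nu$ with two values: the element of minimum $a_1 - a_0$ in its subtree, and the element of minimum $a_1 + a_0$ in its subtree. Each augmented value is computed from the node's own element and the stored values of its two children in $O(1)$ time. By the standard augmentation argument, insertions, deletions and the rotations they cause keep these fields correct with $O(\log|A|)$ extra work. The tree has one node per point and $O(1)$ fields per node, so the space is $O(|A|)$.

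To answer a query $q$, search for $q$ along one root-to-leaf path. The elements with $a_0 \le q$ form the disjoint union of $O(\log|A|)$ elements and subtrees hanging to the left of the search path. Take the minimum of the stored $a_1 - a_0$ values over these to get the left candidate, and do the symmetric computation with $a_1 + a_0$ for $a_0 \ge q$. Then evaluate both candidates with the true cost $a_1 + |a_0 - q|$ and return the smaller.

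Correctness rests on the branch split. For the true minimizer $b$, if $b_0 \le q$ then its cost equals $(b_1 - b_0) + q$, and this value is at least the left candidate's cost, since the left candidate has a key no larger than $b$'s. The case $b_0 \ge q$ is symmetric. Conversely, every returned value is an actual cost $a_1 + |a_0 - q|$, so the answer is exact.

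The query therefore takes $O(\log|A|)$ time. The statement does not specify a query time, but this bound is what the later nearest-neighbor bounds require. Updates take $O(\log|A|)$ worst-case time, which is stronger than the amortized bound claimed. The main obstacle is the branch-splitting observation itself: the lower envelope of V-shaped functions splits into two independent one-sided minimum problems, and this is what avoids maintaining a full kinetic envelope. The remaining step to check is that $\min(\cdot, \cdot)$ is a decomposable augmentation preserved under rotations, which is standard.
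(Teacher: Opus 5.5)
Your proof is correct and follows the paper's approach. The paper also splits each V into its left and right half-lines and ranks the parallel right half-lines by one scalar: their height $a_1 + t - a_0$ at $x=t$, which is your key $a_1 - a_0$ shifted by the constant $t$. It then takes the minimum over the prefix $a_0 \in [0,q]$, and handles the left half-lines symmetrically. The paper stores the pairs $(a_0, a_1+t-a_0)$ in a priority search tree. Your red-black tree augmented with subtree minima is a slightly leaner structure for this same one-dimensional range-minimum query, with the same $O(\log|A|)$ query and update bounds.
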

\begin{proof}
  For each point $a \in A$ we consider the distance function
  $f(a,x) = a_1 + \dist(a_0, x)$. The graph of the this function
  is ``V''-shaped, with its lowest point at $(a_0, a_1)$. We store the
  right half-lines and left half-lines of all these V-shaped functions
  separately, we will show how to store the right half-lines in such a
  way that we can query a coordinate $q$ to find the lowest half-line
  at that coordinate. We store the left half-lines
  symmetrically. These data structures allow us to find the additively
  weighted closest point in $A$ by taking the minimum of the results
  of our two half-line data structures, see Figure~\ref{fig:LQ-query}.

  All right half-lines have the same angle ($\pi/4$ compared to the
  $x$-axis) and they therefore have a clear ordering based on the
  intercept of their supporting line. We find this ordering by
  computing the intersection of each of these half-lines with the
  vertical line at $x = t$. Let $a_2$ be the $y$-coordinate of this
  intersection point. We store each point $a\in A$ as $(a_0, a_2)$ in
  a \emph{priority search tree}~\cite[Ch 10.2]{GA_book}. Such a tree
  supports insertions and deletions of points in $O(\log |A|)$ time
  (e.g. by implementing the base tree using a red black tree). Given a
  query interval $X$ the priority search tree can return a point with
  minimum $y$-coordinate among $A \cap (X \times \R)$ in $O(\log |A|)$
  time.

  So, to find the lowest right half-line at a coordinate $q$ we query
  the priority search tree for the lowest point in the interval
  $[0,q]$. Clearly a right half-line passes over the coordinate $q$ if
  and only if $a_0\in [0,q]$. As each of the right half-lines are
  parallel, the lowest of these lines is exactly the half-line which
  has the lowest intersection with the line at $x = t$ and therefore
  the smallest $a_2$ value.
\end{proof}

We can now describe a data structure to efficiently evaluate
Equation~\ref{eq:weighted-voronoi}.

\begin{lemma}
  \label{lem:eps_closest_via_shortest_path}
  Let $P$ be a polygonal domain with $n$ vertices, let $Q$ be a
  shortest path in $P$, and let $S$ be a dynamic set of $m$ sites in
  $P$. There exists an $O(nk + mk)$ space data structure such that:
  \begin{itemize}
  \item For any point $q\in P$ we can find $\min_{s\in S}\anchorDistancePoints{q}{s}$
    and the site which achieves this minimum in
    $O(k\log n + k^2 + k\log (mk))$ time.
  \item We can insert sites in amortized $O(k\log n + k^2 + k\log (mk))$
    time, and delete sites in amortized $O(k\log(mk))$ time.
  \end{itemize}
  Constructing an initially empty data structure takes $O(nk^2\log n\log(kn))$ time.
\end{lemma}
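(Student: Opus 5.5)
The plan is to combine the static structure of Lemma~\ref{lem:graph_and_shortest_path_datastructure} with the dynamic lower-envelope structure of Lemma~\ref{lem:LQ-datastructure}, applied to the one-dimensional parametrization of $Q=\geod(u,v)$ by $a_x=\dist(u,a)$.

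\textbf{Static part and preprocessing.} First build the data structure of Lemma~\ref{lem:graph_and_shortest_path_datastructure} for $P$ and $Q$. This means:
\begin{itemize}
\item the graph $\Gr$;
\item the anchor sets $A_Q(v)$ for all polygon vertices;
\item the ray-shooting structures used to compute $N(s)$ and $X_Q(s)$;
\item the prefix lengths along $Q$.
\end{itemize}
By Lemma~\ref{lem:anchor_points_for_vertices}, this takes $O(nk^2\log n\log(kn))$ time and $O(nk)$ space. It also gives the stated construction time, since the dynamic part starts empty.

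\textbf{Dynamic part and updates.} Next, maintain an instance $D$ of Lemma~\ref{lem:LQ-datastructure} with $t=|Q|$. For every site $s\in S$ and every anchor $b\in A_Q(s)$, $D$ stores the point $(b_x,\hat{\dist}_b)$, tagged with a pointer to $s$. For each $s$ we also keep the list of its $O(k)$ stored anchors.

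Since every site contributes $O(k)$ points, $D$ has $O(mk)$ points and uses $O(mk)$ space. Together with the static part, the total space is $O(nk+mk)$.

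To insert a site $s$, compute $A_Q(s)$ with its weights in $O(k\log n+k^2)$ time by Lemma~\ref{lem:query_anchor_set}. Then insert its $O(k)$ points into $D$, each in $O(\log(mk))$ amortized time.

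To delete a site, walk its stored anchor list and remove each point from $D$. This takes $O(k\log(mk))$ amortized time, and no anchor recomputation is needed.

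\textbf{Queries.} For a query $q$, compute $A_Q(q)$ in $O(k\log n+k^2)$ time by Lemma~\ref{lem:query_anchor_set}. For each $a\in A_Q(q)$, query $D$ at the coordinate $a_x$. This returns
\[
\min_{s\in S,\,b\in A_Q(s)} \hat{\dist}_b+\dist(a,b)
\]
together with the site owning the minimizing $b$. Adding $\hat{\dist}_a$ to this value gives $L_Q(a)$.

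Return the minimum of $L_Q(a)$ over all $a\in A_Q(q)$, and the site attaining it. By the rewriting of Equation~\ref{eq:weighted-voronoi} as $\min_{a} L_Q(a)$, this value equals $\min_{s}\anchorDistancePoints{q}{s}$. The cost is $O(k)$ queries of $O(\log(mk))$ each, which gives the stated query bound.

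\textbf{Main obstacle.} The step needing care is the correctness of the one-dimensional reduction: that $\dist(a,b)=|a_x-b_x|$ for all anchor points on $Q$. This holds because every anchor lies on $Q$ and $Q$ is a shortest path, so the subpath length is additive along $Q$, as already noted before Lemma~\ref{lem:LQ-datastructure}.

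A second point to check is that the weights $\hat{\dist}_b$ depend only on $s$ and the static structure, not on the other sites. Hence they never change while $S$ evolves, and storing them once at insertion time is valid.
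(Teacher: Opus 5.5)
Your proposal is correct and follows essentially the same route as the paper. You build the Lemma~\ref{lem:graph_and_shortest_path_datastructure} structure statically, store each site's $O(k)$ weighted anchors $(\dist(u,b),\hat{\dist}_b)$ in the Lemma~\ref{lem:LQ-datastructure} structure, and answer a query by evaluating $L_Q(a)$ for each $a\in A_Q(q)$; your $O(k\log n+k^2)$ cost for computing $A_Q(q)$ via Lemma~\ref{lem:query_anchor_set} is the accounting that actually matches the stated query bound, whereas the paper's proof text loosely writes $O(k^2\log n)$ for that step.
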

\begin{proof}
  We preprocess $P$ and $Q$ using
  Lemma~\ref{lem:graph_and_shortest_path_datastructure}; i.e. in
  $O(nk^2\log n\log(kn))$ time, we build an $O(nk)$ space data
  structure on $Q$ that allows us to efficiently compute, for each
  site $s \in S$, a set $A_Q(s)$ of $O(k)$ anchor points on $Q$. We
  will maintain these anchor points in the data structure from
  Lemma~\ref{lem:LQ-datastructure}. In particular, let $u$ be one of
  the endpoints of $Q$, for each anchor point $a \in A_Q(s)$ we then
  store the tuple $(\dist(u, a), \hat{\dist}_a)$. We store a total of
  $O(mk)$ anchor points, so the Lemma~\ref{lem:LQ-datastructure} data
  structure uses $O(mk)$ space. Hence, the total space usage is
  $O(nk+mk)$.

  When we insert a new site $s \in S$, we query the
  Lemma~\ref{lem:graph_and_shortest_path_datastructure} data structure
  to find the anchor points in $O(k\log n + k^2)$ time, and then
  insert them into the Lemma~\ref{lem:LQ-datastructure} data
  structure. Hence, this takes a total of
  $O(k\log n + k^2 + k\log (km))$ time. To delete a site $s$, we
  simply remove its $O(k)$ anchor points from the
  Lemma~\ref{lem:LQ-datastructure} data structure in $O(k\log (km))$
  time.

  To answer a query $q$, we use the
  Lemma~\ref{lem:graph_and_shortest_path_datastructure} data structure
  to find $O(k)$ additional anchor points $A_Q(q)$ in $O(k^2\log n)$
  time. For each of these anchor points $a$, we evaluate $L_Q(a)$
  using the Lemma~\ref{lem:LQ-datastructure} data structure in
  $O(\log(mk))$ time, and return the overall minimum value
  $\min_{s\in S}\anchorDistancePoints{q}{s} = \min_{a\in A_Q(q)}L_Q(a)$. Hence, we
  correctly answer a query in $O(k^2\log n + k\log(mk))$ time in
  total.
\end{proof}

\subsection{The final data structure}
\label{sub:The_final_data_structure}

Our data structure for general approximate nearest neighbor queries
uses the same approach as in
Section~\ref{sec:An_improved_data_structure_for_distance_queries}. Our
data structure consists of the balanced hierarchical separator \T from
Lemma~\ref{lem:separator_subdivision}, in which we store associated
structures at each node. We again preprocess the triangulation formed
by the leaves for efficient point location, and store a pointer from
each edge $e$ to the highest shortest path $Q$ in some $\Q_\nu$ that
uses edge $e$.

For each internal node $\nu$, and for each shortest path
$Q \in \Q_\nu$, we now build an associated
Lemma~\ref{lem:eps_closest_via_shortest_path} data structure on
$Q$. This structure will store a subset $S_{\nu,Q}$ of the sites. In
particular, the sites in $S \cap P_\nu$ that did not lie on shortest
paths of ancestors of $\nu$.

For each leaf $\nu$ of \T, which represents some triangle $P_\nu$, we
store (a subset $S_\nu$ of) the sites in $S \cap P_\nu$ in the data
structure from Corollary~\ref{cor:neighbourhood-efficient}. We again
store only the sites that did not already appear on shortest paths
stored at ancestors of $\nu$.

Finally, we build the data structure from
Lemma~\ref{lem:improved_query_time} that allows us to efficiently
compute the sets $X_Q(s)$ on any root-to-leaf path.




\subparagraph{Space and construction time.} The space usage of the
Lemma~\ref{lem:improved_query_time} data structure is $O(nk\log
n)$. As before, the total size of all polygons $P_\nu$ over all nodes
$\nu$ in \T is $O(n\log n)$. Hence, their total contribution to the
Lemma~\ref{lem:eps_closest_via_shortest_path} data structures is
$O(nk\log n)$. Each site in $S$ lies in the interior of at most one
subpolygon $P_\nu$ per level, or on the shortest paths in $\Q_\nu$ for
at at most one node $\nu$. Therefore, the total space of all
Lemma~\ref{lem:eps_closest_via_shortest_path} data structures is thus
$O(nk\log n + mk\log n)$. As each site also appears in at most one
leaf, so the total space of the
Corollary~\ref{cor:neighbourhood-efficient} data structures is
$O(mk\log m)$. Hence, our data structure uses
$O(nk\log n + mk\log n + mk\log m)$ space.

Constructing \T and the point location structure takes $O(n\log n)$
time. Constructing the Lemma~\ref{lem:improved_query_time} data
structure takes $O(nk\log n)$ time. This is all dominated by the time
to construct the Lemma~\ref{lem:eps_closest_via_shortest_path} data
structures. As before, the total complexity of all subpolygons is
$O(n\log n)$, so the total preprocessing time is
$O(nk^2\log^2 n\log(kn))$.

\subparagraph{Answering a query.} Let $q \in P$ be a query point. Our
goal is to report a site $s \in S$ that is $\eps$-close to $q$.

We point locate to find a triangle $\Delta_q$ that contains $q$, and
compute the sets $X_Q(s)$ for all nodes on the path $\T(q)$ towards
this leaf using the Lemma~\ref{lem:improved_query_time} data
structure. For each shortest path $Q \in \Q_\nu$ of each node
$\nu \in \T(q)$ along this path, we then query the
Lemma~\ref{lem:eps_closest_via_shortest_path} data structure. Each
such query returns a site $s_{\nu,Q}$ that minimizes
$\anchorDistancePoints{q}{s_{\nu,Q}}$ among the sites in $S_{\nu,Q}$ (as well as the
actual distance estimate $\anchorDistancePoints{q}{s_{\nu,Q}}$). Finally, we query
the Corollary~\ref{cor:neighbourhood-efficient} data structure of the
leaf triangle $\Delta_q$, to compute an $\eps$-close site from the
subset of sites stored there. We report the site $s$ with minimum
distance estimate among our candidates.

Let $s^*$ be site in $S$ closest to $q$. We now argue that the site
$s$ and the distance estimate $\hat{\dist}(s,q)$ that we return indeed
satisfies
$\dist(q,s^*) \leq \dist(q,s) \leq \hat{\dist}(s,q) \leq
(1+\eps)\dist(q,s^*)$.

If $s^*$ also lies in $\Delta_q$, then
Corollary~\ref{cor:neighbourhood-efficient} guarantees that $s$ is
within distance $(1+\eps)\dist(q,s^*)$. Otherwise, the shortest path
$\geod(q,s^*)$ from $q$ to $s^*$ must intersect one of the shortest
paths $Q \in \Q_\nu$ on one of the nodes on the root-to-leaf path
$\T(q)$. In particular, consider a shortest path of the highest such
node $\nu$. It follows that $s^*$ is one of the sites in $S_{\nu,Q}$,
and thus Lemma~\ref{lem:eps_closest_via_shortest_path} guarantees that
the candidate $s_{\nu,Q}$ has the minimum $\anchorDistancePoints{q}{s_{\nu,Q}}$ value
among all sites in $S_{\nu,Q}$. Hence, our distance estimate satisfies
$
  \hat{\dist}(q,s) \leq \anchorDistancePoints{q}{s_{\nu,Q}} \leq
  \anchorDistancePoints{q}{s^*}.
$
Lemma~\ref{lem:graph_and_shortest_path_datastructure} tells us that
$\anchorDistancePoints{q}{s^*} \leq (1+\eps)\dist(q,s^*)$ as
desired. Finally, since our distance estimate $\hat{\dist}(q,s)$
corresponds to the length of an obstacle avoiding path, we also have
$\dist(s,q^*) \leq \dist(s,q) \leq \hat{\dist}(q,s)$, and thus $s$ is
indeed $\eps$-close to $q$.

Finding the triangle containing $q$, and traversing the root-to-leaf
path corresponding to this triangle takes $O(\log n)$ time. Querying
all $O(\log n)$ Lemma~\ref{lem:eps_closest_via_shortest_path} data structures along
this path would take a total of
$O(k\log^2 n + k^2\log n + k\log (mk)\log n)$ time. Finally, the query
to the Corollary~\ref{cor:neighbourhood-efficient} data structure
stored at the leaf takes $O(k\log^2 m)$ time. As before, the
$O(k\log^2 n)$ term in the query time is only due to computing the
sets $X_Q(s)$ in every node separately. By querying the
Lemma~\ref{lem:improved_query_time} data structure, we can again
reduce this term to $O(k\log n)$. This reduces the total time required
to query the Lemma~\ref{lem:eps_closest_via_shortest_path} data
structures to $\querycomplexityk$. As this term still
dominates the other terms, this is also the final query time.

\subparagraph{Updates.} We first describe the process for inserting a new
site $s$ into $S$. We point locate to find a triangle $P_\nu$ that
contains $s$. When $s$ lies in the interior of such a triangle, then
it must lie in the interior of all ancestor subpolygons $P_\mu$ as
well, so we insert $s$ into the
Corollary~\ref{cor:neighbourhood-efficient} data structure of the leaf
$\nu$, and the Lemma~\ref{lem:eps_closest_via_shortest_path} data
structures along the root-to-leaf path. If $s$ lies on an edge $e$ of
$P_\nu$ that is also an edge in a shortest path in one of the
separators, we use the pointer stored at this edge to jump to the
shortest path $Q \in \Q_{\nu'}$ of the highest such node $\nu'$, and
insert $s$ in the Lemma~\ref{lem:eps_closest_via_shortest_path} data
structures of $\nu'$ and its ancestors.

We have to insert $s$ into at most one
Corollary~\ref{cor:neighbourhood-efficient} data structure, (in
$O(k\log^2 m)$ time), and at most $O(\log n)$
Lemma~\ref{lem:eps_closest_via_shortest_path} data structures. Naively
inserting them into every such structure then gives a total insertion
time of $O(k\log^2 n + k^2\log n + k\log (mk)\log n)$. As with the
queries, we can compute the sets $X_Q(s)$ in $O(k\log n)$ time in
total using Lemma~\ref{lem:improved_query_time}. This then gives us a
total insertion time of $\updatecomplexityk$.

When we delete a site, we simply delete the site from all places where
it was inserted. This takes a total of $\deletioncomplexityk$
time. Plugging in $k=\lceil 1/\eps \rceil$ gives us our main
result.

\mainresult*

\bibliography{bibtex}

\appendix

\end{document}